\newcommand{\idx}[2][i]{\ensuremath {#1 \in \{1,\ldots,\mbox{$#2$}\}}}
\newcommand{\ua}{\ensuremath{\underline a}}
\newcommand{\ud}{\ensuremath{\underline d}}
\newcommand{\ue}{\ensuremath{\underline e}}
\newcommand{\ug}{\ensuremath{\underline g}}
\newcommand{\ui}{\ensuremath{\underline i}}
\newcommand{\uj}{\ensuremath{\underline j}}
\newcommand{\uk}{\ensuremath{\underline k}}
\newcommand{\ut}{\ensuremath{\underline t}}
\newcommand{\ux}{\ensuremath{\underline x}}
\newcommand{\ul}{\ensuremath{\underline l}}
\newcommand{\cA}{\ensuremath \mathcal A}
\newcommand{\cB}{\ensuremath \mathcal B}
\newcommand{\cM}{\ensuremath \mathcal M}
\newcommand{\cN}{\ensuremath \mathcal N}
\newcommand{\cSi}{\ensuremath \mathcal S}
\newcommand{\Cc}{\ensuremath \mathcal C}
\newcommand{\lra}{\longrightarrow}
\renewcommand{\int}{\ensuremath {\mathcal I}}
\newcommand{\mywidehat}[1]{\ensuremath{\lbrack\!\lbrack #1 \rbrack\!\rbrack}}
\theoremstyle{plain}\newtheorem{assumption}[thm]{Assumption}
\theoremstyle{plain}\newtheorem{proposition}[thm]{Proposition}
\theoremstyle{plain}\newtheorem{property}[thm]{Property}
\theoremstyle{plain}\newtheorem{example}[thm]{Example}
\theoremstyle{plain}
\theoremstyle{plain}\newtheorem{lemma}[thm]{Lemma}
\long\def\ig#1{\relax}
\newdimen\tempdimen
\newdimen\xlen
\newdimen\ylen
\newsavebox{\tempboxa}%
\newsavebox{\tempboxb}%
\newsavebox{\tempboxc}%
\def\settypes(#1,#2,#3){\arrowtypea#1 \arrowtypeb#2 \arrowtypec#3}
\def\settoheight#1#2{\setbox\@tempboxa\hbox{#2}#1\ht\@tempboxa\relax}%
\def\settodepth#1#2{\setbox\@tempboxa\hbox{#2}#1\dp\@tempboxa\relax}%
\def\settokens[#1`#2`#3`#4]{%
     \def\tokena{#1}\def\tokenb{#2}\def\tokenc{#3}\def\tokend{#4}}
\def\setsqparms[#1`#2`#3`#4;#5`#6]{%
\arrowtypea #1
\arrowtypeb #2
\arrowtypec #3
\arrowtyped #4
\width #5
\height #6
}
\def\setpos(#1,#2){\xpos=#1 \ypos#2}
\def\bfig{\begin{picture}(\xext,\yext)(\xoff,\yoff)}
\def\efig{\end{picture}}
\def\putbox(#1,#2)#3{\put(#1,#2){\makebox(0,0){$#3$}}}
\def\settriparms[#1`#2`#3;#4]{\settripairparms[#1`#2`#3`1`1;#4]}%
\def\settripairparms[#1`#2`#3`#4`#5;#6]{%
\arrowtypea #1
\arrowtypeb #2
\arrowtypec #3
\arrowtyped #4
\arrowtypee #5
\width #6
\height #6
}
\def\resetparms{\settripairparms[1`1`1`1`1;500]\width 500}
\def\mvector(#1,#2)#3{
\put(0,0){\vector(#1,#2){#3}}%
\put(0,0){\vector(#1,#2){30}}%
}
\def\evector(#1,#2)#3{{
\arrowlength #3
\put(0,0){\vector(#1,#2){\arrowlength}}%
\advance \arrowlength by-30
\put(0,0){\vector(#1,#2){\arrowlength}}%
}}
\def\horsize#1#2{%
\settowidth{\tempdimen}{$#2$}%
#1=\tempdimen
\divide #1 by\unitlength
}
\def\vertsize#1#2{%
\settoheight{\tempdimen}{$#2$}%
#1=\tempdimen
\settodepth{\tempdimen}{$#2$}%
\advance #1 by\tempdimen
\divide #1 by\unitlength
}
\def\vertadjust[#1`#2`#3]{%
\vertsize{\tempcounta}{#1}%
\vertsize{\tempcountb}{#2}%
\ifnum \tempcounta<\tempcountb \tempcounta=\tempcountb \fi
\divide\tempcounta by2
\vertsize{\tempcountb}{#3}%
\ifnum \tempcountb>0 \advance \tempcountb by20 \fi
\ifnum \tempcounta<\tempcountb \tempcounta=\tempcountb \fi
}
\def\horadjust[#1`#2`#3]{%
\horsize{\tempcounta}{#1}%
\horsize{\tempcountb}{#2}%
\ifnum \tempcounta<\tempcountb \tempcounta=\tempcountb \fi
\divide\tempcounta by20
\horsize{\tempcountb}{#3}%
\ifnum \tempcountb>0 \advance \tempcountb by60 \fi
\ifnum \tempcounta<\tempcountb \tempcounta=\tempcountb \fi
}
\def\sladjust[#1`#2`#3]#4{%
\tempcountc=#4
\horsize{\tempcounta}{#1}%
\divide \tempcounta by2
\horsize{\tempcountb}{#2}%
\divide \tempcountb by2
\advance \tempcountb by-\tempcountc
\ifnum \tempcounta<\tempcountb \tempcounta=\tempcountb\fi
\divide \tempcountc by2
\horsize{\tempcountb}{#3}%
\advance \tempcountb by-\tempcountc
\ifnum \tempcountb>0 \advance \tempcountb by80\fi
\ifnum \tempcounta<\tempcountb \tempcounta=\tempcountb\fi
\advance\tempcounta by20
}
\def\putvector(#1,#2)(#3,#4)#5#6{{%
\xpos=#1
\ypos=#2
\run=#3
\rise=#4
\arrowlength=#5
\arrowtype=#6
\ifnum \arrowtype<0
    \ifnum \run=0
        \advance \ypos by-\arrowlength
    \else
        \tempcounta \arrowlength
        \multiply \tempcounta by\rise
        \divide \tempcounta by\run
        \ifnum\run>0
            \advance \xpos by\arrowlength
            \advance \ypos by\tempcounta
        \else
            \advance \xpos by-\arrowlength
            \advance \ypos by-\tempcounta
        \fi
    \fi
    \multiply \arrowtype by-1
    \multiply \rise by-1
    \multiply \run by-1
\fi
\ifnum \arrowtype=1
    \put(\xpos,\ypos){\vector(\run,\rise){\arrowlength}}%
\else\ifnum \arrowtype=2
    \put(\xpos,\ypos){\mvector(\run,\rise)\arrowlength}%
\else\ifnum\arrowtype=3
    \put(\xpos,\ypos){\evector(\run,\rise){\arrowlength}}%
\fi\fi\fi
}}
\def\putsplitvector(#1,#2)#3#4{
\xpos #1
\ypos #2
\arrowtype #4
\halflength #3
\arrowlength #3
\gap 140
\advance \halflength by-\gap
\divide \halflength by2
\ifnum \arrowtype=1
    \put(\xpos,\ypos){\line(0,-1){\halflength}}%
    \advance\ypos by-\halflength
    \advance\ypos by-\gap
    \put(\xpos,\ypos){\vector(0,-1){\halflength}}%
\else\ifnum \arrowtype=2
    \put(\xpos,\ypos){\line(0,-1)\halflength}%
    \put(\xpos,\ypos){\vector(0,-1)3}%
    \advance\ypos by-\halflength
    \advance\ypos by-\gap
    \put(\xpos,\ypos){\vector(0,-1){\halflength}}%
\else\ifnum\arrowtype=3
    \put(\xpos,\ypos){\line(0,-1)\halflength}%
    \advance\ypos by-\halflength
    \advance\ypos by-\gap
    \put(\xpos,\ypos){\evector(0,-1){\halflength}}%
\else\ifnum \arrowtype=-1
    \advance \ypos by-\arrowlength
    \put(\xpos,\ypos){\line(0,1){\halflength}}%
    \advance\ypos by\halflength
    \advance\ypos by\gap
    \put(\xpos,\ypos){\vector(0,1){\halflength}}%
\else\ifnum \arrowtype=-2
    \advance \ypos by-\arrowlength
    \put(\xpos,\ypos){\line(0,1)\halflength}%
    \put(\xpos,\ypos){\vector(0,1)3}%
    \advance\ypos by\halflength
    \advance\ypos by\gap
    \put(\xpos,\ypos){\vector(0,1){\halflength}}%
\else\ifnum\arrowtype=-3
    \advance \ypos by-\arrowlength
    \put(\xpos,\ypos){\line(0,1)\halflength}%
    \advance\ypos by\halflength
    \advance\ypos by\gap
    \put(\xpos,\ypos){\evector(0,1){\halflength}}%
\fi\fi\fi\fi\fi\fi
}
\def\putmorphism(#1)(#2,#3)[#4`#5`#6]#7#8#9{{%
\run #2
\rise #3
\ifnum\rise=0
  \puthmorphism(#1)[#4`#5`#6]{#7}{#8}{#9}%
\else\ifnum\run=0
  \putvmorphism(#1)[#4`#5`#6]{#7}{#8}{#9}%
\else
\setpos(#1)%
\arrowlength #7
\arrowtype #8
\ifnum\run=0
\else\ifnum\rise=0
\else
\ifnum\run>0
    \coefa=1
\else
   \coefa=-1
\fi
\ifnum\arrowtype>0
   \coefb=0
   \coefc=-1
\else
   \coefb=\coefa
   \coefc=1
   \arrowtype=-\arrowtype
\fi
\width=2
\multiply \width by\run
\divide \width by\rise
\ifnum \width<0  \width=-\width\fi
\advance\width by60
\if l#9 \width=-\width\fi
\putbox(\xpos,\ypos){#4}
{\multiply \coefa by\arrowlength
\advance\xpos by\coefa
\multiply \coefa by\rise
\divide \coefa by\run
\advance \ypos by\coefa
\putbox(\xpos,\ypos){#5} }%
{\multiply \coefa by\arrowlength
\divide \coefa by2
\advance \xpos by\coefa
\advance \xpos by\width
\multiply \coefa by\rise
\divide \coefa by\run
\advance \ypos by\coefa
\if l#9%
   \put(\xpos,\ypos){\makebox(0,0)[r]{$#6$}}%
\else\if r#9%
   \put(\xpos,\ypos){\makebox(0,0)[l]{$#6$}}%
\fi\fi }%
{\multiply \rise by-\coefc
\multiply \run by-\coefc
\multiply \coefb by\arrowlength
\advance \xpos by\coefb
\multiply \coefb by\rise
\divide \coefb by\run
\advance \ypos by\coefb
\multiply \coefc by70
\advance \ypos by\coefc
\multiply \coefc by\run
\divide \coefc by\rise
\advance \xpos by\coefc
\multiply \coefa by140
\multiply \coefa by\run
\divide \coefa by\rise
\advance \arrowlength by\coefa
\ifnum \arrowtype=1
   \put(\xpos,\ypos){\vector(\run,\rise){\arrowlength}}%
\else\ifnum\arrowtype=2
   \put(\xpos,\ypos){\mvector(\run,\rise){\arrowlength}}%
\else\ifnum\arrowtype=3
   \put(\xpos,\ypos){\evector(\run,\rise){\arrowlength}}%
\fi\fi\fi}\fi\fi\fi\fi}}
\def\puthmorphism(#1,#2)[#3`#4`#5]#6#7#8{{%
\xpos #1
\ypos #2
\width #6
\arrowlength #6
\putbox(\xpos,\ypos){#3\vphantom{#4}}%
{\advance \xpos by\arrowlength
\putbox(\xpos,\ypos){\vphantom{#3}#4}}%
\horsize{\tempcounta}{#3}%
\horsize{\tempcountb}{#4}%
\divide \tempcounta by2
\divide \tempcountb by2
\advance \tempcounta by30
\advance \tempcountb by30
\advance \xpos by\tempcounta
\advance \arrowlength by-\tempcounta
\advance \arrowlength by-\tempcountb
\putvector(\xpos,\ypos)(1,0){\arrowlength}{#7}%
\divide \arrowlength by2
\advance \xpos by\arrowlength
\vertsize{\tempcounta}{#5}%
\divide\tempcounta by2
\advance \tempcounta by20
\if a#8 %
   \advance \ypos by\tempcounta
   \putbox(\xpos,\ypos){#5}%
\else
   \advance \ypos by-\tempcounta
   \putbox(\xpos,\ypos){#5}%
\fi}}
\def\putvmorphism(#1,#2)[#3`#4`#5]#6#7#8{{%
\xpos #1
\ypos #2
\arrowlength #6
\arrowtype #7
\settowidth{\xlen}{$#5$}%
\putbox(\xpos,\ypos){#3}%
{\advance \ypos by-\arrowlength
\putbox(\xpos,\ypos){#4}}%
{\advance\arrowlength by-140
\advance \ypos by-70
\ifdim\xlen>0pt
   \if m#8%
      \putsplitvector(\xpos,\ypos){\arrowlength}{\arrowtype}%
   \else
      \putvector(\xpos,\ypos)(0,-1){\arrowlength}{\arrowtype}%
   \fi
\else
   \putvector(\xpos,\ypos)(0,-1){\arrowlength}{\arrowtype}%
\fi}%
\ifdim\xlen>0pt
   \divide \arrowlength by2
   \advance\ypos by-\arrowlength
   \if l#8%
      \advance \xpos by-40
      \put(\xpos,\ypos){\makebox(0,0)[r]{$#5$}}%
   \else\if r#8%
      \advance \xpos by40
      \put(\xpos,\ypos){\makebox(0,0)[l]{$#5$}}%
   \else
      \putbox(\xpos,\ypos){#5}%
   \fi\fi
\fi
}}
\def\topadjust[#1`#2`#3]{%
\yoff=10
\vertadjust[#1`#2`{#3}]%
\advance \yext by\tempcounta
\advance \yext by 10
}
\def\botadjust[#1`#2`#3]{%
\vertadjust[#1`#2`{#3}]%
\advance \yext by\tempcounta
\advance \yoff by-\tempcounta
}
\def\leftadjust[#1`#2`#3]{%
\xoff=0
\horadjust[#1`#2`{#3}]%
\advance \xext by\tempcounta
\advance \xoff by-\tempcounta
}
\def\rightadjust[#1`#2`#3]{%
\horadjust[#1`#2`{#3}]%
\advance \xext by\tempcounta
}
\def\rightsladjust[#1`#2`#3]{%
\sladjust[#1`#2`{#3}]{\width}%
\advance \xext by\tempcounta
}
\def\leftsladjust[#1`#2`#3]{%
\xoff=0
\sladjust[#1`#2`{#3}]{\width}%
\advance \xext by\tempcounta
\advance \xoff by-\tempcounta
}
\def\adjust[#1`#2;#3`#4;#5`#6;#7`#8]{%
\topadjust[#1``{#2}]
\leftadjust[#3``{#4}]
\rightadjust[#5``{#6}]
\botadjust[#7``{#8}]}
\def\putsquarep<#1>(#2)[#3;#4`#5`#6`#7]{{%
\setsqparms[#1]%
\setpos(#2)%
\settokens[#3]%
\puthmorphism(\xpos,\ypos)[\tokenc`\tokend`{#7}]{\width}{\arrowtyped}b%
\advance\ypos by \height
\puthmorphism(\xpos,\ypos)[\tokena`\tokenb`{#4}]{\width}{\arrowtypea}a%
\putvmorphism(\xpos,\ypos)[``{#5}]{\height}{\arrowtypeb}l%
\advance\xpos by \width
\putvmorphism(\xpos,\ypos)[``{#6}]{\height}{\arrowtypec}r%
}}
\def\putsquare{\@ifnextchar <{\putsquarep}{\putsquarep%
   <\arrowtypea`\arrowtypeb`\arrowtypec`\arrowtyped;\width`\height>}}
\def\square{\@ifnextchar< {\squarep}{\squarep
   <\arrowtypea`\arrowtypeb`\arrowtypec`\arrowtyped;\width`\height>}}
\def\squarep<#1>[#2`#3`#4`#5;#6`#7`#8`#9]{{
\setsqparms[#1]
\xext=\width                                          
\yext=\height                                         
\topadjust[#2`#3`{#6}]
\botadjust[#4`#5`{#9}]
\leftadjust[#2`#4`{#7}]
\rightadjust[#3`#5`{#8}]
\begin{picture}(\xext,\yext)(\xoff,\yoff)
\putsquarep<\arrowtypea`\arrowtypeb`\arrowtypec`\arrowtyped;\width`\height>%
(0,0)[#2`#3`#4`#5;#6`#7`#8`{#9}]%
\end{picture}%
}}
\def\putptrianglep<#1>(#2,#3)[#4`#5`#6;#7`#8`#9]{{%
\settriparms[#1]%
\xpos=#2 \ypos=#3
\advance\ypos by \height
\puthmorphism(\xpos,\ypos)[#4`#5`{#7}]{\height}{\arrowtypea}a%
\putvmorphism(\xpos,\ypos)[`#6`{#8}]{\height}{\arrowtypeb}l%
\advance\xpos by\height
\putmorphism(\xpos,\ypos)(-1,-1)[``{#9}]{\height}{\arrowtypec}r%
}}
\def\putptriangle{\@ifnextchar <{\putptrianglep}{\putptrianglep
   <\arrowtypea`\arrowtypeb`\arrowtypec;\height>}}
\def\ptriangle{\@ifnextchar <{\ptrianglep}{\ptrianglep
   <\arrowtypea`\arrowtypeb`\arrowtypec;\height>}}
\def\ptrianglep<#1>[#2`#3`#4;#5`#6`#7]{{
\settriparms[#1]%
\width=\height                         
\xext=\width                           
\yext=\width                           
\topadjust[#2`#3`{#5}]
\botadjust[#3``]
\leftadjust[#2`#4`{#6}]
\rightsladjust[#3`#4`{#7}]
\begin{picture}(\xext,\yext)(\xoff,\yoff)
\putptrianglep<\arrowtypea`\arrowtypeb`\arrowtypec;\height>%
(0,0)[#2`#3`#4;#5`#6`{#7}]%
\end{picture}%
}}
\def\putqtrianglep<#1>(#2,#3)[#4`#5`#6;#7`#8`#9]{{%
\settriparms[#1]%
\xpos=#2 \ypos=#3
\advance\ypos by\height
\puthmorphism(\xpos,\ypos)[#4`#5`{#7}]{\height}{\arrowtypea}a%
\putmorphism(\xpos,\ypos)(1,-1)[``{#8}]{\height}{\arrowtypeb}l%
\advance\xpos by\height
\putvmorphism(\xpos,\ypos)[`#6`{#9}]{\height}{\arrowtypec}r%
}}
\def\putqtriangle{\@ifnextchar <{\putqtrianglep}{\putqtrianglep
   <\arrowtypea`\arrowtypeb`\arrowtypec;\height>}}
\def\qtriangle{\@ifnextchar <{\qtrianglep}{\qtrianglep
   <\arrowtypea`\arrowtypeb`\arrowtypec;\height>}}
\def\qtrianglep<#1>[#2`#3`#4;#5`#6`#7]{{
\settriparms[#1]
\width=\height                         
\xext=\width                           
\yext=\height                          
\topadjust[#2`#3`{#5}]
\botadjust[#4``]
\leftsladjust[#2`#4`{#6}]
\rightadjust[#3`#4`{#7}]
\begin{picture}(\xext,\yext)(\xoff,\yoff)
\putqtrianglep<\arrowtypea`\arrowtypeb`\arrowtypec;\height>%
(0,0)[#2`#3`#4;#5`#6`{#7}]%
\end{picture}%
}}
\def\putdtrianglep<#1>(#2,#3)[#4`#5`#6;#7`#8`#9]{{%
\settriparms[#1]%
\xpos=#2 \ypos=#3
\puthmorphism(\xpos,\ypos)[#5`#6`{#9}]{\height}{\arrowtypec}b%
\advance\xpos by \height \advance\ypos by\height
\putmorphism(\xpos,\ypos)(-1,-1)[``{#7}]{\height}{\arrowtypea}l%
\putvmorphism(\xpos,\ypos)[#4``{#8}]{\height}{\arrowtypeb}r%
}}
\def\putdtriangle{\@ifnextchar <{\putdtrianglep}{\putdtrianglep
   <\arrowtypea`\arrowtypeb`\arrowtypec;\height>}}
\def\dtriangle{\@ifnextchar <{\dtrianglep}{\dtrianglep
   <\arrowtypea`\arrowtypeb`\arrowtypec;\height>}}
\def\dtrianglep<#1>[#2`#3`#4;#5`#6`#7]{{
\settriparms[#1]
\width=\height                         
\xext=\width                           
\yext=\height                          
\topadjust[#2``]
\botadjust[#3`#4`{#7}]
\leftsladjust[#3`#2`{#5}]
\rightadjust[#2`#4`{#6}]
\begin{picture}(\xext,\yext)(\xoff,\yoff)
\putdtrianglep<\arrowtypea`\arrowtypeb`\arrowtypec;\height>%
(0,0)[#2`#3`#4;#5`#6`{#7}]%
\end{picture}%
}}
\def\putbtrianglep<#1>(#2,#3)[#4`#5`#6;#7`#8`#9]{{%
\settriparms[#1]%
\xpos=#2 \ypos=#3
\puthmorphism(\xpos,\ypos)[#5`#6`{#9}]{\height}{\arrowtypec}b%
\advance\ypos by\height
\putmorphism(\xpos,\ypos)(1,-1)[``{#8}]{\height}{\arrowtypeb}r%
\putvmorphism(\xpos,\ypos)[#4``{#7}]{\height}{\arrowtypea}l%
}}
\def\putbtriangle{\@ifnextchar <{\putbtrianglep}{\putbtrianglep
   <\arrowtypea`\arrowtypeb`\arrowtypec;\height>}}
\def\btriangle{\@ifnextchar <{\btrianglep}{\btrianglep
   <\arrowtypea`\arrowtypeb`\arrowtypec;\height>}}
\def\btrianglep<#1>[#2`#3`#4;#5`#6`#7]{{
\settriparms[#1]
\width=\height                         
\xext=\width                           
\yext=\height                          
\topadjust[#2``]
\botadjust[#3`#4`{#7}]
\leftadjust[#2`#3`{#5}]
\rightsladjust[#4`#2`{#6}]
\begin{picture}(\xext,\yext)(\xoff,\yoff)
\putbtrianglep<\arrowtypea`\arrowtypeb`\arrowtypec;\height>%
(0,0)[#2`#3`#4;#5`#6`{#7}]%
\end{picture}%
}}
\def\putAtrianglep<#1>(#2,#3)[#4`#5`#6;#7`#8`#9]{{%
\settriparms[#1]%
\xpos=#2 \ypos=#3
{\multiply \height by2
\puthmorphism(\xpos,\ypos)[#5`#6`{#9}]{\height}{\arrowtypec}b}%
\advance\xpos by\height \advance\ypos by\height
\putmorphism(\xpos,\ypos)(-1,-1)[#4``{#7}]{\height}{\arrowtypea}l%
\putmorphism(\xpos,\ypos)(1,-1)[``{#8}]{\height}{\arrowtypeb}r%
}}
\def\putAtriangle{\@ifnextchar <{\putAtrianglep}{\putAtrianglep
   <\arrowtypea`\arrowtypeb`\arrowtypec;\height>}}
\def\Atriangle{\@ifnextchar <{\Atrianglep}{\Atrianglep
   <\arrowtypea`\arrowtypeb`\arrowtypec;\height>}}
\def\Atrianglep<#1>[#2`#3`#4;#5`#6`#7]{{
\settriparms[#1]
\width=\height                         
\xext=\width                           
\yext=\height                          
\topadjust[#2``]
\botadjust[#3`#4`{#7}]
\multiply \xext by2 
\leftsladjust[#3`#2`{#5}]
\rightsladjust[#4`#2`{#6}]
\begin{picture}(\xext,\yext)(\xoff,\yoff)%
\putAtrianglep<\arrowtypea`\arrowtypeb`\arrowtypec;\height>%
(0,0)[#2`#3`#4;#5`#6`{#7}]%
\end{picture}%
}}
\def\putAtrianglepairp<#1>(#2)[#3;#4`#5`#6`#7`#8]{{
\settripairparms[#1]%
\setpos(#2)%
\settokens[#3]%
\puthmorphism(\xpos,\ypos)[\tokenb`\tokenc`{#7}]{\height}{\arrowtyped}b%
\advance\xpos by\height
\advance\ypos by\height
\putmorphism(\xpos,\ypos)(-1,-1)[\tokena``{#4}]{\height}{\arrowtypea}l%
\putvmorphism(\xpos,\ypos)[``{#5}]{\height}{\arrowtypeb}m%
\putmorphism(\xpos,\ypos)(1,-1)[``{#6}]{\height}{\arrowtypec}r%
}}
\def\putAtrianglepair{\@ifnextchar <{\putAtrianglepairp}{\putAtrianglepairp%
   <\arrowtypea`\arrowtypeb`\arrowtypec`\arrowtyped`\arrowtypee;\height>}}
\def\Atrianglepair{\@ifnextchar <{\Atrianglepairp}{\Atrianglepairp%
   <\arrowtypea`\arrowtypeb`\arrowtypec`\arrowtyped`\arrowtypee;\height>}}
\def\Atrianglepairp<#1>[#2;#3`#4`#5`#6`#7]{{%
\settripairparms[#1]%
\settokens[#2]%
\width=\height
\xext=\width
\yext=\height
\topadjust[\tokena``]%
\vertadjust[\tokenb`\tokenc`{#6}]
\tempcountd=\tempcounta                       
\vertadjust[\tokenc`\tokend`{#7}]
\ifnum\tempcounta<\tempcountd                 
\tempcounta=\tempcountd\fi                    
\advance \yext by\tempcounta                  
\advance \yoff by-\tempcounta                 %
\multiply \xext by2 
\leftsladjust[\tokenb`\tokena`{#3}]
\rightsladjust[\tokend`\tokena`{#5}]%
\begin{picture}(\xext,\yext)(\xoff,\yoff)%
\putAtrianglepairp
<\arrowtypea`\arrowtypeb`\arrowtypec`\arrowtyped`\arrowtypee;\height>%
(0,0)[#2;#3`#4`#5`#6`{#7}]%
\end{picture}%
}}
\def\putVtrianglep<#1>(#2,#3)[#4`#5`#6;#7`#8`#9]{{%
\settriparms[#1]%
\xpos=#2 \ypos=#3
\advance\ypos by\height
{\multiply\height by2
\puthmorphism(\xpos,\ypos)[#4`#5`{#7}]{\height}{\arrowtypea}a}%
\putmorphism(\xpos,\ypos)(1,-1)[`#6`{#8}]{\height}{\arrowtypeb}l%
\advance\xpos by\height
\advance\xpos by\height
\putmorphism(\xpos,\ypos)(-1,-1)[``{#9}]{\height}{\arrowtypec}r%
}}
\def\putVtriangle{\@ifnextchar <{\putVtrianglep}{\putVtrianglep
   <\arrowtypea`\arrowtypeb`\arrowtypec;\height>}}
\def\Vtriangle{\@ifnextchar <{\Vtrianglep}{\Vtrianglep
   <\arrowtypea`\arrowtypeb`\arrowtypec;\height>}}
\def\Vtrianglep<#1>[#2`#3`#4;#5`#6`#7]{{
\settriparms[#1]
\width=\height                         
\xext=\width                           
\yext=\height                          
\topadjust[#2`#3`{#5}]
\botadjust[#4``]
\multiply \xext by2 
\leftsladjust[#2`#3`{#6}]
\rightsladjust[#3`#4`{#7}]
\begin{picture}(\xext,\yext)(\xoff,\yoff)%
\putVtrianglep<\arrowtypea`\arrowtypeb`\arrowtypec;\height>%
(0,0)[#2`#3`#4;#5`#6`{#7}]%
\end{picture}%
}}
\def\putVtrianglepairp<#1>(#2)[#3;#4`#5`#6`#7`#8]{{
\settripairparms[#1]%
\setpos(#2)%
\settokens[#3]%
\advance\ypos by\height
\putmorphism(\xpos,\ypos)(1,-1)[`\tokend`{#6}]{\height}{\arrowtypec}l%
\puthmorphism(\xpos,\ypos)[\tokena`\tokenb`{#4}]{\height}{\arrowtypea}a%
\advance\xpos by\height
\putvmorphism(\xpos,\ypos)[``{#7}]{\height}{\arrowtyped}m%
\advance\xpos by\height
\putmorphism(\xpos,\ypos)(-1,-1)[``{#8}]{\height}{\arrowtypee}r%
}}
\def\putVtrianglepair{\@ifnextchar <{\putVtrianglepairp}{\putVtrianglepairp%
    <\arrowtypea`\arrowtypeb`\arrowtypec`\arrowtyped`\arrowtypee;\height>}}
\def\Vtrianglepair{\@ifnextchar <{\Vtrianglepairp}{\Vtrianglepairp%
    <\arrowtypea`\arrowtypeb`\arrowtypec`\arrowtyped`\arrowtypee;\height>}}
\def\Vtrianglepairp<#1>[#2;#3`#4`#5`#6`#7]{{%
\settripairparms[#1]%
\settokens[#2]
\xext=\height                  
\width=\height                 
\yext=\height                  
\vertadjust[\tokena`\tokenb`{#4}]
\tempcountd=\tempcounta        
\vertadjust[\tokenb`\tokenc`{#5}]
\ifnum\tempcounta<\tempcountd%
\tempcounta=\tempcountd\fi
\advance \yext by\tempcounta
\botadjust[\tokend``]%
\multiply \xext by2
\leftsladjust[\tokena`\tokend`{#6}]%
\rightsladjust[\tokenc`\tokend`{#7}]%
\begin{picture}(\xext,\yext)(\xoff,\yoff)%
\putVtrianglepairp
<\arrowtypea`\arrowtypeb`\arrowtypec`\arrowtyped`\arrowtypee;\height>%
(0,0)[#2;#3`#4`#5`#6`{#7}]%
\end{picture}%
}}
\def\putCtrianglep<#1>(#2,#3)[#4`#5`#6;#7`#8`#9]{{%
\settriparms[#1]%
\xpos=#2 \ypos=#3
\advance\ypos by\height
\putmorphism(\xpos,\ypos)(1,-1)[``{#9}]{\height}{\arrowtypec}l%
\advance\xpos by\height
\advance\ypos by\height
\putmorphism(\xpos,\ypos)(-1,-1)[#4`#5`{#7}]{\height}{\arrowtypea}l%
{\multiply\height by 2
\putvmorphism(\xpos,\ypos)[`#6`{#8}]{\height}{\arrowtypeb}r}%
}}
\def\putCtriangle{\@ifnextchar <{\putCtrianglep}{\putCtrianglep
    <\arrowtypea`\arrowtypeb`\arrowtypec;\height>}}
\def\Ctriangle{\@ifnextchar <{\Ctrianglep}{\Ctrianglep
    <\arrowtypea`\arrowtypeb`\arrowtypec;\height>}}
\def\Ctrianglep<#1>[#2`#3`#4;#5`#6`#7]{{
\settriparms[#1]
\width=\height                          
\xext=\width                            
\yext=\height                           
\multiply \yext by2 
\topadjust[#2``]
\botadjust[#4``]
\sladjust[#3`#2`{#5}]{\width}
\tempcountd=\tempcounta                 
\sladjust[#3`#4`{#7}]{\width}
\ifnum \tempcounta<\tempcountd          
\tempcounta=\tempcountd\fi              
\advance \xext by\tempcounta            
\advance \xoff by-\tempcounta           %
\rightadjust[#2`#4`{#6}]
\begin{picture}(\xext,\yext)(\xoff,\yoff)%
\putCtrianglep<\arrowtypea`\arrowtypeb`\arrowtypec;\height>%
(0,0)[#2`#3`#4;#5`#6`{#7}]%
\end{picture}%
}}
\def\putDtrianglep<#1>(#2,#3)[#4`#5`#6;#7`#8`#9]{{%
\settriparms[#1]%
\xpos=#2 \ypos=#3
\advance\xpos by\height \advance\ypos by\height
\putmorphism(\xpos,\ypos)(-1,-1)[``{#9}]{\height}{\arrowtypec}r%
\advance\xpos by-\height \advance\ypos by\height
\putmorphism(\xpos,\ypos)(1,-1)[`#5`{#8}]{\height}{\arrowtypeb}r%
{\multiply\height by 2
\putvmorphism(\xpos,\ypos)[#4`#6`{#7}]{\height}{\arrowtypea}l}%
}}
\def\putDtriangle{\@ifnextchar <{\putDtrianglep}{\putDtrianglep
    <\arrowtypea`\arrowtypeb`\arrowtypec;\height>}}
\def\Dtriangle{\@ifnextchar <{\Dtrianglep}{\Dtrianglep
   <\arrowtypea`\arrowtypeb`\arrowtypec;\height>}}
\def\Dtrianglep<#1>[#2`#3`#4;#5`#6`#7]{{
\settriparms[#1]
\width=\height                         
\xext=\height                          
\yext=\height                          
\multiply \yext by2 
\topadjust[#2``]
\botadjust[#4``]
\leftadjust[#2`#4`{#5}]
\sladjust[#3`#2`{#5}]{\height}
\tempcountd=\tempcountd                
\sladjust[#3`#4`{#7}]{\height}
\ifnum \tempcounta<\tempcountd         
\tempcounta=\tempcountd\fi             
\advance \xext by\tempcounta           %
\begin{picture}(\xext,\yext)(\xoff,\yoff)
\putDtrianglep<\arrowtypea`\arrowtypeb`\arrowtypec;\height>%
(0,0)[#2`#3`#4;#5`#6`{#7}]%
\end{picture}%
}}
\def\setrecparms[#1`#2]{\width=#1 \height=#2}%
\def\recursep<#1`#2>[#3;#4`#5`#6`#7`#8]{{%
\width=#1 \height=#2
\settokens[#3]
\settowidth{\tempdimen}{$\tokena$}
\ifdim\tempdimen=0pt
  \savebox{\tempboxa}{\hbox{$\tokenb$}}%
  \savebox{\tempboxb}{\hbox{$\tokend$}}%
  \savebox{\tempboxc}{\hbox{$#6$}}%
\else
  \savebox{\tempboxa}{\hbox{$\hbox{$\tokena$}\times\hbox{$\tokenb$}$}}%
  \savebox{\tempboxb}{\hbox{$\hbox{$\tokena$}\times\hbox{$\tokend$}$}}%
  \savebox{\tempboxc}{\hbox{$\hbox{$\tokena$}\times\hbox{$#6$}$}}%
\fi
\ypos=\height
\divide\ypos by 2
\xpos=\ypos
\advance\xpos by \width
\xext=\xpos \yext=\height
\topadjust[#3`\usebox{\tempboxa}`{#4}]%
\botadjust[#5`\usebox{\tempboxb}`{#8}]%
\sladjust[\tokenc`\tokenb`{#5}]{\ypos}%
\tempcountd=\tempcounta
\sladjust[\tokenc`\tokend`{#5}]{\ypos}%
\ifnum \tempcounta<\tempcountd
\tempcounta=\tempcountd\fi
\advance \xext by\tempcounta
\advance \xoff by-\tempcounta
\rightadjust[\usebox{\tempboxa}`\usebox{\tempboxb}`\usebox{\tempboxc}]%
\bfig
\putCtrianglep<-1`1`1;\ypos>(0,0)[`\tokenc`;#5`#6`{#7}]%
\puthmorphism(\ypos,0)[\tokend`\usebox{\tempboxb}`{#8}]{\width}{-1}b%
\puthmorphism(\ypos,\height)[\tokenb`\usebox{\tempboxa}`{#4}]{\width}{-1}a%
\advance\ypos by \width
\putvmorphism(\ypos,\height)[``\usebox{\tempboxc}]{\height}1r%
\efig
}}
\def\recurse{\@ifnextchar <{\recursep}{\recursep<\width`\height>}}
\def\puttwohmorphisms(#1,#2)[#3`#4;#5`#6]#7#8#9{{%
%
\puthmorphism(#1,#2)[#3`#4`]{#7}0a
\ypos=#2
\advance\ypos by 20
\puthmorphism(#1,\ypos)[\phantom{#3}`\phantom{#4}`#5]{#7}{#8}a
\advance\ypos by -40
\puthmorphism(#1,\ypos)[\phantom{#3}`\phantom{#4}`#6]{#7}{#9}b
}}
\def\puttwovmorphisms(#1,#2)[#3`#4;#5`#6]#7#8#9{{%
%
%
%
\putvmorphism(#1,#2)[#3`#4`]{#7}0a
\xpos=#1
\advance\xpos by -20
\putvmorphism(\xpos,#2)[\phantom{#3}`\phantom{#4}`#5]{#7}{#8}l
\advance\xpos by 40
\putvmorphism(\xpos,#2)[\phantom{#3}`\phantom{#4}`#6]{#7}{#9}r
}}
\def\puthcoequalizer(#1)[#2`#3`#4;#5`#6`#7]#8#9{{%
%
\setpos(#1)%
\puttwohmorphisms(\xpos,\ypos)[#2`#3;#5`#6]{#8}11%
\advance\xpos by #8
\puthmorphism(\xpos,\ypos)[\phantom{#3}`#4`#7]{#8}1{#9}
}}
\def\putvcoequalizer(#1)[#2`#3`#4;#5`#6`#7]#8#9{{%
%
%
%
%
\setpos(#1)%
\puttwovmorphisms(\xpos,\ypos)[#2`#3;#5`#6]{#8}11%
\advance\ypos by -#8
\putvmorphism(\xpos,\ypos)[\phantom{#3}`#4`#7]{#8}1{#9}
}}
\def\putthreehmorphisms(#1)[#2`#3;#4`#5`#6]#7(#8)#9{{%
\setpos(#1) \settypes(#8)
\if a#9 %
     \vertsize{\tempcounta}{#5}%
     \vertsize{\tempcountb}{#6}%
     \ifnum \tempcounta<\tempcountb \tempcounta=\tempcountb \fi
\else
     \vertsize{\tempcounta}{#4}%
     \vertsize{\tempcountb}{#5}%
     \ifnum \tempcounta<\tempcountb \tempcounta=\tempcountb \fi
\fi
\advance \tempcounta by 60
\puthmorphism(\xpos,\ypos)[#2`#3`#5]{#7}{\arrowtypeb}{#9}
\advance\ypos by \tempcounta
\puthmorphism(\xpos,\ypos)[\phantom{#2}`\phantom{#3}`#4]{#7}{\arrowtypea}{#9}
\advance\ypos by -\tempcounta \advance\ypos by -\tempcounta
\puthmorphism(\xpos,\ypos)[\phantom{#2}`\phantom{#3}`#6]{#7}{\arrowtypec}{#9}
}}
\def\putarc(#1,#2)[#3`#4`#5]#6#7#8{{%
\xpos #1
\ypos #2
\width #6
\arrowlength #6
\putbox(\xpos,\ypos){#3\vphantom{#4}}%
{\advance \xpos by\arrowlength
\putbox(\xpos,\ypos){\vphantom{#3}#4}}%
\horsize{\tempcounta}{#3}%
\horsize{\tempcountb}{#4}%
\divide \tempcounta by2
\divide \tempcountb by2
\advance \tempcounta by30
\advance \tempcountb by30
\advance \xpos by\tempcounta
\advance \arrowlength by-\tempcounta
\advance \arrowlength by-\tempcountb
\halflength=\arrowlength \divide\halflength by 2
\divide\arrowlength by 5
\put(\xpos,\ypos){\bezier{\arrowlength}(0,0)(50,50)(\halflength,50)}
\ifnum #7=-1 \put(\xpos,\ypos){\vector(-3,-2)0} \fi
\advance\xpos by \halflength
\put(\xpos,\ypos){\xpos=\halflength \advance\xpos by -50
   \bezier{\arrowlength}(0,50)(\xpos,50)(\halflength,0)}
\ifnum #7=1 {\advance \xpos by
   \halflength \put(\xpos,\ypos){\vector(3,-2)0}} \fi
\advance\ypos by 50
\vertsize{\tempcounta}{#5}%
\divide\tempcounta by2
\advance \tempcounta by20
\if a#8 %
   \advance \ypos by\tempcounta
   \putbox(\xpos,\ypos){#5}%
\else
   \advance \ypos by-\tempcounta
   \putbox(\xpos,\ypos){#5}%
\fi
}}
\def\doi{6 (4:10) 2010}
\begin{document}


\title[Backward Reachability of Array-based Systems by SMT solving]{%
  Backward Reachability of Array-based Systems by SMT solving:
  Termination and Invariant Synthesis}

\author[S.\ Ghilardi]{Silvio Ghilardi$^1$}	
\address{$^1$Dipartimento di Scienze dell'Informazione, Universit\`a degli Studi di Milano (Italy)}	
\email{ghilardi@dsi.unimi.it}  

\author[S.\ Ranise]{Silvio Ranise$^2$}	
\address{$^2$FBK-Irst, Trento (Italy)}	
\email{ranise@fbk.eu}  


\keywords{Infinite State Model Checking, Satisfiability Modulo Theories, Backward Reachability, Invariant Synthesis}
\subjclass{D.2.4, F.3.1, I.2.2} 
\titlecomment{This paper extends~\cite{tableaux09} with all the proofs
  and adapts materials in~\cite{ijcar08,ijcar10} to make it
  self-contained.}


\begin{abstract}
  The safety of infinite state systems can be checked by a backward
  reachability procedure.  For certain classes of systems, it is
  possible to prove the termination of the procedure and hence
  conclude the decidability of the safety problem.  Although backward
  reachability is property-directed, it can unnecessarily explore
  (large) portions of the state space of a system which are not
  required to verify the safety property under consideration.  To
  avoid this, invariants can be used to dramatically prune the search
  space.  Indeed, the problem is to guess such appropriate invariants.

  In this paper, we present a fully declarative and symbolic approach
  to the mechanization of backward reachability of infinite state
  systems manipulating arrays by Satisfiability Modulo Theories
  solving.  Theories are used to specify the topology and the data
  manipulated by the system.  We identify sufficient conditions on the
  theories to ensure the termination of backward reachability and we
  show the completeness of a method for invariant synthesis (obtained
  as the dual of backward reachability), again, under suitable
  hypotheses on the theories.  We also present a pragmatic approach to
  interleave invariant synthesis and backward reachability so that a
  fix-point for the set of backward reachable states is more easily
  obtained.  Finally, we discuss heuristics that allow us to derive an
  implementation of the techniques in the model checker \textsc{mcmt},
  showing remarkable speed-ups on a significant set of safety problems
  extracted from a variety of sources.
\end{abstract}

\maketitle

\newpage

\tableofcontents
\begin{figure}[b]
  \centering
  \textsc{How to read the paper: main tracks}

  \ \\

  \includegraphics[scale=.9]{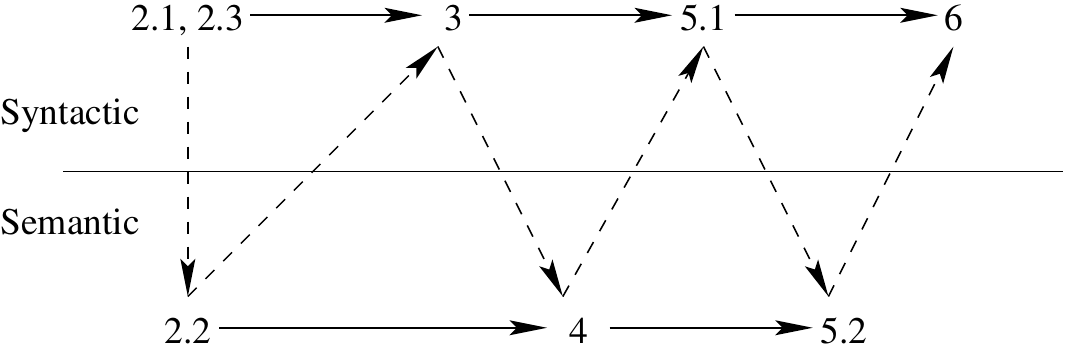}
\end{figure}


\section{Introduction}

Backward reachability analysis has been widely adopted in model
checking of safety properties for infinite state systems (see,
e.g.,~\cite{lics}).  This verification procedure repeatedly computes
pre-images of
a set of unsafe states, usually obtained by complementing a safety
property that a system should satisfy.  Potentially infinite sets of
states are represented by constraints so that pre-image computation
can be done symbolically.  The procedure halts in two cases, either
when the current set of (backward) reachable states has a non-empty
intersection with the set of initial states---called the \emph{safety
  check}---and the system is unsafe, or when such a set has reached a
fix-point (i.e.\ further application of the transition does not
enlarge the set of reachable states)---called the \emph{fix-point
  check}---and the system is safe.  One of the most important key
insights of backward reachability is the possibility to show the
decidability of checking safety properties for some classes of
infinite state systems, such as broadcast protocols~\cite{bro1,bro2},
lossy channel systems~\cite{lossy-channels}, timed
networks~\cite{AbdullaTCS}, and parametric and distributed systems
with global conditions~\cite{cav06,vmcai08}.  The main ingredient of
the technique for proving decidability of safety is the existence of a
well-quasi-ordering
over the infinite set of states entailing the termination of backward
reachability~\cite{lics}.

\subsection{Array-based systems and symbolic backward reachability}

An \emph{array-based system} (first introduced in~\cite{ijcar08}) is a
generalization of all the classes of infinite state systems mentioned
above.  Even more, it supports also the specification and verification
of algorithms manipulating arrays and fault tolerant systems that are
well beyond the paradigms underlying the verification method mentioned
above.  Roughly, an array-based system is a transition system which
updates one (or more) array variable $\mathtt{a}$.  Being parametric
in the structures associated to the indexes and the elements in
$\mathtt{a}$, the notion of array-based system is quite flexible and
allows one the declarative specification of several classes of
infinite state systems.  For example, consider parametrised systems
and the task of specifying their topology: by using no structure at
all, indexes are simply identifiers of processes that can only be
compared for equality; by using a linear order, indexes are
identifiers of processes so that it is possible to distinguish between
those on the left or on the right of a process with a particular
identifier; by using richer and richer structures (such as trees and
graphs), it is possible to specify more and more complex
topologies. Similar observations hold also for elements, where it is
well-known how to use algebraic structures to specify data structures.
Formally, the structure on both indexes and elements is declaratively
and uniformly specified by \emph{theories}, i.e.\ pairs formed by a
(first-order) language and a class of (first-order) structures.

On top of the notion of array-based system, it is possible to design a
fully symbolic and declarative version of backward reachability for
the verification of safety properties where sets of backward reachable
states are represented by certain classes of first-order formulae over
the signature induced by the theories over the indexes and the
elements of the array-based system under consideration.  To mechanize
this approach, the following three requirements are mandatory:
\begin{enumerate}[{\rm (i)}]
\item the class $\mathcal{F}$ of (possibly quantified) first-order
  formulae used to represent sets of states is expressive enough to
  represent interesting classes of systems and safety properties,
\item $\mathcal{F}$ is closed under pre-image computation, and
\item the checks for safety and fix-point can be reduced to decidable
  logical problems (e.g., satisfiability) of formulae in
  $\mathcal{F}$.
\end{enumerate}
Once requirements (i)---(iii) are satisfied, this technique can be
seen as a symbolic version of the model checking techniques
of~\cite{lics} revisited in the declarative framework of first-order
logic augmented with theories (as first discussed in~\cite{ijcar08}).
Using a declarative framework has several \emph{potential} advantages;
two of the most important ones are the following.  First, the
computation of the pre-image (requirement (ii) above) becomes
computationally cheap: we only need to build the formula $\phi$
representing the (iterated) pre-images of the set of unsafe states and
then put the burden of using suitable data structures to represent
$\phi$ on the available (efficient) solver for logical problems
encoding safety and fix-point checks.  This is in sharp contrast to
what is usually done in almost all other approaches to symbolic model
checking of infinite state systems, where the computation of the
pre-image is computationally very expensive as it requires a
substantial process of normalization on the data structure
representing the (infinite) sets of states so as to simplify safety
and fix-point checks.  The second advantage is the possibility to use
state-of-the-art Satisfiability Modulo Theories (SMT) solvers, a
technology that is showing very good success in scaling up various
verification techniques, to support both safety and fix-point checks
(requirement (iii) above).  Unfortunately, the kind of satisfiability
problems obtained in the context of the backward search algorithm
requires to cope with (universal) quantifiers and this makes the
off-the-shelf use of SMT solvers problematic.  In fact, even when
using classes of formulae with decidable satisfiability problem,
currently available SMT solvers are not yet mature enough to
efficiently discharge formulae containing (universal) quantifiers,
despite the fact that this problem has recently attracted a lot of
efforts (see,
e.g.,~\cite{miopapersttt,barret-tinelli,demoura-bijoerner}).  To
alleviate this problem, we have designed a general decision procedure
for a class of formulae satisfying requirement (i) above, based on
quantifier instantiation (see~\cite{ijcar08} and
Theorem~\ref{th:decidability} below); this allows for an easier way to
integrate currently available SMT-solvers in the backward reachability
procedure.  
Interestingly, it is possible to describe the symbolic backward
reachability procedure by means of a Tableaux-like calculus which
offers a good starting point for implementation.  In fact, the main
loop of \textsc{mcmt}~\cite{ijcar10},\footnote{The latest available
  release of the tool with all the benchmarks discussed in this paper
  (and more) can be downloaded at
  \texttt{\url{http://homes.dsi.unimi.it/~ghilardi/mcmt}}. } the model
checker for array-based systems that we are currently developing, can
be easily understood in terms of the rules of the calculus.
The current version of the tool uses Yices~\cite{Dutertre06theyices}
as the back-end SMT solver.  We have chosen Yices among the many
available state-of-the-art solvers because it has scored well in many
editions of the SMT-COMP competition and because its lightweight API
allowed us to easily embed it in \textsc{mcmt}.  An interesting line of
future work would be to make the tool parametric with respect to the
back-end SMT solver so as to permit the user to select the most
appropriate for the problem under consideration.

In our declarative framework, it is also possible to identify
sufficient conditions on the theories about indexes and elements of
the array-based systems so as to ensure the termination of the
symbolic backward reachability procedure.  This allows us to derive
all the decidability results for the safety problems of the classes of
systems mentioned above.  Interestingly, the well-quasi-ordering used
for the proof of termination can be obtained by using standard model
theoretic notions (namely, sub-structures and embeddings) and in
conjunction with well-known mathematical results for showing that a
binary relation is a well-quasi-order (e.g., Dickson's Lemma or
Kruskal's Theorem).
Contrary to the approach proposed in~\cite{lics}---where some ingenuity
is required, in our framework the definition of well-quasi-order is
derived from the class of structures formalizing indexes and elements
in a uniform way by using the model-theoretic notions of sub-structure
and embedding.

\subsection{Symbolic backward reachability and Invariant synthesis}

One of the key advantages of backward reachability over other
verification methods is to be \emph{goal-directed}; the goal being the
set of unsafe states from which pre-images are computed.
Despite this, it can unnecessarily explore (large) portions of the
symbolic state space of a system which are not required to verify the
safety property under consideration.  Even worse, in some cases the
analysis may not detect a fix-point, thereby causing non-termination.
In order to avoid visiting irrelevant parts of the symbolic state
space during backward reachability, techniques for analyzing
pre-images, over-approximating the set of backward reachable states,
and guessing invariants have been devised (see,
e.g.,~\cite{dill-cav95,henzinger-hytech,park-dill-pvs,bjorner-browne-manna-cp95,bradleymanna,pnueli,qaaderflanagan,rybalchenko,indexedabs}
to name a few).  The success of these techniques depend crucially on
the heuristics used to guess the invariants or compute
over-approximations.  Our approach is similar in spirit
to~\cite{bradleymanna}, but employs techniques which are specific for
our different intended application domains.

Along this line of research, we discuss a technique for interleaving
pre-image computation and invariant synthesis which tries to eagerly
prune irrelevant parts of the search space.  Formally, in the context
of the declarative framework described above, our main result about
invariant synthesis ensures that the technique \emph{will find an
  invariant---provided one exists---under suitable hypotheses}, which
are satisfied for important classes of array-based systems (e.g.,
mutual exclusion algorithms or cache coherence protocols).  The key
ingredient in the proof of the result is again the model-theoretic
notion of well-quasi-ordering obtained by applying standard model
theoretic notions that already played a key role in showing the
termination of the backward reachability procedure.  In this case, it
allows us to finitely characterize the search space of candidate
invariants.
Although the technique is developed for array-based systems, we
believe that the underlying idea can be adapted to other symbolic
approaches to model checking (e.g.,~\cite{tacas06,cav06}).

Although the correctness of our invariant synthesis method is
theoretically interesting, its implementation seems to be impractical
because of the huge (finite) search space that must be traversed in
order to find the desired invariant.  In order to make our findings
more practically relevant, we study how to integrate invariant
synthesis with backward reachability so as to prune the search space
of the latter efficiently.  To this end, we develop techniques that
allow us to analyze a set of backward reachable states and then guess
candidate invariants.  Such candidate invariants are then proved to be
``real'' invariants by using a resource bounded variant of the
backward reachability procedure and afterwards are used during
fix-point checking with the hope that they help pruning the search
space by augmenting the chances to detect a fix-point.  Two
observations are important.  First, the bound on the resources of the
backward reachability procedure is because we want to obtain
invariants in a computationally cheap way.  Second, we have complete
freedom in the design of the invariant generation techniques as all
the candidate invariants are checked to be real invariants before
being used by the main backward reachability procedure.
As a consequence, (even coarse) abstraction techniques can be used to
compute candidate invariants without putting at risk the accuracy of
the (un-)safety result returned by the main verification procedure.
For concreteness, we discuss two techniques for invariant guessing:
both compute over-approximation of the set of backward reachable
states.  The former, called \emph{index abstraction} (which resembles
the technique of~\cite{indexedabs}), projects away the indexes in the
formula used to describe a set of backward reachable states while the
latter, called \emph{signature abstraction} (which can be seen as a
form of predicate abstraction~\cite{seminal}), projects away the
elements of a sub-set of the array variables by quantifier elimination
(if possible).  The effectiveness of the proposed invariant synthesis
techniques and their integration in the backward reachability
procedure must be judged experimentally.  Hence, we have implemented
them in \textsc{mcmt} and we have performed an experimental analysis
on several safety problems translated from available model checkers
for parametrised systems (e.g., \textsc{pfs}, Undip, the version of
UCLID extended with predicate abstraction) or obtained by the
formalizing programs manipulating arrays (e.g., sorting algorithms).
The results confirm the viability and the effectiveness of the
proposed invariant synthesis techniques either by more quickly finding
a fix-point (when the backward reachability procedure alone was
already able to find it) or by allowing to find a fix-point (when the
backward reachability procedure alone was not terminating).

\subsection*{How to read the paper}

Given the size of the paper, we identify two tracks for the reader.
The former is the `symbolic' track and allows one to focus on the
declarative framework, the mechanization of the backward reachability
procedure, its combination with invariant synthesis techniques, and
its experimental evaluation.  The latter is the `semantic' track which
goes into the details of the connection between the syntactic
characterization of sets of states and the well-quasi-ordering
permitting one to prove the termination of the backward reachability
procedure and the completeness of invariant synthesis.  To some
extent, the two tracks can be read independently.
\begin{enumerate}[$\bullet$]
\item \emph{Symbolic track}.  In Sections~\ref{subsec:case-def}
  and~\ref{subsec:many-sort}, some preliminary notions underlying the
  concept of array-based system (Section~\ref{subsec:array-based}) are
  given.  In Section~\ref{sec:array}, the symbolic version of backward
  reachability is described, requirements for its mechanization are
  considered, namely closure under pre-image computation and
  decidability of safety and fix-point checks
  (Section~\ref{subsec:brs}), and its formalization using a
  Tableaux-like calculus is presented (Section~\ref{subsec:tab}).  In
  Section~\ref{subsec:dec+inv}, the notion of safety invariants is
  introduced, their synthesis and use to prune the search space of the
  backward reachability procedure is described,
  and their implementation is
  considered in Section~\ref{sec:implementation}.  Particular care has
  been put in the experimental evaluation of the proposed techniques
  for invariant synthesis as illustrated in Section~\ref{sec:exp}.
\item \emph{Semantic track}.  In Section~\ref{subsec:embed}, some
  notions related to the model theoretic concept of embedding are
  briefly summarized.  In Section~\ref{sec:term}, it is explained how
  a pre-order can be defined on sets of states by using the notion of
  embedding and how this allows us (in case the pre-order is a
  well-quasi-order) to prove the termination of the backward
  reachability procedure designed in Section~\ref{sec:array}.  For the
  sake of completeness, it is also stated that the safety problem for
  array-based system is undecidable (Section~\ref{subsec:undec}) and
  its proof can be found in the Appendix.  In
  Section~\ref{subsec:dual}, the completeness of an algorithm for
  invariant synthesis (obtained as the dual of backward reachability)
  is proved under suitable hypotheses.
\end{enumerate}
In Section~\ref{sec:discussion}, we conclude the paper by positioning
our work with respect to the state-of-the-art in verification of the
safety of infinite state systems and we sketch some lines of future
work.  For ease of reference, at the end of the paper, we include the
table of contents and a figure depicting the two tracks for reading
mentioned above.

\section{Formal Preliminaries}
\label{sec:prelim}

We assume the usual syntactic (e.g., signature, variable, term, atom,
literal, and formula) and semantic (e.g., structure, 
truth, satisfiability, and validity) notions of first-order logic
(see, e.g.,~\cite{enderton}).  The equality symbol $=$ is included in
all signatures considered below.  A signature is \emph{relational} if
it does not contain function symbols and it is \emph{quasi-relational}
if its function symbols are all 
constants.  An \emph{expression} is a term, an atom, a literal, or a
formula. Let $\ux$ be a finite tuple of variables and $\Sigma$ a
signature; a $\Sigma(\ux)$-expression is an expression built out of
the symbols in $\Sigma$ where at most the variables in $\ux$ may occur
free (we will write $E(\ux)$ to emphasize that $E$ is a
$\Sigma(\ux)$-expression).  Let $\underline{e}$ be a finite sequence
of expressions and $\sigma$ a substitution; $\underline{e}\sigma$ is
the result of applying the substitution $\sigma$ to each element of
the sequence $\underline{e}$.

According to the current practice in the SMT
literature~\cite{smt-lib}, a \emph{theory} $T$ is a pair $({\Sigma},
\Cc)$, where $\Sigma$ is a signature and $\Cc$ is a class of
$\Sigma$-structures; the structures in $\Cc$ are the \emph{models} of
$T$.  Below, we let $T=({\Sigma}, \Cc)$.
A $\Sigma$-formula $\phi$ is \emph{$T$-satisfiable} if there exists a
$\Sigma$-structure $\cM$ in $\Cc$ such that $\phi$ is true in $\cM$
under a suitable assignment to the free variables of $\phi$ (in
symbols, $\cM \models \phi$); it is \emph{$T$-valid} (in symbols,
$T\models \varphi$) if its negation is $T$-unsatisfiable.  Two
formulae $\varphi_1$ and $\varphi_2$ are \emph{$T$-equivalent} if
$\varphi_1 \leftrightarrow \varphi_2$ is \emph{$T$-valid}.  The
\emph{quantifier-free satisfiability modulo the theory $T$} ($SMT(T)$)
\emph{problem} amounts to establishing the $T$-satisfiability of
quantifier-free $\Sigma$-formulae.

$T$ admits \emph{quantifier elimination} iff
given an arbitrary formula $\varphi(\ux)$, it is always possible to
compute a quantifier-free formula $\varphi'(\ux)$ such that $T\models
\forall \ux (\varphi(\ux)\leftrightarrow \varphi'(\ux))$.  Linear
Arithmetics, Real Arithmetics, acyclic lists, and enumerated data-type
theories (see below) are examples of theories that admit elimination
of quantifiers.

A theory $T=({\Sigma}, \Cc)$ is said to be \emph{locally finite} iff
$\Sigma$ is finite and, for every finite set of variables \ux, there
are finitely many $\Sigma(\ux)$-terms $t_1, \dots, t_{k_{\ux}}$ such
that for every further $\Sigma({\ux})$-term $u$, we have that $T
\models u = t_i$ (for some \idx{k_{\ux}}). The terms $t_1, \dots,
t_{k_{\ux}}$ are called \emph{$\Sigma(\ux)$-representative terms}; if
they are effectively computable from \ux\ (and $t_i$ is computable
from $u$), then $T$ is said to be \emph{effectively locally finite}
(in the following, when we say `locally finite', we in fact always
mean `effectively locally finite').  If $\Sigma$ is relational or
quasi-relational, then any $\Sigma$-theory $T$ is locally finite.

An important class of theories, ubiquitously used in verification,
formalizes enumerated data-types.  An \emph{enumerated data-type
  theory} $T$ is a theory in a quasi-relational signature whose class
of models contains only a single finite $\Sigma$-structure $\cM = (M,
\mathcal I )$ such that for every $m \in M$ there exists a constant $c
\in \Sigma$ such that $c^{\mathcal I} = m$.  For example, enumerated
data-type theories can be used to model control locations of processes
in parametrised systems (see Example~\ref{ex:one} below).

\subsection{Case Defined Functions}
\label{subsec:case-def}
In the SMT-LIB format~\cite{smtlib}, it is possible to use
if-then-else constructors when building terms. This may seem to be
beyond the realm of first order logic, but in fact these constructors
can be easily eliminated in SMT problems. Since case-defined functions
(introduced via nested if-then-else constructors) are quite useful for
us too, we briefly explain the underlying formal aspects here. Given a
theory $T$,
a \emph{$T$-partition} is a finite set $C_1(\ux), \dots, C_n(\ux)$ of
quantifier-free formulae such that $T\models \forall \ux
\bigvee_{i=1}^n C_i(\ux)$ and $T\models \bigwedge_{i\not=j}\forall \ux
\neg (C_i(\ux)\wedge C_j(\ux))$.  A \emph{case-definable extension}
$T'=(\Sigma', \Cc')$ of a theory $T=(\Sigma, \Cc)$ is obtained from
$T$ by applying (finitely many times) the following procedure: (i)
take a $T$-partition $C_1(\ux), \dots, C_n(\ux)$ together with
$\Sigma$-terms $t_1(\ux), \dots, t_n(\ux)$; (ii) let $\Sigma'$ be
$\Sigma\cup\{F\}$, where $F$ is a ``fresh'' function symbol
(i.e. $F\not\in\Sigma$) whose arity is equal to the length of $\ux$;
(iii) take as $\Cc'$ the class of $\Sigma'$-structures $\cM$ whose
$\Sigma$-reduct is a model of $T$ and such that $\cM \models
\bigwedge_{i=1}^n \forall\ux\; (C_i(\ux) \to F(\ux) = t_i(\ux)).$ Thus
a case-definable extension $T'$ of a theory $T$ contains finitely many
additional function symbols, called case-defined functions.

\begin{lemma}
  \label{lem:extensions}
  Let $T'$ be a case-definable extension of $T$; for every formula
  $\phi'$ in the signature of $T'$ it is possible to compute a formula
  $\phi$ in the signature of $T$ such that $\phi$ and $\phi'$ are
  $T'$-equivalent.
\end{lemma}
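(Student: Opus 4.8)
The plan is to reduce to the case of a single case-defined function and then eliminate it by rewriting atoms according to the case split. Since $T'$ is obtained from $T$ by finitely many applications of the case-definition procedure, I would argue by induction on the number of such steps; it then suffices to treat the situation in which $\Sigma' = \Sigma \cup \{F\}$ for a single fresh symbol $F$, introduced via a $T$-partition $C_1(\ux), \dots, C_n(\ux)$ and $\Sigma$-terms $t_1(\ux), \dots, t_n(\ux)$, and to show how to compute from any $\Sigma'$-formula $\phi'$ a $T'$-equivalent $\Sigma$-formula. The inductive step is immediate once this base case is settled, since the outcome of eliminating one new symbol is a formula over a signature with one fewer case-defined function.

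For the single-symbol step, I would eliminate occurrences of $F$ one at a time, working from the innermost outward. Concretely, pick an atom $A$ occurring in $\phi'$ that still contains $F$, and within it an innermost subterm $F(\ue)$, so that $\ue$ is a tuple of $\Sigma$-terms (of length equal to the arity of $F$) containing no further occurrence of $F$. Replace that atom by
\[ \bigvee_{i=1}^n \bigl( C_i(\ue) \wedge A_i \bigr), \]
where $A_i$ is the atom obtained from $A$ by replacing the chosen occurrence of $F(\ue)$ with $t_i(\ue)$. Because the $t_i$ are $\Sigma$-terms and $\ue$ is $F$-free, this strictly decreases the number of occurrences of $F$ in the formula; iterating the rewrite therefore terminates and yields a $\Sigma$-formula $\phi$. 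All the relevant data (the partition $C_i$ and the terms $t_i$) come with the extension, so each step is effective and $\phi$ is computable from $\phi'$.

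For correctness I would verify that each rewriting step preserves the truth value of the formula in every model $\cM \in \Cc'$ under every assignment, so that the global equivalence follows by the usual substitution-of-equivalents argument (valid even under quantifiers, since the equivalence of $A$ with its replacement holds for all assignments). By definition the $\Sigma$-reduct of $\cM$ is a model of $T$, so the partition clauses $T \models \forall \ux \bigvee_{i=1}^n C_i(\ux)$ and $T \models \bigwedge_{i \neq j}\forall \ux \neg(C_i(\ux) \wedge C_j(\ux))$ hold in $\cM$; hence for the values denoted by $\ue$ exactly one $C_i(\ue)$ is true. Moreover $\cM \models \bigwedge_{i=1}^n \forall \ux (C_i(\ux) \to F(\ux) = t_i(\ux))$, so for that same index the value of $F(\ue)$ equals the value of $t_i(\ue)$. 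Consequently $A$ and the displayed disjunction take the same truth value in $\cM$, which is exactly what is needed.

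I expect the only real subtlety to be the bookkeeping around nested and repeated occurrences of $F$: one must ensure that substituting $t_i(\ue)$ for an innermost $F(\ue)$ introduces no new occurrence of $F$ and that the termination measure (the total number of occurrences of $F$) genuinely decreases at each step, so that the rewriting halts and produces a genuine $\Sigma$-formula. Everything else feeds in directly from the definition of a case-definable extension, namely that the case analysis is both exhaustive and mutually exclusive and that $F$ is pinned down by the defining axioms on each piece of the partition.
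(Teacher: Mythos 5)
Your overall strategy is the same as the paper's: reduce to one case-defined symbol at a time (outer induction on the number of extension steps), rewrite an atom containing an occurrence $F(\ue)$ into the case split $\bigvee_{i=1}^n \bigl(C_i(\ue) \wedge A_i\bigr)$, justify each rewrite semantically via exhaustiveness of the partition and the defining axiom $\bigwedge_i \forall\ux\,(C_i(\ux)\to F(\ux)=t_i(\ux))$, and propagate through the whole formula by the replacement theorem for equivalent subformulae. The semantic correctness argument and the outer induction are fine and match the paper, which is equally brief on both points.

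The one step that fails is your termination measure. Replacing the atom $A$ by $\bigvee_{i=1}^n \bigl(C_i(\ue)\wedge A_i\bigr)$ creates $n$ copies of $A$, each with one occurrence of $F$ removed; so if $A$ contained $k$ occurrences of $F$, the total number of occurrences of $F$ in the formula changes by $n(k-1)-k$. For $k\geq 2$ and $n\geq 2$ this is not a strict decrease: for the atom $F(c)=F(d)$ and a three-case partition, a single rewrite turns $2$ occurrences into $3$. Hence your claim that ``the total number of occurrences of $F$ strictly decreases'' is false, and the stated reason for termination collapses exactly at the point you flagged as the only real subtlety. The repair is standard: note that every atom produced by the rewrite contains strictly fewer occurrences of $F$ than the atom it replaces, so the \emph{multiset} of per-atom occurrence counts decreases in the well-founded multiset ordering; equivalently, prove by induction on the number of occurrences of $F$ in a single atom that each atom is $T'$-equivalent to an $F$-free quantifier-free formula, and then replace the atoms of $\phi'$ one at a time. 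This is in effect how the paper organizes its proof, by reducing to atoms with a single occurrence of a case-defined function and iterating.
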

\begin{proof}
  It is sufficient to show the claim for an atomic $\phi'$ containing
  a single occurrence of a case defined function: if this holds, one
  can get the general statement by the replacement theorem for
  equivalent formulae (the procedure must be iterated until all case
  defined additional function symbols are eliminated). Let $\phi$ be
  atomic and let it contain a sub-term of the kind $F\sigma$ in
  position $p$.  Then $\phi$ is $T'$-equivalent to $\bigvee_i
  (C_i\sigma\wedge \phi'[t_i\sigma]_p)$. Here the $C_i$'s are the
  partition formulae for the case definition of $F$ and the $t_i$'s
  are the related `value' terms; the notation $\phi'[t_i\sigma]_p$
  means the formula obtained from $\phi'$ by putting $t_i\sigma$ in
  position $p$.
\end{proof}
Notice that a case-definable extension $T'$ of $T$ is a conservative
extension of $T$, i.e.\ formulae in the signature of $T$ are
$T$-satisfiable iff they are $T'$-satisfiable (this is because, as far
as the signature of $T$ is concerned, the two theories have `the same
models').
Thus, by Lemma~\ref{lem:extensions}, $T$ and $T'$ are basically the
same theory and, by abuse of notation, we shall write $T$ instead of
$T'$.

\subsection{Embeddings}
\label{subsec:embed}
We summarize some basic model-theoretic notions that will be used in
Sections~\ref{sec:term} and~\ref{sec:inv} below (for more details, the
interested reader is pointed to standard textbooks in model theory,
such as~\cite{CK}).

A {\it $\Sigma$-embedding} (or, simply, an embedding) between two
$\Sigma$-structu\-res $\cM=(M, \int)$ and $\cN=(N,\mathcal J)$ is any
mapping $\mu: M \lra N$ among the corresponding support sets
satisfying the following three conditions: (a) $\mu$ is an injective
function; (b) $\mu$ is an algebraic homomorphism, that is for every
$n$-ary function symbol $f$ and for every $a_1, \dots, a_n\in M$, we
have $f^{\cN}(\mu(a_1), \dots, \mu(a_n))= \mu(f^{\cM}(a_1, \dots,
a_n))$; (c) $\mu$ preserves and reflects predicates, i.e.\ for every
$n$-ary predicate symbol $P$, we have $(a_1, \dots, a_n)\in P^{\cM}$
iff $(\mu(a_1), \dots, \mu(a_n))\in P^{\cN}$.

If $M\subseteq N$ and the embedding $\mu: \cM \lra \cN$ is just the
identity inclusion $M\subseteq N$, we say that $\cM$ is a {\it
  substructure} of $\cN$ or that $\cN$ is an {\it superstructure} of
$\cM$. Notice that a substructure of $\cN$ is nothing but a subset of
the support set of $\cN$ which is closed under the $\Sigma$-operations
and whose $\Sigma$-structure is inherited from $\cN$ by
restriction. In fact, given $\cN=(N, \mathcal J)$ and $G\subseteq N$,
there exists the smallest substructure of $\cN$ containing $G$ in its
support set. This is called the substructure \emph{generated by $G$}
and its support set can be characterized as the set of the elements
$b\in N$ such that $t^{\mathcal N}(\ua)=b$ for some $\Sigma$-term $t$
and some finite tuple $\ua$ from $G$ (when we write $t^{\mathcal
  N}(\ua)=b$, we mean that $(\cN, \mathtt{a}) \models t(\ux)=y$ for an
assignment $\mathtt{a}$ mapping the $\ua$ to the $\ux$ and $b$ to
$y$).

Below, we will make frequent use of the easy---but fundamental---fact
that the truth of a universal (resp. existential) sentence is
preserved through substructures (resp. through superstructures).  A
\emph{universal} (resp. \emph{existential}) sentence is obtained by
prefixing a string of universal (resp. existential) quantifiers to a
quantifier-free formula.

\subsection{A many-sorted framework}
\label{subsec:many-sort}
From now on, we use many-sorted first-order logic.  All notions
introduced above can be easily adapted to a many-sorted framework.
\textbf{In the rest of the paper, we fix} (i) a theory $T_I=(\Sigma_I,
\Cc_I)$ whose only sort symbol is \texttt{INDEX}; (ii) a
theory $T_E=(\Sigma_E, \Cc_E)$ for data whose only sort symbol is
\texttt{ELEM} (the class $\Cc_E$ of models of this theory is usually a
singleton).
The \textbf{theory $A_I^E=(\Sigma, \Cc)$ of arrays with indexes in $T_I$
  and elements in $T_E$} is obtained as the combination of $T_I$ and $T_E$
as follows: \texttt{INDEX}, \texttt{ELEM}, and \texttt{ARRAY} are the
only sort symbols of $A_I^E$, the signature is $\Sigma := \Sigma_I\cup
\Sigma_E\cup \{ \_[\_] \}$ where $\_[\_]$ has type $\mathtt{ARRAY},
\mathtt{INDEX} \longrightarrow \mathtt{ELEM}$ (intuitively, $a[i]$
denotes the element stored in the array $a$ at index $i$); a
three-sorted structure $\cM = ({\tt INDEX}^{\cM}, {\tt ELEM}^{\cM},
{\tt ARRAY}^{\cM}, ~\int)$ is in $\Cc$ iff ${\tt ARRAY}^{\cM}$ is the
set of (total) functions from ${\tt INDEX}^{\cM}$ to ${\tt
  ELEM}^{\cM}$, the function symbol $\_[\_]$ is interpreted as
function application, and $\cM_I=(\mathtt{INDEX}^{\cM},\int_{\vert
  \Sigma_I})$, $\cM_E=(\mathtt{ELEM}^{\cM},\int_{\vert \Sigma_E})$ are
models of $T_I$ and $T_E$, respectively (here $\int_{\vert \Sigma_X}$
is the restriction of the interpretation $\int$ to the symbols in
$\Sigma_X$ for $X\in\{I,E\}$).

\paragraph{ {\bf Notational conventions.}}  For
the sake of brevity, we
introduce the following notational conventions: $d,e$ range over
variables of sort {\tt ELEM}, $a$ over variables of sort {\tt ARRAY},
$i,j,k,$ and $z$ over variables of sort {\tt INDEX}.  An underlined
variable name abbreviates a tuple of variables of unspecified (but
finite) length and, if $\ui:=i_1, \dots, i_n$, the notation $a[\ui]$
abbreviates the tuple of terms $a[i_1], \dots, a[i_n]$.  Possibly
sub/super-scripted expressions of the form
$\phi(\ui,\ue),\psi(\ui,\ue)$ denote \emph{\textbf{ quantifier-free $(\Sigma_I
  \cup \Sigma_E)$-formulae} in which at most the variables $\ui\cup\ue$
  occur}.
Also, $\phi(\ui, \ut/\ue)$ (or simply $\phi(\ui, \ut)$)
abbreviates the substitution of the $\Sigma$-terms $\ut$ for the variables
$\ue$.
 Thus, for
instance, \emph{$\phi(\ui, a[\ui])$ denotes the formula obtained by
replacing $\ue$ with $a[\ui]$ in a quantifier-free formula
$\phi(\ui, \ue)$}.

\section{Backward Reachability}
\label{sec:array}
Following~\cite{avocs08}, we focus on a particular yet large class of
array-based systems corresponding to guarded assignments.

\subsection{Array-based Systems}
\label{subsec:array-based}
A \emph{(guarded assignment) array-based (transition) system (for
  $(T_I,T_E)$)} is a triple $\cSi=(a,I,\tau)$ where (i) $a$ is the
\emph{state} variable of sort \texttt{ARRAY};\footnote{For the sake of
  simplicity, we limit ourselves to array-based systems having just
  one variable $a$ of sort $\mathtt{ARRAY}$. All the definitions and
  results can be easily generalized to the case of several variables of
  sort $\mathtt{ARRAY}$.  In the examples, we will consider cases
  where more than one variable is required and, in addition, the
  theory $T_E$ is many-sorted.}  (ii) $I(a)$ is the \emph{initial}
$\Sigma(a)$-formula; and (iii) $\tau(a, a')$ is the \emph{transition}
$(\Sigma\cup \Sigma_D)(a, a')$-formula, where $a'$ is a renamed copy
of $a$ and $\Sigma_D$ is a finite set of case-defined function symbols
not in $\Sigma_I\cup \Sigma_E$.  Below, we also \textbf{assume $I(a)$
  to be a \emph{$\forall^I$-formula}}, i.e.\ a formula of the form
$\forall \ui.\phi(\ui, a[\ui])$, and \textbf{$\tau(a,a')$ to be in
  \emph{functional form}}, i.e.\ a \emph{disjunction of} formulae of
the form
\begin{equation}
  \label{eq:transition2}
  \exists \ui\,(\phi_L(\ui, a[\ui]) \wedge
    \forall j\,a'[j]=F_G(\ui, a[\ui], j, a[j]))
\end{equation}
where $\phi_L$ is the \emph{guard} (also called the local component
in~\cite{ijcar08}), and $F_G$ is a case-defined function (called the
\emph{global} component in~\cite{ijcar08}).  To understand why we say
that formulae \eqref{eq:transition2} are `in functional form',
consider $\lambda$-abstraction; then, the sub-formula $\forall
j\,a'[j]=F_G(\ui, a[\ui], j, a[j]))$ can be re-written as $a'=\lambda
j. F_G(\ui, a[\ui], j, a[j])$.
In~\cite{ijcar08}, we adopted a more liberal format for transitions;
the format of this paper, however, is sufficient to formalize all
relevant examples we met so far. Results in this paper extend in a
straightforward way to the case in which $T_E$ is assumed to have
quantifier elimination and~\eqref{eq:transition2} is allowed to have
existentially quantified variables ranging over data.
This extension is crucial to formalize, e.g., non-deterministic
updates
or timed networks~\cite{verify}.

Given an array-based system $\cSi=(a,I,\tau)$ and a formula $U(a)$,
(an instance of) the \emph{safety problem} is to establish whether
there exists a natural number $n$ such that the formula
\begin{equation}
  \label{eq:unsafe}
  I(a_0)\wedge \tau(a_0, a_1)\wedge \cdots \wedge \tau(a_{n-1}, a_n)\wedge U(a_n)
\end{equation}
is $A^E_I$-satisfiable. If there is no such $n$, then $\cSi$ is
\emph{safe} (w.r.t. $U$); otherwise, it is \emph{unsafe} since the
$A^E_I$-satisfiability of \eqref{eq:unsafe} implies the existence of a
run (of length $n$) leading the system from a state in $I$ to a state
in $U$.  From now on, we \textbf{assume $U(a)$ to be a
  \emph{$\exists^I$-formula}}, i.e.\ a formula of the form $\exists
\ui.\phi(\ui, a[\ui])$.

We illustrate the above notions by considering the Mesi cache
coherence protocol, taken from the extended version of~\cite{tacas06}.
\begin{example}\em
  \label{ex:one}
  Let $T_I$ be the pure theory of equality and $T_E$ be the enumerated
  data-types theory with four constants denoted by the numerals from
  $1$ to $4$.  Each numeral corresponds to a control location of a
  cache: $1$ to $\mathtt{modified}$, $2$ to $\mathtt{exclusive}$, $3$
  to $\mathtt{shared}$, and $4$ to $\mathtt{invalid}$.

  Initially, all caches are $\mathtt{invalid}$ and the formula
  characterizing the set of initial states is $\forall i.\; a[i]=4$.
  There are four transitions.  In the first (resp. second) transition,
  a cache in state $\mathtt{invalid}$ (resp. $\mathtt{shared}$) goes
  to the state $\mathtt{exclusive}$ and invalidates all the other
  caches. Formally, these can be encoded with formulae as follows:
  \begin{eqnarray*}
    \exists i.~ (a[i]= 4 \wedge
        a'=\lambda j.\; (\mathtt{if}~(j=i) ~\mathtt{then}~2~\mathtt{else}~4)) & \mbox{and } \\
    \exists i.~ (a[i]= 3 \wedge
        a'=\lambda j.\; (\mathtt{if}~(j=i) ~\mathtt{then}~2~\mathtt{else}~4)) & .
  \end{eqnarray*}
  In the third transition, a cache in state $\mathtt{invalid}$ goes to
  the state $\mathtt{shared}$ and so do all other caches:
  \begin{eqnarray*}
    \exists i.~ (a[i]= 4 \wedge
        a'=\lambda j.\; 3) .
  \end{eqnarray*}
  In the fourth and last transition, a cache in state
  $\mathtt{exclusive}$ can move to the state $\mathtt{modified}$ (the
  other caches maintain their current state):
  \begin{eqnarray*}
    \exists i.~ (a[i]= 2 \wedge
        a'=\lambda j.\; (\mathtt{if}~(j=i) ~\mathtt{then}~1~\mathtt{else}~a[j])) .
  \end{eqnarray*}
  To be safe, the protocol should not reach a state in which there is
  a cache in state $\mathtt{modified}$ and another cache in state
  $\mathtt{modified}$ or in state $\mathtt{shared}$.  Thus, one can
  take
  \begin{eqnarray*}
    \exists i_1\, \exists i_2.~(i_1\neq i_2 \wedge a[i_1]= 1 \wedge ( a[i_2] = 1 \vee a[i_2]= 3))
  \end{eqnarray*}
  as the unsafety formula. \qed
\end{example}
  The reader with some experience in infinite state model checking may
  wonder how it is possible to encode in our framework transitions
  with `global conditions,' i.e.\ guards requiring a universal
  quantification over indexes.  Indeed, the
  format~\eqref{eq:transition2} for transitions is clearly too
  restrictive for this purpose.  However, it is possible to overcome
  this limitation by using the \emph{stopping failures model}
  introduced in the literature about distributed algorithms (see,
  e.g.,~\cite{dalg}): according to this model, processes may crash at
  any time and do not play any role in the rest of the execution of
  the protocol (they ``disappear'').  In this model, there is no need
  to check the universal conditions of a transition, rather the
  transition is taken and any process not satisfying the global
  condition is assumed to crash.  In this way, we obtain an
  over-approximation of the original system admitting more runs and
  any safety certification obtained for this over-approximation is
  also a safety certification for the original model.  Indeed, the
  converse is not always true and spurious error traces may be
  obtained.  Interestingly, the approximated model can be obtained
  from the original system by simple syntactical transformations of
  the formulae encoding the transitions requiring the universal
  conditions. For more details concerning the implementation of the
  approximated model in \textsc{mcmt}, the reader is referred
  to~\cite{stop}.  A more exhaustive discussion of the use of a
  similar approximated model can be found in ~\cite{tacas06,cav06,approx}.

\subsection{Backward Reachable States}
\label{subsec:brs}
A general approach to solve instances of the safety problem is based
on computing the set of backward reachable states.  For $n\geq 0$, the
$n$-\emph{pre-image} of a formula $K(a)$ is $Pre^0(\tau,K) := K$ and
$Pre^{n+1}(\tau, K) := Pre(\tau, Pre^n(\tau, K))$, where
\begin{eqnarray}
  \label{eq:def-pre}
  Pre(\tau, K) & := & \exists a'.(\tau(a,a') \wedge K(a')).
\end{eqnarray}
Given $\cSi=(a,I,\tau)$ and $U(a)$, the formula $Pre^n(\tau, U)$
describes the set of backward reachable states in $n$ steps (for
$n\geq 0$).
\begin{figure}[tb]
  \begin{center}
  \begin{tabular}{ccc}
    \begin{minipage}{.45\textwidth}
      \begin{tabbing}
        foo \= foo \= \kill
        \textbf{function} $\mathsf{BReach}(U ~:~ \exists^I\mbox{-formula})$ \\
        1 \> $P\longleftarrow U$;  $B\longleftarrow \bot$; \\
        2\> \textbf{while} ($P\wedge \neg B$ is $A^E_I$-sat.) \textbf{do}\\
        3\>\> \textbf{if} ($I\wedge P$ is $A^E_I$-sat.) \\
         \>\> \hspace{.75cm} \textbf{then return}  $\mathsf{unsafe}$;\\
        4\> \> $B\longleftarrow P\vee B$; \\
        5\>\> $P\longleftarrow Pre(\tau, P);$ \\
        6\> \textbf{end} \\
        7\> \textbf{return} $(\mathsf{safe}, B);$
      \end{tabbing}
    \end{minipage}
    & \hspace{.35cm} &
    \begin{minipage}{.45\textwidth}
      \begin{tabbing}
        foo \= foo \= \kill
        \textbf{function} $\mathsf{SInv}(U ~:~ \exists^I\mbox{-formula})$ \\
        1 \> $P\longleftarrow \mathsf{ChooseCover}(U)$;
           $B\longleftarrow \bot$; \\
        2\> \textbf{while} ($P\wedge \neg B$ is $A^E_I$-sat.) \textbf{do}\\
        3\>\> \textbf{if} ($I\wedge P$ is $A^E_I$-sat.) \\
         \>\>  \hspace{.75cm} \textbf{then return} $\mathsf{failure}$;  \\
        4\> \> $B\longleftarrow P\vee B$; \\
        5\>\> $P\longleftarrow \mathsf{ChooseCover}(Pre(\tau, P));$ \\
        6\> \textbf{end} \\
        7\> \textbf{return} $(\mathsf{success}, \neg B)$;
      \end{tabbing}
    \end{minipage} \\
    (a) & & (b)
  \end{tabular}
  \end{center}
  \caption{\label{fig:reach-algo}%
    The basic backward reachability (a) and the invariant synthesis
    (b) algorithms}
\end{figure}
At the (end of) $n$-th iteration of the loop, the \emph{basic backward
  reachability algorithm}, depicted in Figure~\ref{fig:reach-algo}
(a), stores in the variable $B$ the formula $BR^n(\tau,
U):=\bigvee^n_{i=0} Pre^i(\tau, U)$ representing the set of states
which are backward reachable from the states in $U$ in at most $n$
steps (whereas the variable $P$ stores the formula $Pre^{n+1}(\tau, U)$).
While computing $BR^n(\tau, U)$, $\mathsf{BReach}$ also checks whether
the system is unsafe (cf.\ line 3, which can be read as `$I\wedge
Pre^n(\tau, U)$ is $A_I^E$-satisfiable') or a fix-point has been
reached (cf.\ line 2, which can be read as `$\neg(BR^{n+1}(\tau,
U)\rightarrow BR^{n}(\tau, U))$ is $A_I^E$-satisfiable' or,
equivalently, that `$(BR^{n+1}(\tau, U)\rightarrow BR^{n}(\tau, U))$ is
not $A_I^E$-valid').
%
When $\mathsf{BReach}$ returns the safety of the
system (cf.\ line 7), the variable $B$ stores the formula describing
the set of states which are backward reachable from $U$ which is also
a fix-point.

Indeed, for $\mathsf{BReach}$ (Figure~\ref{fig:reach-algo} (a)) to be
a true (possibly non-terminating) procedure, it is mandatory that (i)
$\exists^I$-formulae are closed under pre-image computation and (ii)
both the $A^E_I$-satisfiability test for safety (line 3) and that for
fix-point (line 2) are effective.

Concerning (i), it is sufficient to use the following result
from~\cite{avocs08}.\footnote{The proposition may be read as the
  characterization of a weakest liberal pre-condition
  transformer~\cite{wlp} for array-based systems.}
\begin{proposition}
  \label{prop:trans}
  Let $K(a):=\exists \uk \, \phi(\uk, a[\uk])$
  and $\tau(a,a'):=\bigvee_{h=1}^m\exists \ui \;(\phi^h_L(\ui,
  a[\ui])\wedge a'=\lambda j. F^h_G(\ui, a[\ui], j, a[j]) )$.
  Then, $Pre(\tau, K)$ is $A_I^E$-equivalent to an (effectively
  computable) $\exists^I$-formula.
\end{proposition}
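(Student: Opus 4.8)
The plan is to compute $Pre(\tau,K)$ directly from its definition \eqref{eq:def-pre} and massage the result into $\exists^I$-form, eliminating the second-order quantifier over the array copy $a'$ by exploiting the functional shape of $\tau$. Since $\exists a'$ and the conjunct $K(a')$ both distribute over the disjunction defining $\tau$, the operator $Pre(\tau,\cdot)$ commutes with that disjunction, so I would first reduce to a single disjunct, fixing one $h$ and writing $\tau_h(a,a'):=\exists\ui\,(\phi_L(\ui,a[\ui])\wedge a'=\lambda j.\,F_G(\ui,a[\ui],j,a[j]))$. It then suffices to put each $Pre(\tau_h,K)$ in $\exists^I$-form and recombine, because $\exists^I$-formulae are closed under disjunction (one merges the two index tuples under a single existential prefix, using that the index sort is nonempty).

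For a single disjunct, I would first commute the guard's quantifier $\exists\ui$ outward past $\exists a'$, which is legitimate since $\exists\ui$ does not bind $a'$; this rewrites $Pre(\tau_h,K)$ as $\exists\ui\,(\phi_L(\ui,a[\ui])\wedge \exists a'.(a'=\lambda j.\,F_G(\ui,a[\ui],j,a[j])\wedge K(a')))$. The crucial step is to discharge $\exists a'$. For fixed $\ui$ and $a$, the equation $a'=\lambda j.\,F_G(\ui,a[\ui],j,a[j])$ pins down $a'$ uniquely as a total function $\mathtt{INDEX}\to\mathtt{ELEM}$, and $K(a')=\exists\uk\,\phi(\uk,a'[\uk])$ accesses $a'$ only through array reads. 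Hence the inner existential can be removed by substitution and $\beta$-reduction: since $a'[\uk]=F_G(\ui,a[\ui],\uk,a[\uk])$, the body becomes $\exists\uk\,\phi(\uk,F_G(\ui,a[\ui],\uk,a[\uk]))$, in which $a'$ no longer occurs.

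Collecting these observations yields, for each disjunct, the $A_I^E$-equivalence $Pre(\tau_h,K)\equiv \exists\ui\,\exists\uk\,(\phi_L(\ui,a[\ui])\wedge \phi(\uk,F_G(\ui,a[\ui],\uk,a[\uk])))$. The matrix is quantifier-free but still mentions the case-defined function $F_G\in\Sigma_D$; applying Lemma~\ref{lem:extensions} I would rewrite it into an $A_I^E$-equivalent quantifier-free $(\Sigma_I\cup\Sigma_E)$-formula whose only array terms are the reads $a[\ui]$ and $a[\uk]$. This is exactly a formula of the form $\exists\ui,\uk\,\psi(\ui,\uk,a[\ui],a[\uk])$, i.e.\ an $\exists^I$-formula, and the whole construction is effective because both the substitution/$\beta$-reduction and Lemma~\ref{lem:extensions} are. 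Taking the disjunction over $h=1,\dots,m$ and merging index tuples completes the argument.

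The only genuinely delicate point is the elimination of the second-order quantifier $\exists a'$: one must argue that, semantically, ranging $a'$ over all index-to-element functions and then asserting the defining equation coincides with substituting the $\lambda$-term. This relies on $\mathtt{ARRAY}^{\cM}$ being exactly the full function space from $\mathtt{INDEX}^{\cM}$ to $\mathtt{ELEM}^{\cM}$ (as fixed in the definition of $A_I^E$), so that a witness for $a'$ always exists and is unique. Everything else—distributing $Pre$ over disjunctions, commuting $\exists\ui$ past $\exists a'$, $\beta$-reducing the array reads, and invoking Lemma~\ref{lem:extensions}—is routine bookkeeping.
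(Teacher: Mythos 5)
Your proposal is correct and follows essentially the same route as the paper's proof: reduce to a single disjunct, use the functional ($\lambda$-abstraction) form of $\tau_h$ to eliminate $\exists a'$ by substitution and $\beta$-reduction, obtaining exactly the formula $\exists \ui\,\exists\uk\,(\phi^h_L(\ui,a[\ui])\wedge\phi(\uk,F^h_G(\ui,a[\ui],\uk,a[\uk])))$, then remove the case-defined function via Lemma~\ref{lem:extensions} and recombine the disjuncts under a common existential prefix. The only difference is that you spell out the semantic justification for discharging $\exists a'$ (existence and uniqueness of the witness in the full function space $\mathtt{ARRAY}^{\cM}$), which the paper leaves implicit in the phrase ``a single $\beta$-reduction step.''
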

\begin{proof}
  Let $\tau_h$ be one of the $m$ disjuncts of $\tau$. Using the
  $\lambda$-abstraction formulation and a single $\beta$-reduction
  step, it is clear that $Pre(\tau_h, K)$ is $A_I^E$-equivalent to the
  following $\exists^I$-formula
  \begin{eqnarray}
    \label{eq:pre}
    \begin{array}{r}
      \exists \ui\, \exists \uk.(
      \phi^h_L(\ui, a[\ui]) \wedge \phi(\uk, F_G^h(\ui,a[\ui],\uk, a[\uk])))
    \end{array}
  \end{eqnarray}
  where $\uk$ is the tuple $k_1, \dots, k_l$ and $\phi(\uk,
  F_G^h(\ui,a[\ui],\uk, a[\uk]))$ is the formula obtained from
  $\phi(\uk, a'[\uk])$ by replacing $a'[k_s]$ with
  $F_G^h(\ui,a[\ui],k_s, a[k_s])$, for $s=1, ..., l$.  Now it is
  sufficient to eliminate the $F^h_G$ as shown in
  Lemma~\ref{lem:extensions}. As a final step, the existential
  quantifiers can be moved in front of the disjunction arising from
  the $m$ disjuncts $\tau_1, ..., \tau_m$.
\end{proof}
The proof and the algorithm underlying Proposition~\ref{prop:trans}
are quite simple.  This is in sharp contrast to most approaches to
infinite state model checking available in the literature
(e.g.,~\cite{tacas06,cav06}) that use special data structures (such as
strings with constraints) to represent sets of states. These special
data structures can be considered as normal forms when compared to our
formulae.  In this respect, our framework is more flexible
since---although it can use normal forms (when these can be cheaply
computed)---it is not obliged to do so.
The drawback is that safety and fix-point checks may become
computationally much more expensive.  In particular, the bottle-neck
is the handling of the quantified variables in the prefix of
$\exists^I$-formulae which may become quite large at each pre-image
computation: notice that the prefix $\exists\;\uk$ is augmented with
$\exists\;\ui$ in~\eqref{eq:pre} with respect to $K$.  This and other
issues which are relevant for the implementation of our framework are
discussed in~\cite{avocs08,afm09,ijcar10}.

Concerning the mechanization of the safety and fix-point checks (point
(ii) above), observe that the formulae involved in the satisfiability
checks are $I \wedge BR^n(\tau,K)$ and
$\neg(BR^n(\tau, U)\rightarrow BR^{n-1}(\tau, U))$.  Since we have
closure under pre-image computation, both formulae are of the form $
\exists \ua\; \exists \ui\; \forall \uj\; \psi(\ui, \uj, \ua[\ui],
\ua[\uj])$, where $\psi$ is quantifier free: we call these sentences
\emph{$\exists^{A,I}\forall^I$-sentences}~\cite{ijcar08}.
 \begin{thm}
   \label{th:decidability}
   The $A_I^E$-satisfiability of $\exists^{A,I}\forall^I$-sentences is
   decidable if (I) $T_I$ is locally finite and is closed under
   substructures\footnote{
     By this we mean that if $\cM$ is a model of $T_I$ and $\cN$ is a
     substructure of $\cM$, then $\cN$ is a model of $T_I$ as well.  }
   and (II) the $SMT(T_I)$ and $SMT(T_E)$ problems are
   decidable. Under the same hypotheses, it holds that an
   $\exists^{A,I}\forall^I$-sentence is $A_I^E$-satisfiable iff it is
   satisfiable in a finite index model (a \emph{finite index model} is
   a model $\cM$ in which the set $\texttt{INDEX}^{\cM}$ has finite
   cardinality).
\end{thm}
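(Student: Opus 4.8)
The plan is to establish the model-theoretic ``iff'' first and then read off decidability from it. Since we only care about satisfiability, I would begin by treating the leading existential quantifiers as a Skolemization step: the variables $\ua$ and $\ui$ may be replaced by fresh constants, so it suffices to decide the $A_I^E$-satisfiability of the purely universal sentence $\forall \uj\, \psi(\ui, \uj, \ua[\ui], \ua[\uj])$, now with $\ui$ and $\ua$ as parameters. The heart of the argument is to show that if this sentence has any model, then it has one whose $\mathtt{INDEX}$ component is the substructure of the original index model generated by the interpretations of the tuple $\ui$.

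Concretely, let $\cM$ be a model of the sentence and let $\uc$ be the elements of $\mathtt{INDEX}^{\cM}$ interpreting $\ui$. I would take $\cM_I'$ to be the substructure of the index reduct $\cM_I$ generated by $\uc$. By local finiteness of $T_I$ this support set is finite: it is covered by the values of the finitely many representative terms $t_1(\ui), \dots, t_N(\ui)$, and hypothesis~(I) (closure under substructures) guarantees that $\cM_I'$ is again a model of $T_I$. I then build $\cM'$ by keeping the $\mathtt{ELEM}$-reduct $\cM_E$ unchanged, letting $\mathtt{INDEX}^{\cM'}$ be $\cM_I'$, taking $\mathtt{ARRAY}^{\cM'}$ to be all total functions $\cM_I' \lra \mathtt{ELEM}^{\cM}$, and interpreting $\ua$ by the restriction of its original value to $\cM_I'$. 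This $\cM'$ is a finite index model of $A_I^E$. Because $\uc$ lies in $\cM_I'$, every array read $\ua[\ui]$ is unchanged; since $\cM_I'$ is a substructure the $\Sigma_I$-atoms are preserved and reflected; and as the universal sentence held for all indices in $\cM$---in particular for all indices in $\cM_I' \subseteq \mathtt{INDEX}^{\cM}$---it continues to hold in $\cM'$. This is exactly the preservation of universal sentences through substructures recalled in Section~\ref{subsec:embed}, and it yields the stated equivalence, the converse direction being trivial since a finite index model is a model.

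Decidability then follows by turning the finite-index-model property into a finite instantiation. In a model whose $\mathtt{INDEX}$ is generated by $\ui$, every index equals the value of some representative term $t_s(\ui)$, so the universal quantifier $\forall \uj$ may be soundly and completely replaced by the conjunction of its instances obtained by letting each component of $\uj$ range over $t_1(\ui), \dots, t_N(\ui)$; these terms are effectively computable by hypothesis~(I). The result is a quantifier-free $A_I^E$-satisfiability problem over the finitely many index terms built from $\ui$ and the corresponding array reads. I would decide it by a combination argument in the style of Nelson--Oppen, exploiting that the sorts $\mathtt{INDEX}$ and $\mathtt{ELEM}$ are disjoint and coupled only through functionality of the array (equal indices force equal reads): guess an arrangement---a $T_I$-consistent equivalence relation---on the finitely many index terms, check the $\Sigma_I$-part together with this arrangement using the $SMT(T_I)$ oracle, propagate the induced identifications of the array reads to the $\mathtt{ELEM}$ side, and check the resulting $\Sigma_E$-part with the $SMT(T_E)$ oracle. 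The original formula is satisfiable iff some arrangement passes both checks. Local finiteness bounds the number of index terms, hence the number of arrangements, to a finite and computable quantity, so hypothesis~(II) makes the whole procedure terminating.

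I expect the main obstacle to be the finite-model step rather than the combination step: one must check carefully that restricting $\mathtt{INDEX}$ to the generated substructure simultaneously keeps $\cM'$ a model of $A_I^E$ (arrays must remain \emph{total} functions on the smaller index set), keeps it a model of $T_I$ (which is precisely where closure under substructures is used), and preserves the truth of the universal matrix $\psi$ (which requires every $\mathtt{ELEM}$-valued and $\mathtt{INDEX}$-valued subterm of $\psi$ to evaluate identically in $\cM$ and $\cM'$). Getting the soundness and completeness of the instantiation over representative terms exactly right---in particular handling the case where distinct representative terms denote the same index, which forces the corresponding reads to be identified on the $\mathtt{ELEM}$ side---is the other delicate point, and it is the reason local finiteness, and not merely decidability of $SMT(T_I)$, is indispensable.
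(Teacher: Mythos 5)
Your proposal is correct and takes essentially the same route as the paper's own proof (given as the more general Theorem~\ref{thm:pi01} in the appendix): Skolemize, restrict to the $\Sigma_I$-substructure generated by the indices interpreting $\ui$ (this is exactly where closure under substructures and local finiteness are used), replace $\forall\uj$ by its instances over the representative $\Sigma_I(\ui)$-terms, and decide the resulting quantifier-free problem by guessing an arrangement on the index terms and propagating only the induced equalities to the $\mathtt{ELEM}$ side before invoking the two SMT oracles. The only difference is one of ordering and emphasis: you establish the finite-index-model property first and then derive the instantiation, whereas the paper proves the satisfiability-equivalence between the universal sentence and its instantiation directly and obtains the finite index model as a byproduct of the same restriction-to-generated-substructure argument.
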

A generalization of Theorem~\ref{th:decidability} can be found in the
extended version of~\cite{ijcar08} and is reported in Appendix A (with
a proof) to make this paper self-contained.  The proof of
Theorem~\ref{th:decidability} is the starting point to develop a
satisfiability procedure for formulae of the form $ \exists \ua\;
\exists \ui\; \forall \uj\; \psi(\ui, \uj, \ua[\ui], \ua[\uj])$
consisting of the following steps: (a) the variables $\ua, \ui$ are
Skolemized away: (b) the variables $\uj$ are instantiated in all
possible ways by using the representative $\ui$-terms; (c) the
resulting combined problem is purified and an arrangement (i.e.\ an
equivalence class) over the shared index variables is guessed; (d) the
positive literals from this arrangement are propagated to the
$T_E$-literals (this is a variant of the Nelson-Oppen schema adopted
in `theory connections,' see~\cite{BaGh}); (e) finally, the purified
constraints are passed to the theory solvers for $T_I$ and $T_E$,
respectively.  From the implementation viewpoint, powerful heuristics
are needed~\cite{afm09} to keep the potential combinatorial explosion
in step (b) under control.  Fortunately, the adoption of a certain
format for formulae (called, `primitive differentiated,' see below for
details) makes steps (c) and (d) redundant (see~\cite{afm09} for more
on this point).

Hypothesis (I) from Theorem~\ref{th:decidability} concerns the
\emph{topology} of the system (not the data manipulated by the
components of the system)
and its intuitive meaning can be easily explained when the signature
$\Sigma_I$ is relational: in that case, local finiteness is guaranteed
and closure under substructures says that if some elements are deleted
from a model of $T_I$, we still get a model of $T_I$ (i.e.\ the
topology does not change under elimination of elements).  For example,
Hypothesis (I) is true for (finite) sets, linear orders, graphs,
forests, while it does not hold for 'rings,' because, after deleting
one of their elements, they are no more rings.
We emphasize that it is \emph{not possible to weaken} Hypothesis (I)
on the theory $T_I$.  Indeed, it is possible to show that any
weakening yields undecidable fragments of the theory of arrays over
integers~\cite{arrays} (as it is shown in Appendix A).  Furthermore,
we observe that Hypothesis (I) is not too restrictive because, as said
above, it concerns only the topology of the system.
%
So, for example, the topology of virtually any cache coherence
protocol (see Example~\ref{ex:one}) can be formalized by finite sets
while that of
standard mutual exclusion protocols by linear orders.

We summarize our working hypotheses in the following.
\begin{assumption}
  We fix an array-based system $\cSi=(a,I,\tau)$ such that the initial
  formula $I$ is a $\forall^I$-formula, and the transition formula
  $\tau(a,a')$ is $\bigvee_{h=1}^m \tau_h(a,a')$, where $\tau_h$ is a
  formula
  of the form~\eqref{eq:transition2} for $h=1, ..., m$.
  We suppose that  $\exists$-formulae are used to describe the set of
  unsafe states.
  Finally, we assume that hypotheses (I) and (II) of
  Theorem~\ref{th:decidability} are satisfied.%
\end{assumption}

\subsection{Tableaux-like Implementation of Backward Reachability}
\label{subsec:tab}

A naive implementation of the algorithm in Figure~\ref{fig:reach-algo}
(a) does not scale up.  The main problem is the size of the formula
$BR^n(\tau,U)$ which contains many redundant or unsatisfiable
sub-formulae.  We now discuss how Tableaux-like techniques can be used
to circumvent these difficulties.  We need one more definition: an
$\exists^I$-formula $\exists i_1\cdots \exists i_n \phi$ is said to be
\emph{primitive} iff $\phi$ is a conjunction of literals and is said
to be \emph{differentiated} iff $\phi$ contains as a conjunct the
negative literal $i_k\not= i_l$ for all $1\leq k<l\leq n$. By applying
various distributive laws together with the rewriting rules
\begin{eqnarray}
  \label{eq:rew}
  \exists j (i=j\wedge \theta) \leadsto \theta(i/j)
  & \mbox{ and } &
  \theta \leadsto (\theta \wedge i=j)\vee (\theta \wedge i\not= j)
\end{eqnarray}
it is always possible to transform every $\exists^I$-formula into a
disjunction of primitive differentiated ones.

We initialize our tableau with the $\exists^I$-formula $U(a)$
representing the set of unsafe states.  The key observation is to
revisit the computation of the pre-image as the following inference
rule (we use square brackets to indicate the applicability  condition of the rule):
\begin{displaymath}
  \infer[\mathsf{PreImg}]
   {Pre(\tau_1, K) ~|~ \cdots ~|~ Pre(\tau_m, K)}
   {K~~[\mbox{$K$ is primitive differentiated}]}
\end{displaymath}
where $Pre(\tau_h, K)$ computes the $\exists^I$-formula which is
$A^E_I$-equivalent to the pre-image of $K$ w.r.t.\ $\tau_h$ (this is
possible according to the proof of Proposition~\ref{prop:trans}).

Since the $\exists^I$-formulae labeling the consequents of the rule
$\mathsf{PreImg}$ may not be primitive and differentiated, we need the
following $\mathsf{Beta}$ rule
\begin{displaymath}
  \infer[\mathsf{Beta}]
  { K_1 ~|~ \cdots ~|~ K_n}
   {K}
\end{displaymath}
where $K$ is  transformed  by applying rewriting rules like
\eqref{eq:rew} together with standard distributive laws, in order to
get $K_1, \dots, K_n$ which are primitive, differentiated and whose
disjunction is $A^E_I$-equivalent to $K$.

By repeatedly applying the above rules, it is possible to build a tree
whose nodes are labelled by $\exists^I$-formulae describing the set of
backward reachable states.  Indeed, it is not difficult to see that
the disjunction of the $\exists^I$-formulae labelling all the nodes in
the (potentially infinite) tree is $A_I^E$-equivalent to
the (infinite) disjunction of the formulae $
BR^n(\tau,U)$, where $\tau:=\bigvee_{h=1}^m \tau_h$.  Indeed, there is
no need to fully expand our tree.  For example, it is useless to apply
the rule $\mathsf{PreImg}$ to a node $\nu$ labelled by an
$\exists^I$-formula which is $A_I^E$-unsatisfiable as all the formulae
labelling nodes in the sub-tree rooted at $\nu$ will also be
$A_I^E$-unsatisfiable.  This observation can be formalized by the
following rule closing a branch in the tree (we mark the terminal node
of a closed branch by $\times$):
\begin{displaymath}
  \infer[\mathsf{NotAppl}]{ \times
}{K\mbox{ ~~[$K$ is $A_I^E$-unsatisfiable}]}
\end{displaymath}
This rule is effective since $\exists^I$-formulae are a subset of
$\exists^{A,I}\forall^I$-sentences and the $A_I^E$-sa\-tis\-fia\-bi\-li\-ty of
these formulae is decidable by Theorem~\ref{th:decidability}.

According to procedure $\mathsf{BReach}$, there are two more
situations in which we can stop expanding a branch in the tree.  One
terminates the branch because of the safety test (cf.\ line 3 of
Figure~\ref{fig:reach-algo} (a)):
\begin{displaymath}
  \infer[\mathsf{Safety}]{\mathsf{UnSafe} }
        {K ~~[I\wedge K\mbox{ is $A_I^E$-satisfiable}]}
\end{displaymath}
Interestingly, if we label with $\tau_h$ the edge connecting a node
labeled with $K$ with that labeled with $Pre(\tau_h, K)$ when applying
rule $\mathsf{PreImg}$, then the transitions $\tau_{h_1}, ...,
\tau_{h_e}$ labelling the edges in the branch terminated by
$\mathsf{UnSafe}$ (from the leaf node to the root node) give a
\emph{error trace}, i.e.\ a sequence of transitions leading the
array-based system from a state satisfying $I$ to one satisfying $U$.
Again, rule $\mathsf{UnSafe}$ is effective since $I\wedge K$ is
equivalent to an $\exists^{A,I}\forall^I$-sentence and its
$A_I^E$-satisfiability is decidable by Theorem~\ref{th:decidability}.
The other situation in which one can close a branch corresponds to the
fix-point test (cf.\ line 2 of Figure~\ref{fig:reach-algo} (a))
\begin{displaymath}
  \infer[\mathsf{FixPoint}]{\times }%
  {K~~[K\wedge \bigwedge \{\neg K'\vert K'\preceq K\}\mbox{ is $A_I^E$-unsatisfiable}]}
\end{displaymath}
where $K'\preceq K$ means that $K'$ is a primitive differentiated
$\exists^I$-formula labeling a node preceding the node labeling $K$
(nodes can be ordered according to the strategy for expanding the
tree).
Once more, this rule is effective since $K\wedge \bigwedge \{\neg
K'\vert K'\preceq K\}$ can be straightforwardly transformed into an
$\exists^{A,I}\forall^I$-sentence
whose $A_I^E$-satisfiability is decidable by
Theorem~\ref{th:decidability}.

As mentioned above, from the implementation point of view, clever
heuristics are nee\-ded to reduce the instances that have to be
generated for the satisfiability test of Theorem~\ref{th:decidability}
and to trivialize the recognition of the unsatisfiable premise of the
rule $\mathsf{NotAppl}$.  In addition, the satisfiability checks
required by Rule $\mathsf{FixPoint}$ should be performed
\emph{incrementally} by considering formulae in reverse chronological
order (i.e.\ the pre-images generated later are added first and those
generated early are possibly added later).  The interested reader is
pointed to~\cite{afm09} for a more exhaustive discussion about these
issues.

A final remark is in order.  One may think that the main difference
between our framework to model checking infinite state systems and
other approaches lies just in the technology used for constraint
solving; our system, \textsc{mcmt}, uses an SMT solver while other
tools (such as \textsc{pfs}~\cite{tacas06}) use efficient dedicated
algorithms.
This is only part of the story.  In fact, \textsc{mcmt} usually
produces many fewer nodes while visiting the tree whose nodes are
labelled with the formulae representing sets of backward reachable
states, compared to other systems.  This is so because our approach is
fully declarative and \textsc{mcmt} \emph{symbolically represents also
  the topology of the system}, not only the data.  The other model
checkers use constraints only to represent the data manipulated by the
system while the topology is encoded by using an \emph{ad hoc} data
structure, which usually requires more effort to represent sets of
states.  To illustrate this fundamental aspect, we consider a simple
(but tricky) example.
\begin{example}
  \em Let $T_I$ be the theory of linear orders and $T_E$ be an
  enumerated data-type with 15 constants denoted by the numerals from
  1 to 15.  Consider the following parametrized system having 7
  transitions and 15 control locations:
  \begin{enumerate}[$\bullet$]
  \item the first transition allows process $i$ to move from location
    1 to location 2 provided there is a process $j$ to the right of
    $i$ (i.e.\ $i<j$ holds) which is on location 9;
  \item similarly, the second transition allows process $i$ to move
    from location 2 to location 3 provided there is a process $j$ to
    the right of $i$ which is on location 10, and so on (the last
    transition allows process $i$ to move from location 7 to location
    8 provided there is a process $j$ to the right of $i$ which is on
    location 15).
  \end{enumerate}
  Initially, all processes are in location 1.  We consider the
  following safety problem: is it possible for a process to reach
  location 8? The answer is obviously no.

  \textsc{mcmt} solves the problem by generating 7 nodes in about 0.02
  seconds on a standard laptop.
  On the contrary, \textsc{pfs} takes about 4 minutes on the same
  computer and generates thousands of constraints.  Why is this so?
  The point is that tools like \textsc{pfs}
  do not symbolically represent the system topology and need to
  specify the relative positions of all the involved processes.  In
  contrast, \textsc{mcmt} can handle partial information like ``there
  exist 7 processes to the right of $i$ whose locations are from 9 to
  15, respectively'' just because it is based on a deductive engine,
  i.e.\ the SMT solver.

  Thus, \textsc{mcmt} represents a fully declarative approach to
  infinite state model checking that, when coupled with appropriate
  heuristics, should pave the way to the verification of systems with
  more and more complex topologies that other tools cannot
  handle. \qed
\end{example}

\section{Termination: a semantic analysis}
\label{sec:term}
Termination of our tableaux calculus (and of the algorithm of
Figure~\ref{fig:reach-algo} (a)) is not guaranteed in general as
safety problems are undecidable even when the data structures
manipulated by the system are simple (Sec.~\ref{subsec:undec}).
However, it is possible to identify sufficient conditions to obtain
termination (Sec.~\ref{subsec:dec}) which are useful in some
applications.  We begin by introducing an important definition to be
used in this and the following section.

\subsection{Configurations}
\label{subsec:configurations}
A \emph{state} of our array-based system $\cSi=(a,I,\tau)$ is a pair
$(s, \cM)$, where $\cM$ is a model of $A_I^E$ and
$s\in\texttt{ARRAY}^{\cM}$.  By recalling the last part of the
statement of Theorem~\ref{th:decidability}, we can focus on a
sub-class of the states (often called configurations) restricting
$\cM$ to be a finite index model.  Formally, an
\emph{$A_I^E$-configuration} (or, simply, a \emph{configuration}) is a
pair $(s, \cM)$ such that $s$ is an array of a finite index model
$\cM$ of $A_I^E$ ($\cM$ is omitted whenever it is clear from the
context).  We associate a $\Sigma_I$-structure $s_I$ and a
$\Sigma_E$-structure $s_E$ with an $A_I^E$-configuration $(s, \cM)$ as
follows: the $\Sigma_I$-structure $s_I$ is simply the finite structure
$\cM_I$, whereas $s_E$ is the smallest $\Sigma_E$-substructure of
$\cM_E$ containing the image of $s$ (in other words, if ${\tt
  INDEX}^{\cM}=\{c_1, \dots, c_k\}$, then $s_E$ is the smallest
$\Sigma_E$-substructure containing $\{s(c_1), \dots, s(c_k)\}$).

\subsection{Undecidability of the safety problem}
\label{subsec:undec}
In the general case, safety problems are undecidable. The result is
not surprising and we report it in the following for the sake of
completeness.
\begin{thm}
  \label{th:undecidability}
  The problem: ``given an $\exists^I$-formula $U$, deciding whether
  the array-based system $\cSi$ is safe w.r.t. $U$'' is undecidable
  (even if $T_E$ is locally finite).
\end{thm}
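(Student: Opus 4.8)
The plan is to prove undecidability by a reduction from an undecidable problem about an abstract machine, the natural choice being the halting problem for deterministic Turing machines (equivalently, two-counter Minsky machines). The idea is to choose the index and element theories so that an array encodes a machine configuration and a single application of a transition of the form~\eqref{eq:transition2} simulates exactly one machine step; producing from each machine $M$ an array-based system $\cSi$ (together with a fixed unsafety formula $U$) then makes safety of $\cSi$ with respect to $U$ equivalent to non-halting of $M$, from which undecidability follows.

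Concretely, I would fix $T_I$ to be (a fragment of) the theory of the integers in the signature consisting of a constant $0$, a unary successor $\mathit{succ}$, and the order $<$, so that the index sort is discretely ordered and the immediate successor and predecessor of a cell are available (definable from $<$, or taken as primitive). Notice that this $T_I$ is neither locally finite nor closed under substructures, so Hypothesis (I) of Theorem~\ref{th:decidability} fails --- exactly as it must, since otherwise safety would be decidable. I would fix $T_E$ to be the enumerated data-type theory whose constants name the finitely many possible cell contents, namely a tape symbol optionally decorated with the head together with a control state; since the alphabet and the control are finite, $T_E$ is an enumerated data-type and hence \emph{locally finite}, matching the strengthening in the statement. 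The tape is then modelled by the array $a \colon \mathtt{INDEX} \to \mathtt{ELEM}$.

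With these ingredients the encoding is routine. The initial \emph{$\forall^I$-formula} $I(a)$ states that cell $0$ carries the head in the initial control state reading a blank and that every other cell is blank; this pins down a unique configuration and, in particular, a unique head. For each instruction $(q,\sigma)\mapsto(\sigma',q',D)$ of the machine I would introduce one transition $\exists i\,(a[i]=(\sigma,\mathit{head},q)\wedge \forall j\,a'[j]=F_G(i,a[i],j,a[j]))$ of the form~\eqref{eq:transition2}, where the case-defined function $F_G$ rewrites cell $i$ to $(\sigma',\mathit{nohead})$, puts the head in state $q'$ on the neighbour $j$ determined by the direction $D$ (using the $\Sigma_I$-atoms $j=\mathit{succ}(i)$ or $i=\mathit{succ}(j)$ as the relevant case conditions), and leaves every other cell unchanged. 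Finally I take the \emph{$\exists^I$-formula} $U(a):=\exists i\,(a[i]\text{ carries the head in a halting state})$. By construction the guards are mutually exclusive and a genuine run applies at most one transition at each step, so runs of $\cSi$ are in bijection with computations of the deterministic machine.

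It then remains to verify faithfulness, namely that the formula~\eqref{eq:unsafe} is $A_I^E$-satisfiable for some $n$ if and only if the machine halts. The ``only if'' direction is immediate: a halting computation of length $n$ yields a satisfying assignment in the standard integer model. For the converse one reads a machine configuration off each array $a_k$ by inspecting the finitely many cells reachable from $0$ by at most $n$ successor/predecessor steps; the initial formula forces the configuration read from $a_0$ to be the initial one, each conjunct $\tau(a_k,a_{k+1})$ forces the configuration of $a_{k+1}$ to be the unique machine-successor of that of $a_k$, and $U(a_n)$ forces the last configuration to be halting. I expect the only delicate point to be this last step in the presence of \emph{nonstandard} models of $T_I$: a model of $A_I^E$ may have a nonstandard index domain, so one must argue that only the standard initial segment around $0$ is relevant. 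This is true because the run has fixed finite length $n$ and each transition touches only a cell and one of its immediate neighbours, so all inspected cells lie at standard distance from $0$ and the extracted configurations are genuine, finite machine configurations. Since the halting problem is undecidable, undecidability of the safety problem follows, and $T_E$ can indeed be kept locally finite.
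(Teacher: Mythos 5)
Your reduction has the right flavour (unary simulation of a machine inside an array-based system), but it breaks the hypotheses under which the theorem is stated, and so it proves a different, weaker statement. The theorem lives under the paper's standing Assumption, which includes Hypothesis (I) of Theorem~\ref{th:decidability}: $T_I$ must be locally finite and closed under substructures. The whole content of the result is that safety is undecidable \emph{even when every individual satisfiability check performed by backward reachability is decidable} --- i.e., the undecidability comes from the unbounded number of iterations, not from non-effectiveness of the checks. Your $T_I$ (integers with $0$, successor and $<$) destroys exactly this, and your justification for doing so (``exactly as it must, since otherwise safety would be decidable'') is wrong: Hypotheses (I)+(II) only give decidability of single $\exists^{A,I}\forall^I$-satisfiability queries, not of the safety problem, whose decidability would additionally require a termination argument such as the wqo condition of Corollary~\ref{coro:termination}. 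Indeed, your construction is essentially the one the paper uses in the Appendix for a \emph{different} theorem (undecidability of unrestricted satisfiability when Hypothesis (I) is dropped), not for Theorem~\ref{th:undecidability}.

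The missing idea is how to simulate counters/tape positions while keeping $T_I$ locally finite and closed under substructures. The paper does this by trading the successor \emph{function} for a binary \emph{relation} $S$ in a purely relational signature with two constants $o,o'$, axiomatized by universal sentences (plus the ground atom $S(o,o')$) saying that $S$ is an injective partial function with $o$ in its domain but not in its range; relational signatures give local finiteness, and universal axioms give closure under substructures. Register values are then encoded in unary along an $S$-chain $o\to_S o'\to_S i_2\to_S\cdots$, with $T_E$ the enumerated data-type over $Q\times\{0,1\}\times\{0,1\}$, and one fixed Minsky machine with undecidable configuration reachability yields a \emph{single} system $\cSi_{\bf P}$ that is unsafe w.r.t.\ $U_{q,m,n}$ iff $(q,m,n)$ is reachable. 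Note also that the theorem fixes $\cSi$ and lets only $U$ vary, whereas your reduction fixes $U$ and varies the system with the machine; that defect is easy to repair (reduce from configuration reachability of a fixed machine, as the paper does), but the choice of $T_I$ is not, and as written your argument does not establish the stated theorem.
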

The proof consists in a rather straightforward reduction from the
reachability problem of Minsky machines.  See Appendix A for details.

\subsection{Decidability of the safety problem: sufficient conditions}
\label{subsec:dec}

A specific feature of array-based systems is that a \emph{partial
  ordering among configurations} can be defined.  This is the key
ingredient in establishing the termination of the backward
reachability procedure (and thus the decidability of the related
safety problem) and characterizing the completeness of invariant
synthesis strategies (as it will be shown in Section~\ref{sec:inv}
below).

A \emph{pre-order} $(P, \leq)$ is a set endowed with a reflexive and
transitive relation; an \emph{upset}, also called an \emph{upward
  closed set}, of such a pre-order is a subset $U\subseteq P$ such
that ($p\in U$ and $p\leq q$ imply $q\in U$). An upset $U$ is
\emph{finitely generated} iff it is a finite union of cones, where a
\emph{cone} is an upset of the form $\uparrow\!p=\{ q\in P \mid p\leq
q\}$ for some $p\in P$.  Two elements $p,q\in P$ are
\emph{incomparable} (\emph{equivalent}) if neither (both) $p\leq q$
nor (and) $q\leq p$.

We are ready to define a \emph{pre-order over configurations}.  Let $s,
s'$ be configurations: we say that $s' \leq s$ holds iff there are a
$\Sigma_I$-embedding $\mu:s'_I\longrightarrow s_I$ and a
$\Sigma_E$-embedding $\nu:s'_E\longrightarrow s_E$ such that the
set-theoretical compositions of $\mu$ with $s$ and of $s'$ with $\nu$
are equal.  This is depicted in the following diagram:
\begin{center}
 \resetparms
 \setsqparms[+2`+1`+1`+2;500`500]
 \square[s'_I` s_I`s'_E`s_E;\mu`s'`s`\nu]
\end{center}
In case $\mu$ and $\nu$ are both inclusions, we say that $s'$ is a
\emph{sub-configuration} of $s$.

Finitely generated upsets of configurations and $\exists^I$-formulae
can be used interchangeably under suitable assumptions.  Let $K(a)$ be
an $\exists^I$-formula; we let $\mywidehat{K} :=\{ (s,\cM)\mid
\cM\models K(s)\}$.
\begin{proposition}
  \label{prop:conf}
  For every $\exists^I$-formula $K(a)$, the set $\mywidehat{K}$ is
  upward closed.  For every $\exists^I$-formulae $K_1, K_2$, we have
  that $\mywidehat{K_1}\subseteq \mywidehat{K_2}$ iff $A_I^E\models
  K_1\to K_2$.
\end{proposition}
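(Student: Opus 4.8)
The plan is to prove the two assertions separately. For the upward-closure claim I will rely on the fact that embeddings preserve the truth of quantifier-free formulae, together with the commuting square that defines the pre-order on configurations; for the equivalence, the only nontrivial implication will be obtained by recognising an auxiliary sentence as an $\exists^{A,I}\forall^I$-sentence and invoking the finite-index-model clause of Theorem~\ref{th:decidability}.

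First I would show that $\mywidehat{K}$ is upward closed. Write $K(a)=\exists\ui\,\phi(\ui,a[\ui])$ with $\phi$ quantifier-free. Assume $(s',\cM')\in\mywidehat{K}$ and $(s',\cM')\le(s,\cM)$, witnessed by a $\Sigma_I$-embedding $\mu:s'_I\lra s_I$ and a $\Sigma_E$-embedding $\nu:s'_E\lra s_E$ with $s\circ\mu=\nu\circ s'$. From $\cM'\models K(s')$ I pick a tuple $\ua'$ of indices of $s'_I$ such that $\cM'\models\phi(\ua',s'(\ua'))$. The key observation is that, since $\Sigma_I$ and $\Sigma_E$ are over distinct sorts, $\phi$ is a Boolean combination of $\Sigma_I$-atoms in the index arguments and $\Sigma_E$-atoms in the element arguments; moreover the values $s'(\ua')$, and every $\Sigma_E$-term over them, lie in $s'_E$, so the $\Sigma_E$-atoms may be evaluated in $s'_E$ and the $\Sigma_I$-atoms in $s'_I=\cM'_I$. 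I then take $\mu(\ua')$ as the existential witness in $(s,\cM)$: because $\mu$ preserves and reflects $\Sigma_I$-atoms, $\nu$ preserves and reflects $\Sigma_E$-atoms, and $s(\mu(\ua'))=\nu(s'(\ua'))$ by commutativity of the square, each atom of $\phi$ retains its truth value. Hence $\cM\models\phi(\mu(\ua'),s(\mu(\ua')))$, so $\cM\models K(s)$ and $(s,\cM)\in\mywidehat{K}$.

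For the equivalence, the implication from $A_I^E\models K_1\to K_2$ to $\mywidehat{K_1}\subseteq\mywidehat{K_2}$ is immediate, since every configuration $(s,\cM)$ is in particular a model of $A_I^E$ with $a$ assigned to $s$. For the converse I argue contrapositively. Writing $K_1=\exists\ui\,\phi_1$ and $K_2=\exists\uj\,\phi_2$, the sentence $\exists a\,(K_1(a)\wedge\neg K_2(a))$ expands to $\exists a\,\exists\ui\,\forall\uj\,(\phi_1(\ui,a[\ui])\wedge\neg\phi_2(\uj,a[\uj]))$, which is exactly an $\exists^{A,I}\forall^I$-sentence. Thus if $A_I^E\not\models K_1\to K_2$, this sentence is $A_I^E$-satisfiable, and by the last clause of Theorem~\ref{th:decidability} it is satisfiable in a finite index model; that is, there is a configuration $(s,\cM)$ with $\cM\models K_1(s)\wedge\neg K_2(s)$. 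Such a configuration witnesses $\mywidehat{K_1}\not\subseteq\mywidehat{K_2}$.

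I expect the main obstacle to be precisely this converse implication: a priori $\mywidehat{K_1}\subseteq\mywidehat{K_2}$ constrains only finite index configurations, whereas $A_I^E\models K_1\to K_2$ ranges over all models, including those with infinite index sort. The step that bridges this gap is the identification of $\exists a\,(K_1\wedge\neg K_2)$ as an $\exists^{A,I}\forall^I$-sentence together with the finite-index-model property of Theorem~\ref{th:decidability}, which is exactly where hypotheses (I) and (II) are used. By comparison, the upward-closure argument is routine once the splitting of $\phi$ by sorts and the role of the commuting square are made explicit.
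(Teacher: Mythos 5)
Your proof is correct and follows essentially the same route as the paper's: upward closure is obtained by transferring the existential witness along the commuting square, using that embeddings preserve and reflect atoms (the paper phrases this via a DNF reduction splitting $\phi$ into $\phi_I\wedge\phi_E$, but the substance is identical), and the nontrivial direction of the equivalence is exactly the paper's argument, namely recognizing $\exists a\,(K_1\wedge\neg K_2)$ as an $\exists^{A,I}\forall^I$-sentence and applying the finite-index-model clause of Theorem~\ref{th:decidability}.
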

\begin{proof}
  Let us first show that the set $\mywidehat{K}$ is upward closed. By
  using disjunctive normal forms and distributing existential
  quantifiers over disjunctions, we can suppose---without loss of
  generality--that $K(a)$ is of the form $\exists \ui \phi(\ui,
  a[\ui])$, where $\phi$ is a conjunction of $\Sigma_I\cup
  \Sigma_E$-literals (the general case follows from this one because a
  union of upsets is an upset). If we also separate $\Sigma_I$- and
  $\Sigma_E$-literals, we can suppose that $\phi(\ui, a[\ui])$ is of
  the kind $\phi_I(\ui)\wedge \phi_E(a[\ui])$, where $\phi_I$ is a
  conjunction of $\Sigma_I$-literals and $\phi_E$ is a conjunction of
  $\Sigma_E$-literals. Suppose now that $(s, \cM)$ and $(t, \cN)$ are
  configurations such that $s\leq t$ and $\cM\models K(s)$: we wish to
  prove that $\cN\models K(t)$. From $\cM\models K(s)$, it follows
  that there are elements $\ui$ from ${\tt INDEX}^{\cM}$ such that
  $\cM\models \phi_I(\ui)\wedge\phi_E(s[\ui])$, i.e. such that
  $s_I\models \phi_I(\ui)$ and $s_E\models \phi_E(s(\ui))$ (to infer
  the latter, recall that the operations $a[\ui]$ are interpreted as
  functional applications in our models and also that truth of
  quantifier free formulae is preserved when considering
  substructures). Now $s\leq t$ says that there are embeddings
  $\mu:s_I\longrightarrow t_I$ and $\nu:s_E\longrightarrow t_E$ such
  that $\nu\circ s= t\circ \mu$. Since truth of quantifier free
  formulae is preserved when considering superstructures, we get
  $t_I\models\phi_I(\mu(\ui))$ and $t_E\models\phi_E(\nu(s(\ui))$
  (that is, $t_E\models\phi_E(t(\mu(\ui)))$) and also $\cN\models
  \phi_I(\mu(\ui))\wedge \phi_E(t[\mu(\ui)])$, which implies
  $\cN\models K(t)$, as desired.

  Let us now prove the second claim of the Proposition. That
  $A^E_I\models K_1\to K_2$ implies $\mywidehat{K_1}\subseteq
  \mywidehat{K_2}$ is trivial. Suppose conversely that
  $A^E_I\not\models K_1\to K_2$, which means that $K_1(a)\wedge \neg
  K_2(a)$ is $A^E_I$-satisfiable: since this implies that
  $K_1(a)\wedge \neg K_2(a)$ is satisfiable in a finite index model of
  $A^E_I$ (see Theorem~\ref{th:decidability}), we immediately get that
  $\mywidehat{K_1}\not\subseteq \mywidehat{K_2}$.
\end{proof}
Before continuing, we recall the standard model-theoretic notion of
Robinson diagrams and some related results~ (see, e.g.,~\cite{CK} for
more details).  Let $\cM=(M, \int)$ be a $\Sigma$-structure which is
generated by $G\subseteq M$. Let us take a free variable $x_g$ for
every $g\in G$ and call $G_x$ the set $\{x_g\mid g\in
G\}$.\footnote{
  One may wonder if assuming ``countably many variables'' is too
  restrictive since $G$ may be uncountable.  There are two ways to
  avoid this problem.  First, we can use free constants instead of
  variables (this is the standard solution).  Second, we realize that
  we do not need to consider---in this paper---the case when $G$ is
  uncountable since in all our applications, $G$ is finite.} The
\emph{$\Sigma_G$-diagram $\delta_{\cM}(G)$ of $\cM$} is the set of all
$\Sigma(G_x)$-literals $L$ such $\cM, {\tt a}\models L$, where ${\tt
  a}$ is the assignment mapping $x_g$ to $g$.

The following celebrated result~\cite{CK} is simple, but nevertheless
very powerful and it will be used in the rest of the paper.
\begin{lemma}[Robinson Diagram Lemma]
  \label{lem:robinson}
  Let $\cM=(M, \int)$ be a $\Sigma$-structu\-re which is generated by
  $G\subseteq M$ and $\cN=(N, \mathcal{J})$ be another
  $\Sigma$-structure.  Then, there is a bijective correspondence given
  by
  \begin{equation}\label{eq:diag}
    \mu(g) = {\tt a}(x_g)
  \end{equation}
  (for all $g\in G$) between assignments ${\tt a}$ on $N$ such that
  $\cN, {\tt a}\models \delta_{\cM}(G)$ and $\Sigma$-embeddings
  $\mu:\cM\longrightarrow \cN$.
\end{lemma}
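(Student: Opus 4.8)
The plan is to exhibit the correspondence explicitly and then verify that the two natural constructions are inverse to each other. In one direction, given a $\Sigma$-embedding $\mu:\cM\lra\cN$, I would read off the assignment $\mathtt{a}$ defined by $\mathtt{a}(x_g):=\mu(g)$ for $g\in G$; since an embedding preserves and reflects every atom (and hence every literal), and each literal $L\in\delta_{\cM}(G)$ holds in $\cM$ under $\mathtt{b}_0:x_g\mapsto g$ by definition of the diagram, it follows at once that $\cN,\mathtt{a}\models\delta_{\cM}(G)$. In the other direction, given an assignment $\mathtt{a}$ on $N$ with $\cN,\mathtt{a}\models\delta_{\cM}(G)$, I would use that $\cM$ is generated by $G$: every $b\in M$ can be written as $t^{\cM}(\ua)$ for some $\Sigma$-term $t$ and some finite tuple $\ua$ from $G$, and I set $\mu(b):=t^{\cN}(\mathtt{a}(x_{\ua}))$, i.e.\ the value in $\cN$ of the term $t$ when its variables $x_{\ua}$ are interpreted according to $\mathtt{a}$.

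The step I expect to be the crux is the \emph{well-definedness} of this $\mu$, since the same element $b$ may be named by different generating terms. Suppose $t^{\cM}(\ua)=s^{\cM}(\ub)$. Then the equation $t(x_{\ua})=s(x_{\ub})$ is a $\Sigma(G_x)$-literal that is true in $\cM$ under $\mathtt{b}_0$, hence it belongs to $\delta_{\cM}(G)$; as $\cN,\mathtt{a}$ satisfies the whole diagram, we get $t^{\cN}(\mathtt{a}(x_{\ua}))=s^{\cN}(\mathtt{a}(x_{\ub}))$, so the two candidate values for $\mu(b)$ agree. This is exactly the point where the presence of \emph{term equalities} in the diagram is indispensable.

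Once $\mu$ is well defined, checking that it is an embedding is routine and uses only that, for every atomic $\Sigma(G_x)$-formula in the $x_g$, either it or its negation lies in $\delta_{\cM}(G)$ (again by definition of the diagram) and is therefore decided the same way in $\cN$ under $\mathtt{a}$. Injectivity follows from the negated equalities: if $b=t^{\cM}(\ua)\neq s^{\cM}(\ub)=b'$ then $t(x_{\ua})\neq s(x_{\ub})$ is in the diagram, so $\mu(b)\neq\mu(b')$. The defining clause extends by induction on terms to $\mu(t^{\cM}(\ua))=t^{\cN}(\mu(\ua))$ for all terms $t$, which yields the homomorphism condition for function symbols; and for a predicate $P$, the atom $P(t_1(x_{\ua_1}),\dots,t_n(x_{\ua_n}))$ (or its negation) sits in the diagram, giving both preservation and reflection of $P$ simultaneously.

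Finally I would observe that the two constructions are mutually inverse, which gives the claimed bijection. Starting from an embedding $\mu$, forming $\mathtt{a}(x_g)=\mu(g)$ and then rebuilding the map sends $b=t^{\cM}(\ua)$ to $t^{\cN}(\mu(\ua))=\mu(t^{\cM}(\ua))=\mu(b)$ by the homomorphism identity, so we recover $\mu$; starting from $\mathtt{a}$, building $\mu$ and then reading off $x_g\mapsto\mu(g)=\mathtt{a}(x_g)$ recovers $\mathtt{a}$. Since both the assignment and the embedding are completely determined by the common data of the values $\mu(g)$ on the generators, the correspondence~\eqref{eq:diag} is a bijection as stated.
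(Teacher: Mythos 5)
Your proof is correct and complete, but there is no in-paper argument to compare it against: the paper states Lemma~\ref{lem:robinson} as a classical result, citing~\cite{CK}, and follows it only with the remark that an embedding is uniquely determined by its values on the generators $G$. What you have written is exactly the standard proof of the diagram lemma that the paper implicitly relies on, and you isolate the genuinely load-bearing points: well-definedness of $\mu$ via the term equalities contained in $\delta_{\cM}(G)$, injectivity via the negated equalities, preservation \emph{and} reflection of predicates via the fact that the diagram decides every atom (it contains each atom or its negation), and the homomorphism condition via the inductive identity $\mu(t^{\cM}(\ua))=t^{\cN}(\mu(\ua))$. Your closing observation that both constructions are determined by the common data $\mu(g)=\mathtt{a}(x_g)$, so that the correspondence is bijective, is precisely the content of the paper's remark following the lemma. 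The only implicit convention worth making explicit is that the bijection concerns assignments only through their restriction to the variables $G_x$; this is also tacit in the paper's statement.
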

In other words, \eqref{eq:diag} can be used to define $\mu$ from ${\tt
  a}$ and conversely. Notice that an embedding $\mu:\cM
\longrightarrow \cN$ is uniquely determined, in case it exists, by the
image of the set of generators $G$: this is because the fact that $G$
generates $\cM$ implies (and is equivalent to) the fact that every
$c\in M$ is of the kind $t^{\int}(\ug)$, for some term $t$ and some
$\ug$ from $G$.

The diagram $\delta_{\cM}(G)$ usually contains infinitely many
literals, however there are important cases where we can keep it
finite.
\begin{lemma}
  Suppose that $\cM$ is a $\Sigma$-structure (where $\Sigma$ is a
  finite signature), whose support $M$ is finite; then for every set
  $G\subseteq M$ of generators, there are finitely many literals from
  $\delta_{\cM}(G)$ having all remaining literals of $\delta_{\cM}(G)$
  as a logical consequence.
\end{lemma}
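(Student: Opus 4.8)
The plan is to exhibit an explicit finite subset $D_0\subseteq\delta_{\cM}(G)$ that pins down, up to embedding into $\cM$, the value of every $\Sigma(G_x)$-term, and then to verify that each literal of $\delta_{\cM}(G)$ follows from $D_0$ by pure equality (congruence) reasoning. Note first that $G_x$ is finite, since $G\subseteq M$ and $M$ is finite; nevertheless $\delta_{\cM}(G)$ is typically infinite, because a single function symbol of positive arity already produces infinitely many $\Sigma(G_x)$-terms (such as $x_g,\,f(x_g),\,f(f(x_g)),\dots$), each contributing literals. Since $G$ generates $\cM$, every $m\in M$ is the value under $\mathtt{a}$ of some $\Sigma(G_x)$-term; the first step is to fix, for each $m\in M$, one such \emph{representative} term $r_m$, adopting the convention $r_g:=x_g$ for $g\in G$.

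Next I would let $D_0$ consist of the following literals, each of which holds in $\cM$ under $\mathtt{a}$ and therefore lies in $\delta_{\cM}(G)$: (a) for every function symbol $f\in\Sigma$ of arity $k\ge 0$ and every tuple $(m_1,\dots,m_k)\in M^k$, the equality $f(r_{m_1},\dots,r_{m_k})=r_{m_0}$, where $m_0:=f^{\cM}(m_1,\dots,m_k)$ (the case $k=0$ recording the values of constants); (b) for all $m\ne m'$ in $M$, the inequality $r_m\ne r_{m'}$; and (c) for every predicate symbol $P\in\Sigma$ of arity $k$ and every tuple $(m_1,\dots,m_k)\in M^k$, the atom $P(r_{m_1},\dots,r_{m_k})$ or its negation, according to whether $(m_1,\dots,m_k)\in P^{\cM}$. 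As $\Sigma$ and $M$ are finite, $D_0$ is finite.

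The crux is the claim that $D_0\models t=r_m$ for every $\Sigma(G_x)$-term $t$ whose value under $\mathtt{a}$ is $m$, which I would prove by induction on $t$: a variable gives $t=x_g=r_g$, a constant is handled by the $k=0$ instance of (a), and for $t=f(t_1,\dots,t_k)$ with each $t_i$ of value $m_i$ the induction hypothesis gives $D_0\models t_i=r_{m_i}$, so congruence combined with the table equality in (a) yields $D_0\models t=r_{f^{\cM}(m_1,\dots,m_k)}=r_m$. Granting this, every literal of $\delta_{\cM}(G)$ is a consequence of $D_0$: an equality $t=t'$ true in $\cM$ has both sides reducing to the same representative, an inequality $t\ne t'$ reduces to representatives of distinct elements separated by (b), and a (possibly negated) atom $P(t_1,\dots,t_k)$ reduces by congruence to the matching literal of type (c). The containment $D_0\subseteq\delta_{\cM}(G)$ and its finiteness are then immediate. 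I expect the main obstacle to be exactly the term-reduction induction together with the disciplined use of the equality/congruence axioms; to keep it clean I would state at the outset that logical consequence is taken in first-order logic with equality.
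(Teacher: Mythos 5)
Your proof is correct and follows essentially the same route as the paper's: fix one representative $\Sigma(G_x)$-term per element of the finite support, collect the finitely many literals recording the function tables, predicate tables, and distinctness of representatives, and then show by induction on term structure that every $\Sigma(G_x)$-term is provably equal to its representative, from which all remaining diagram literals follow by congruence reasoning. Your write-up is merely a little more explicit than the paper's (e.g.\ in separately listing the inequalities $r_m\ne r_{m'}$ and in spelling out the congruence steps), but the decomposition and the key induction are the same.
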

\begin{proof} %
  Choose $\Sigma(G_x)$-terms $t_1, \dots, t_n$ such that (under the
  assignment ${\tt a}:x_g\mapsto g$), $M$ is equal to the set of the
  elements assigned by ${\tt a}$ to $t_1, \dots, t_n$ (this is
  possible because the elements of $G$ are generators and $M$ is
  finite); we also include the $x_g$ varying $g\in G$ among the $t_1,
  \dots, t_n$. We can get the desired finite set $S$ of literals by
  taking the set of \emph{atoms} of the form
  \begin{equation*}
    R(t_{i_1}, \dots, t_{i_k}), \quad f(t_{i_1}, \dots, t_{i_k})= t_{i_{k+1}}
  \end{equation*}
  (as well as their negations), which are true in $\cM$ under the
  assignment ${\tt a}$. In fact, modulo $S$, it is easy to see by
  induction on
  the structure of the term $u$ that every $\Sigma(G_x)$-term $u$ is
  equal to some $t_i$; it follows that every literal from
  $\delta_{\cM}(G)$ is a logical consequence of $S$.
\end{proof}
Whenever the conditions of the above Lemma are true, we can take a
finite conjunction and treat $\delta_{\cM}(G)$ as a single formula:
notice that we are allowed to do so whenever $G$ is finite and $\cM$
is a model of a locally finite theory.
\begin{proposition}%
  \label{prop:conf1}
  Let $T_E$ be locally finite.  It is possible to effectively associate
  \begin{enumerate}[{\rm (i)}]
  \item\label{ite:conf1_1} an $\exists^I$-formula $K_{s}$ with every
    $A^E_I$-configuration $(s, \cM)$ such that $\mywidehat{K_{s}} =
    \uparrow\! s$;
  \item\label{ite:conf1_2} a finite set $\{s_1, \dots, s_n\}$ of
    $A^E_I$-configurations with every $\exists^I$-formula $K$ such
    that $K$ is $A^E_I$-equivalent to $K_{s_1}\vee \cdots\vee
    K_{s_n}$.
  \end{enumerate}
  As a consequence of~\eqref{ite:conf1_1} and~\eqref{ite:conf1_2},
  finitely generated upsets of $A^E_I$-configurations coincide with
  sets of $A^E_I$-con\-fi\-gu\-ra\-tions of the kind $\mywidehat{K}$,
  for some $\exists^I$-formula $K$.
\end{proposition}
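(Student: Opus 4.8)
The plan is to prove the two assertions separately, using the Robinson Diagram Lemma (Lemma~\ref{lem:robinson}) as the main tool for~\eqref{ite:conf1_1} and a minimal-sub-configuration argument for~\eqref{ite:conf1_2}. For~\eqref{ite:conf1_1}, I fix a configuration $(s,\cM)$ and write $\texttt{INDEX}^{\cM}=\{c_1,\dots,c_k\}$. Since $\cM$ is a finite index model, $s_I=\cM_I$ is finite; since $T_E$ is locally finite and $s_E$ is generated by the finite set $\{s(c_1),\dots,s(c_k)\}$, the structure $s_E$ is finite as well. Both signatures being finite, the preceding Lemma lets me replace the diagrams $\delta_{s_I}(\{c_1,\dots,c_k\})$ and $\delta_{s_E}(\{s(c_1),\dots,s(c_k)\})$ by finite conjunctions of literals, all effectively computable. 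Introducing one index variable $i_j$ for each $c_j$, I would let $\phi_I(i_1,\dots,i_k)$ be the finite $\Sigma_I$-diagram of $s_I$ and $\Delta_E(e_1,\dots,e_k)$ be the finite $\Sigma_E$-diagram of $s_E$ written with the generator $s(c_j)$ named by $e_j$ (so that a coincidence $s(c_j)=s(c_l)$ shows up as the literal $e_j=e_l$), and then set
\[
  K_s(a) := \exists i_1\cdots\exists i_k\,\bigl(\phi_I(i_1,\dots,i_k)\wedge \Delta_E(a[i_1],\dots,a[i_k])\bigr).
\]

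To verify $\mywidehat{K_s}=\;\uparrow\! s$, I would unfold: $(t,\cN)\models K_s$ iff there are $i_1,\dots,i_k\in\texttt{INDEX}^{\cN}$ with $t_I\models\phi_I(\ui)$ and $t_E\models\Delta_E(t[\ui])$, where truth in $t_E$ and in $\cN_E$ coincide because the values $t[i_j]$ lie in $t_E$ and the diagram is quantifier-free. By Lemma~\ref{lem:robinson} applied to $s_I$ this is exactly the existence of a $\Sigma_I$-embedding $\mu\colon s_I\to t_I$ with $\mu(c_j)=i_j$; applied to $s_E$ with the generator assignment $s(c_j)\mapsto t[i_j]$ (consistent precisely when $\Delta_E$ holds), it is exactly the existence of a $\Sigma_E$-embedding $\nu\colon s_E\to t_E$ with $\nu(s(c_j))=t[i_j]=t(\mu(c_j))$. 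Jointly these conditions say $\nu\circ s=t\circ\mu$, i.e.\ $s\le t$, so $\mywidehat{K_s}=\;\uparrow\! s$.

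For~\eqref{ite:conf1_2}, I would first reduce to the case where $K$ is a single primitive differentiated formula $\exists i_1\cdots\exists i_n\,(\phi_I(\ui)\wedge\phi_E(a[\ui]))$, since every $\exists^I$-formula is $A^E_I$-equivalent to a disjunction of such formulae and $\mywidehat{\,\cdot\,}$ turns disjunction into union. The core claim is that every configuration satisfying $K$ lies above a \emph{minimal} one of bounded size. Given $(t,\cN)\models K$ with distinct witnesses $b_1,\dots,b_n$, I would let $u_I$ be the substructure of $t_I$ generated by $\{b_1,\dots,b_n\}$, restrict $t$ to $\lvert u_I\rvert$, and let $u_E$ be the substructure of $t_E$ generated by the whole image $t[\lvert u_I\rvert]$; taking the same element model $\cN_E$, this data forms a genuine configuration $u$, because $u_I$ is a $T_I$-model by closure of $T_I$ under substructures (Hypothesis~(I)) and is finite by local finiteness, while $u_E$ is finite by local finiteness of $T_E$. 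The inclusions $u_I\hookrightarrow t_I$ and $u_E\hookrightarrow t_E$ witness $u\le t$, and since $\phi_I\wedge\phi_E$ is quantifier-free and the witnesses together with their values survive in $u$, we still have $u\models K$.

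Finally I would invoke local finiteness on both sides to bound everything: the number of $\Sigma_I$-representative terms for $n$ variables bounds $\lvert u_I\rvert$, and the number of $\Sigma_E$-representative terms for that many variables bounds $\lvert u_E\rvert$, so over the finite signatures there are only finitely many—effectively enumerable—isomorphism types of such minimal configurations; let $s_1,\dots,s_m$ be those satisfying $K$ (membership in $\mywidehat K$ is decidable by Theorem~\ref{th:decidability}). Since $\mywidehat K$ is upward closed (Proposition~\ref{prop:conf}), each $\uparrow\! s_r\subseteq\mywidehat K$, and by the minimality argument $\mywidehat K=\bigcup_{r}\uparrow\! s_r=\bigcup_r\mywidehat{K_{s_r}}$ using part~\eqref{ite:conf1_1}; Proposition~\ref{prop:conf} then upgrades this equality of upsets to the $A^E_I$-equivalence $K\equiv K_{s_1}\vee\cdots\vee K_{s_m}$, and the stated consequence about finitely generated upsets is immediate. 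I expect the main obstacle to be the bookkeeping in~\eqref{ite:conf1_2}: guaranteeing that the restricted data $u$ is a \emph{bona fide} configuration (which is exactly where closure under substructures enters) and that $u_E$ is generated by the image of \emph{all} of $\lvert u_I\rvert$ rather than only the witnesses, so that the inclusion really gives $u\le t$.
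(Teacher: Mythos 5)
Your proof is correct, and it splits naturally into a part that mirrors the paper and a part that does not. For~(i) you follow essentially the paper's own argument: both constructions produce the diagram formula~\eqref{eq:diagformula} from finite diagrams of $s_I$ and $s_E$ (finiteness coming from local finiteness and the lemma on finite diagrams), and both invoke Lemma~\ref{lem:robinson} twice to identify satisfying assignments with the pair of embeddings $(\mu,\nu)$ witnessing $s\leq t$; your explicit treatment of coincidences $s(c_j)=s(c_l)$ as equalities among the $e$-variables is exactly the convention the paper adopts in a later footnote. For~(ii), however, you take a genuinely different route. The paper argues syntactically: after naming the representative $\Sigma_I(\ui)$-terms by fresh existential variables, it uses the fact that in a locally finite theory every quantifier-free formula in $m$ variables is equivalent to a disjunction of diagram formulae (maximal consistent conjunctions of the finitely many literals), so that $K$ rewrites directly into $\bigvee_{\cA,\cB}\exists\ui\,(\delta_{\cA}(\ui)\wedge\delta_{\cB}(a_0[\ui]))$, each consistent disjunct being $K_a$ for a configuration $a$. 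You instead argue semantically: every $t\in\mywidehat{K}$ dominates a sub-configuration $u\in\mywidehat{K}$ whose index and element supports are bounded by the numbers of representative terms (this is precisely where closure of $T_I$ under substructures and local finiteness of both theories enter), and then you enumerate the finitely many isomorphism types of bounded-size configurations and keep those belonging to $\mywidehat{K}$. Both arguments are sound. The paper's rewriting yields the disjuncts and the $A^E_I$-equivalence $K\equiv K_{s_1}\vee\cdots\vee K_{s_n}$ by construction, staying entirely at the symbolic level; your route obtains the equivalence indirectly, from the equality of upsets via Proposition~\ref{prop:conf}, and needs two small supplements to be fully effective, which you gesture at but should make explicit: realizability of a candidate finite pair $(\cA,\cB)$ as a configuration is decided by $SMT(T_I)$- and $SMT(T_E)$-satisfiability of the respective diagrams (using closure under substructures), and membership $s_r\in\mywidehat{K}$ is not literally an instance of Theorem~\ref{th:decidability} but reduces to one, since upward closure and Proposition~\ref{prop:conf} give $s_r\in\mywidehat{K}$ iff $A^E_I\models K_{s_r}\to K$, i.e.\ iff the $\exists^{A,I}\forall^I$-sentence $K_{s_1}\wedge\neg K$ (with $s_1:=s_r$) is $A^E_I$-unsatisfiable. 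In compensation, your version makes the role of sub-configuration extraction explicit, which is exactly the mechanism the paper reuses later in Lemma~\ref{lem:min} and Theorem~\ref{th:inv}.
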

\begin{proof}
  Ad \eqref{ite:conf1_1}: we take $G, G'$ to be the support of $s_I$
  and the image of the support of $s_I$ under the function $s$,
  respectively; clearly $G$ is a set of generators for $s_I$ and $G'$
  is a set of generators for $s_E$. Let us call the set of variables
  $G_x, G'_x$ as $\ui:=\{i_1, \dots, i_n\}$ and $\ue:=\{e_1, \dots,
  e_n\}$, respectively. We take $K_s$ to be
  \begin{equation}\label{eq:diagformula}
    \exists \ui \, (\delta_{s_I}(\ui) \wedge \delta_{s_E}(a_0[\ui]))
  \end{equation}
  where $a_0$ is a fresh array variable (in other words, we take the
  diagrams $\delta_{s_I}(G), \delta_{s_E}(G')$, make in the latter the
  replacement $\ue \mapsto a_0[\ui]$, take conjunction, and quantify
  existentially over the $\ui$). For every configuration $(t, \cN)$,
  we have that $t\in \mywidehat{K_s}$ iff $\delta_{s_I}(\ui) \wedge
  \delta_{s_E}(a_0[\ui])$ is true in $\cN$ under some assignment ${\tt
    a}$ mapping the array variable $a_0$ to $t$, that is iff there are
  embeddings $\mu: s_I\longrightarrow t_I$ and $\nu:s_E\longrightarrow
  t_E$ as prescribed by Lemma~\ref{lem:robinson} (i.e.\ Robinson
  Diagram Lemma). These embeddings map the generators $G$ onto the
  indexes assigned to the $\ui$ by ${\tt a}$ and the generators $G'$
  to the elements assigned by ${\tt a}$ to the terms $a_0[\ui]$, which
  means precisely that $t\circ \mu= \nu\circ s$. Thus $t\in
  \mywidehat{K_s}$ is equivalent to $s\leq t$, as desired.

  Ad \eqref{ite:conf1_2}: modulo taking disjunctive normal forms, we
  can suppose that $K(a_0)$ is equal to $\exists \ui\,\bigvee_k
  (\phi_k(\ui)\wedge \psi_k(a_0[\ui]))$, where the $\phi_k$'s are
  $\Sigma_I$-formulae, the $\psi_k$'s are $\Sigma_E$-formulae, and
  $\ui:=i_1, \dots, i_m$.  Since $T_I$ is locally finite, we can
  assume that for every representative $\ui$-term $t$ there is an
  $i_s\in \ui$ such that $t=i_s$ is an $A^E_I$-logical consequence of
  $\phi_k$, for all $k$: this is achieved by conjoining (just once)
  equations like $i_s=t$ with $\phi_k$ - here the $i_s$ are new
  existentially quantified variables and $t$ is a representative
  $\Sigma_I$-term in which only the original existentially quantified
  variables occur. In this way, all elements in a substructure
  generated by $\ui$ are named explicitly and so are their
  $a_0$-images $a_0[\ui]$ (otherwise said, modulo $\phi_k(\ui)$, for
  every $\Sigma_I(\ui)$-term $t$, we have that $a_0[t]$ is equal to
  some of the $a_0[\ui]$).

  Now, in a locally finite theory, every quantifier free formula
  $\theta$ having at most $m$ free variables, is equivalent to a
  disjunction of diagram formulae $\delta_{\cM}(G)$, where $\cM$ is a
  substructure of a model of the theory and $G$ is a set of generators
  for $\cM$ of cardinality at most $m$.\footnote{Since the theory is
    locally finite, there are finitely many atoms whose free variables
    are included in a given set of cardinality $m$. Maximal
    conjunctions of literals built on these atoms are either
    inconsistent (modulo the theory) or satisfiable in an
    $m$-generated substructure of a model of the theory. Because of
    maximality, these (maximal) conjunctions must be diagrams.} If we
  apply this to both $T_I$ and $T_E$, we get that our $K(a_0)$ can be
  rewritten as
  \begin{equation*}
    \bigvee_{\cA, \cB} \exists \ui\, (\delta_{\cA}(\ui) \wedge
    \delta_{\cB}(a_0[\ui]))
  \end{equation*}
  where $\cA$ ranges over the $m$-generated models of $T_I$ and $\cB$
  over the $m$-generated sub-models of $T_E$ (recall that $T_I$ is
  closed under substructures). Every such pair $(\cA, \cB)$ is either
  $A^E_I$-inconsistent (in case some equality among the generators of
  $\cA$ is not satisfied by the corresponding generators of $\cB$) or
  it gives rise to a configuration $a$ such that $\exists \ui\,
  (\delta_{\cA}(\ui) \wedge \delta_{\cB}(a_0[\ui]))$ is precisely
  $K_a$.
\end{proof}
The formula $K_s$ from Proposition~\ref{prop:conf1}(i) will be called
\emph{the diagram formula} for the configuration $s$.

The set $\cB(\tau, K)$ of configurations which are backward reachable
from the configurations satisfying a given $\exists^I$-formula $K$ is
thus an upset, being the union of infinitely many upsets; however,
even when the latter are finitely generated, $\cB(\tau, K)$ needs not
be so.  Under the hypothesis of local finiteness of $T_E$, this is
precisely what characterizes the termination of the backward
reachability procedure.
\begin{thm}[\cite{ijcar08}]
  \label{thm:term}
  Assume that $T_E$ is locally finite; let $ K$ be an
  $\exists^I$-formula. If $K$ is safe, then $\mathsf{BReach}$ in
  Figure \ref{fig:reach-algo} terminates iff $\cB(\tau, K)$ is a
  finitely generated upset.\footnote{If $K$ is unsafe, we already know
    that $\mathsf{BReach}$ terminates because it detects unsafety.}
\end{thm}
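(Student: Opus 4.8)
The plan is to reduce the termination of $\mathsf{BReach}$ to the stabilization of an increasing chain of finitely generated upsets, and then to relate that stabilization to the finite generation of $\cB(\tau, K)$. First I would record that a configuration is backward reachable from $K$ exactly when it reaches $K$ in finitely many steps, so that
\[
  \cB(\tau, K) \;=\; \bigcup_{n\geq 0} \mywidehat{Pre^n(\tau, K)} \;=\; \bigcup_{n\geq 0} \mywidehat{BR^n(\tau, K)}.
\]
Since each $BR^n(\tau, K)$ is an $\exists^I$-formula, Proposition~\ref{prop:conf1} tells us that $\mywidehat{BR^n(\tau, K)}$ is a finitely generated upset, and since $A_I^E\models BR^n\to BR^{n+1}$, Proposition~\ref{prop:conf} gives $\mywidehat{BR^n}\subseteq\mywidehat{BR^{n+1}}$. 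Thus $\cB(\tau, K)$ is an increasing union of finitely generated upsets.

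Next I would reduce termination to stabilization of this chain. Because $K$ is safe, the safety test at line~3 of Figure~\ref{fig:reach-algo}(a) never succeeds, so $\mathsf{BReach}$ halts precisely when the while-guard at line~2 first fails, i.e.\ when $BR^{n+1}\to BR^n$ first becomes $A_I^E$-valid. By Proposition~\ref{prop:conf} this is equivalent to $\mywidehat{BR^{n+1}}=\mywidehat{BR^n}$, and since the chain is increasing this first equality forces $\mywidehat{BR^m}=\mywidehat{BR^n}$ for all $m\geq n$. Hence \emph{$\mathsf{BReach}$ terminates iff the chain $\{\mywidehat{BR^n(\tau,K)}\}_n$ stabilizes}.

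It then remains to show that the chain stabilizes iff $\cB(\tau, K)$ is finitely generated. If it stabilizes at level $N$, then $\cB(\tau, K)=\mywidehat{BR^N(\tau,K)}$ is finitely generated by Proposition~\ref{prop:conf1}. Conversely, suppose $\cB(\tau, K)=\uparrow\! s_1\cup\cdots\cup\uparrow\! s_r$ is a finite union of cones. Each generator $s_j$ lies in $\cB(\tau,K)=\bigcup_n\mywidehat{BR^n}$, so $s_j\in\mywidehat{BR^{n_j}}$ for some $n_j$; setting $N=\max_j n_j$ and using that the chain is increasing, all the $s_j$ lie in $\mywidehat{BR^N}$. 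Since $\mywidehat{BR^N}$ is upward closed (Proposition~\ref{prop:conf}), each cone $\uparrow\! s_j$ is contained in $\mywidehat{BR^N}$, whence $\cB(\tau,K)\subseteq\mywidehat{BR^N}\subseteq\cB(\tau,K)$, so $\mywidehat{BR^N}=\cB(\tau,K)$ and the chain has stabilized at $N$.

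The inclusions and the bookkeeping with $BR^n$ are routine; the step that deserves care is the converse direction, where the \emph{finiteness of the number of generators} is exactly what lets me collapse the individually finite levels $n_j$ into a single uniform bound $N$. This pigeonhole step is the crux, and it is precisely the place where the hypothesis that $\cB(\tau,K)$ be \emph{finitely} generated (rather than merely an upset) is indispensable.
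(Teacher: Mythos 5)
Your proof is correct and takes essentially the same route as the paper's: both express $\cB(\tau,K)$ as the increasing union of the finitely generated upsets $\mywidehat{BR^n(\tau,K)}$, use Proposition~\ref{prop:conf} to convert the algorithm's fix-point test into the set equality $\mywidehat{BR^{n+1}(\tau,K)}=\mywidehat{BR^n(\tau,K)}$ and Proposition~\ref{prop:conf1} for finite generation, and your explicit pigeonhole on the finitely many generators is exactly what the paper compresses into the word ``consequently.'' The only point to tighten is your parenthetical claim that the first equality persists ``since the chain is increasing'' --- an increasing chain can repeat a term and then grow again, so persistence really follows from monotonicity of $Pre(\tau,\cdot)$ (validity of $BR^{n+1}\to BR^n$ yields validity of $Pre^{n+2}\to Pre(\tau,BR^n)\to BR^{n+1}$, and one concludes by induction), a step the paper's own proof likewise leaves implicit when it writes $\mywidehat{BR^{n+1}(\tau,K)}=\cB(\tau,K)$.
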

\begin{proof}
  Suppose that $\cB(\tau, K)$ is a finitely generated upset. Notice
  that
  \begin{eqnarray*}
    \cB(\tau, K)=\bigcup_n\mywidehat{BR^n(\tau, K)},
  \end{eqnarray*}
  consequently (since we have $\mywidehat{BR^0(\tau, K)}\subseteq
  \mywidehat{BR^1(\tau, K)} \subseteq \mywidehat{BR^2(\tau,
    K)}\subseteq \cdots$) we have $\cB(\tau, K)=\mywidehat{BR^n(\tau,
    K)}=\mywidehat{BR^{n+1}(\tau, K)}$ for some $n$, which means by
  the second claim of Proposition \ref{prop:conf} that $A^E_I\models
  BR^n(\tau, K)\leftrightarrow BR^{n+1}(\tau, K)$, i.e. that the
  Algorithm halts. Vice versa, if the Algorithm halts, we have
  $A^E_I\models BR^n(\tau, K)\leftrightarrow BR^{n+1}(\tau, K)$, hence
  $\mywidehat{BR^n(\tau, K)}=\mywidehat{BR^{n+1}(\tau, K)}=\cB(\tau,
  K)$ and the upset $\cB(\tau, K)$ is finitely generated by
  Proposition~\ref{prop:conf1}.
\end{proof}
To derive a sufficient condition for termination from the Theorem
above, we use the notion of a wqo as in~\cite{lics}.  A pre-order $(P,
\leq)$ is a \emph{well-quasi-ordering} (wqo) iff for every sequence
\begin{equation}
  \label{eq:succession}
  p_0, p_1, \dots, p_i, \dots
\end{equation}
of elements from $P$, there are $i<j$ with $p_i\leq p_j$.
%
\begin{cor}
  \label{coro:termination}
  $\mathsf{BReach}$ always terminates whenever the pre-order on
  $A_I^E$-configurations is a wqo.
\end{cor}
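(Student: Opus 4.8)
The plan is to reduce the statement to Theorem~\ref{thm:term} and then to invoke the standard order-theoretic fact that in a well-quasi-ordering every upward closed set is finitely generated. First I would dispose of the unsafe case: if the input $\exists^I$-formula $K$ is unsafe, then $\mathsf{BReach}$ halts because it detects unsafety (line~3 of Figure~\ref{fig:reach-algo}), so nothing is left to prove. Hence I may assume $K$ is safe, and by Theorem~\ref{thm:term} (whose local finiteness hypothesis on $T_E$ is the one in force in this subsection) it suffices to show that the upset $\cB(\tau, K)=\bigcup_n \mywidehat{BR^n(\tau,K)}$ is finitely generated.

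The heart of the argument is therefore the purely order-theoretic lemma: if the pre-order $(P,\leq)$ on configurations is a wqo, then \emph{every} upset $U\subseteq P$ is finitely generated. I would first extract from the definition of wqo its two usual consequences. For well-foundedness: an infinite strictly descending chain $p_0>p_1>\cdots$ would, by the definition of wqo, contain $p_i\leq p_j$ with $i<j$, which contradicts strict descent via transitivity. For the absence of infinite antichains: a sequence of pairwise incomparable elements directly contradicts the existence of $i<j$ with $p_i\leq p_j$.

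Using these two facts I would show $U=\,\uparrow\! M$, where $M$ is the set of minimal elements of $U$. Given $u\in U$, well-foundedness lets me descend from $u$ inside $U$ until I reach a minimal element $m\leq u$, so that $u\in\,\uparrow\! m\subseteq\,\uparrow\! M$; the reverse inclusion is immediate since $U$ is upward closed and $M\subseteq U$. Next, the equivalence classes of minimal elements form an antichain (if $m\leq m'$ with both minimal, then minimality of $m'$ forces $m'\leq m$, so they are equivalent), whence by the no-infinite-antichain property there are only finitely many such classes; choosing one representative per class yields a finite $F$ with $\uparrow\! F=\,\uparrow\! M=U$. Thus $U$ is a finite union of cones, i.e.\ finitely generated.

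Applying this lemma to the upset $\cB(\tau,K)$ shows that it is finitely generated, and Theorem~\ref{thm:term} then yields termination of $\mathsf{BReach}$. The only point requiring care --- and the one I would regard as the main obstacle --- is the passage from the familiar partial-order argument to a genuine pre-order: minimal elements need not be unique and may be equivalent, so I must argue at the level of equivalence classes when bounding $M$ and then pick representatives to obtain an honest finite generating set.
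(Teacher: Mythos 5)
Your proposal is correct, and its skeleton is the same as the paper's: dispose of the unsafe case via the safety check, then reduce via Theorem~\ref{thm:term} to the order-theoretic fact that in a wqo every upset is finitely generated, applied to $\cB(\tau,K)$. Where you differ is in how that fact is proved. The paper's proof is a one-shot greedy argument: pick $p_0\in U$, then repeatedly pick $p_{i+1}\in U\setminus(\uparrow\! p_0\cup\cdots\cup\uparrow\! p_i)$; if this never stopped, the resulting sequence would satisfy $p_i\not\leq p_j$ for all $i<j$, contradicting the wqo definition directly. Your proof instead decomposes the wqo property into its two classical consequences (well-foundedness and absence of infinite antichains), shows $U=\,\uparrow\! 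M$ for $M$ the set of minimal elements, and then bounds $M$ modulo equivalence by the antichain property, picking representatives. Both arguments are sound (and both implicitly use dependent choice, yours in the descent to a minimal element, the paper's in the iterated picking). The paper's version is shorter and needs only one invocation of the wqo definition; yours is the textbook ``minimal elements'' argument transported to pre-orders, and its added value is precisely the point you flag yourself: in a pre-order minimal elements are only determined up to equivalence, so the finiteness argument must run at the level of equivalence classes --- a subtlety the paper's greedy proof sidesteps entirely, since it never mentions minimality. One cosmetic caveat: the local finiteness of $T_E$ is not a subsection-wide standing assumption (the paper's Assumption constrains $T_I$, not $T_E$); it is the hypothesis of Theorem~\ref{thm:term} itself, which your proof --- like the paper's --- inherits by invoking that theorem.
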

\begin{proof}
  It is sufficient to show that in a wqo all upsets are finitely
  generated.  This is a well-known fact that can be proved for
  instance as follows.  Let $U$ be an upset.  If $U$ is empty, then it
  is finitely generated.  Otherwise pick $p_0\in U$, if $\uparrow\!
  p_0=U$, clearly $U$ is finitely generated; otherwise, let $p_1\in U
  \setminus\uparrow p_0$. At the $(i+1)$-th step, either $U=\uparrow\!
  p_0\cup\cdots \cup \uparrow\! p_i$ and $U$ is finitely generated, or
  we can pick $p_{i+1}\in U$ with $p_{i+1}\not \in \uparrow\!
  p_0\cup\cdots \cup \uparrow\! p_i$. Since the last alternative
  sooner or later becomes impossible (because in an infinite sequence
  like~\eqref{eq:succession}, we must have $p_j\in\bigcup_{i<j}
  \uparrow\! p_i$ for some $j$), we conclude that $U$ is finitely
  generated.
\end{proof}
Termination of backward reachability for some classes of systems
(already considered in the literature) can be obtained from
Corollary~\ref{coro:termination}; some of these are briefly considered
in the example below.  Although decidable, many of these cases have
very bad computational behavior as only a non-primitive recursive
lower bound is known to exist.  For the detailed formalization of the
classes of systems mentioned in the example below, the interested
reader is pointed to the extended version of~\cite{ijcar08}.
\begin{example}\em
  We consider three classes of systems for which decidability of the
  safety problem can be shown by using
  Corollary~\ref{coro:termination} and well-known results (such as
  Dickson's Lemma, Highman's Lemma, or Kruskal's theorem; see,
  e.g.,~\cite{gallier} for a survey) for proving that the ordering on
  configurations is a wqo.
  \begin{enumerate}[$\bullet$]
  \item Take $T_E$ to be an enumerated data-type theory and $T_I$ to
    be the pure theory of equality over the signature
    $\Sigma_I=\{=\}$: the pre-order on $A_I^E$-configurations is a wqo
    by Dickson's Lemma.  In fact, if $T_E$ is the theory of a finite
    structure with support $\{e_1, \dots, e_k\}$, a configuration is
    uniquely determined by a $k$-tuple of integers (counting the
    number of the $i$ for which $a[i]=e_j$ holds) and the
    configuration ordering is obtained by component-wise comparison.
    In this setting, one can formalize both
    cache-coherence~\cite{cav-delzanno} (see also
    Example~\ref{ex:one}) and broadcast protocols~\cite{bro1,bro2}.
  \item Take $T_E$ to be an enumerated data-type theory and $T_I$ to be
    the theory of total order: the pre-order on $A_I^E$-configurations
    is a wqo by Higman's Lemma.  In fact, if $T_E$ is the theory of a
    finite structure with support $\{e_1, \dots, e_k\}$, a
    configuration is uniquely determined by a word on $\{e_1, \dots,
    e_k\}$ and the configuration ordering is simply the sub-word
    relation.  In this setting, one can formalize Lossy Channel
    Systems~\cite{lossy-channels,lossy2}.
  \item Take $T_E$ to be the theory of rationals (with the standard
    ordering relation $<$) and $T_I$ to be the pure theory of equality
    over the signature $\Sigma_I=\{=\}$: the pre-order on
    $A_I^E$-configurations is a wqo by Kruskal's theorem.  In fact, we
    can represent a configuration $(s, \mathcal M)$ as a list $n_1,
    \dots, n_k$ of natural numbers (of length $k$): such a list
    encodes the information that $s_E$ is a $k$-element chain and that
    $n_1$ elements from $s_I$ are mapped by $s$ into the first element
    of the chain, $n_2$ elements from $s_I$ are mapped by $s$ into the
    second element of the chain, etc.  If $w$ is the list for $s$ and
    $v$ is the list for $s'$, we have $s'\leq s$ iff $w$ is less than
    or equal component-wise to a sub-word of $v$.  Termination by
    Kruskal's theorem is obtained by representing numbers as numerals
    and by using a binary function symbol $f$ to encode the precedence
    (thus, for instance, the list 1,2,2 is represented as $f(succ(0),
    f(succ(succ(0)), succ(succ(0))))$); it is easily seen that, on these terms,
    the homeomorphic embedding~\cite{BaaNi} behaves like our configuration
    ordering.
  \end{enumerate}
\end{example}
\noindent A final remark is in order.  In the model checking literature of
infinite state systems, an important property is that of
`monotonicity'~\cite{lics}
(in an appropriate setting, this property is shown to be equivalent to
the fact that the pre-image of an upset is still an upset).  Such a
property is not used in the proofs above as we work symbolically with
definable upsets.  However, it is possible to formulate it in our
framework as follows:
\begin{itemize}
\item[--] if $(s, \cM), (s', \cM')$, and $(t, \cM)$ are configurations
  such that $s\leq s'$ and $\cM\models \tau(s,t)$, then there exists
  $(t',\cM')$ such that $t\leq t'$ and $\cM'\models \tau(s', t')$.
\end{itemize}
The proof that such a property holds for transitions in the
format~(\ref{eq:transition2}) is easy and left as an exercise to the
reader (it basically depends on the fact that truth of existential
formulae is preserved by superstructures).

\section{Invariants Search}
\label{sec:inv}

It is well-known that invariants are useful for pruning the search
space of backward reachability procedures and may help either to
obtain or to speed up termination.

\subsection{Safety Invariants}
\label{subsec:dec+inv}
First of all, we recall the basic notion of safety invariant.
\begin{defi}
  \label{def:inv}
  The $\forall^I$-formula $J(a)$ is a \emph{safety invariant} for the
  safety problem consisting of the array-based system
  $\cSi=(a,I,\tau)$ and unsafe $\exists^I$-formula $U(a)$ iff the following
  conditions hold:
  \begin{enumerate}[{\rm (i)}]
    \item $A^E_I\models \forall a (I(a)\to J(a))$,
    \item $A^E_I\models \forall a\forall a' (J(a)\wedge \tau(a, a')\to
      J(a'))$, and
    \item $\exists a.(U(a)\wedge J(a))$ is $A_I^E$-unsatisfiable.
  \end{enumerate}
  If we are not given the $\exists^I$-formula $U(a)$ and only conditions
  (i)--(ii) hold, then $J(a)$ is said to be an \emph{invariant for $\cSi$}.
\end{defi}
Checking whether conditions (i), (ii), and (iii) above hold can be
reduced, by trivial logical manipulations, to the
$A_I^E$-satisfiability of $\exists^{A,I}\forall^I$-formulae, which is
decidable by Theorem~\ref{th:decidability}.  So, establishing whether
a given $\forall^I$-formula $J(a)$ is a safety invariant can be
completely automated.
\begin{property}
  \label{prop:inv-method}
  Let $U$ be an $\exists^I$-formula.  If there exists a safety
  invariant for $U$, then the array-based system $\cSi=(a,I,\tau)$ is
  safe with respect to $U$.
\end{property}
\begin{proof}
  For reductio, suppose that there is a safety invariant for $U$ and
  the array-based system $\cSi=(a,I,\tau)$ is not safe w.r.t.\ $U$.
  This implies that the formula
  \begin{equation}
    I(a_0)\wedge \tau(a_0, a_1)\wedge \cdots \wedge \tau(a_{n-1},
    a_n)\wedge U(a_n)
  \end{equation}
  is $A^E_I$-satisfiable. By using (i) and (ii) in
  Definition~\ref{def:inv}, we derive that $J(a_n)\wedge U(a_n)$ is
  $A^E_I$-satisfiable, in contrast to (iii) in
  Definition~\ref{def:inv}.
\end{proof}
Thus, if we are given a suitable safety invariant,
Property~\ref{prop:inv-method} can be used as the basis of the safety
invariant method, which turns out to be more powerful than the basic
backward reachability procedure in Figure~\ref{fig:reach-algo} (a).
\begin{property}
  \label{ex:basic}
  Let the procedure $\mathsf{BReach}$ in Figure~\ref{fig:reach-algo}(a)
  terminate on the safety problem consisting of the array-based system
  $\cSi=(a,I,\tau)$ and unsafe formula $U(a)$.  If $\mathsf{BReach}$
  returns $(\mathsf{safe},B)$, then $\neg B$ is a safety invariant for
  $U$.
\end{property}
\begin{proof}
  Suppose that $\mathsf{BReach}$ exits the main loop at the $k$-th
  iteration by returning $B$; then $B$ is $\bigvee_{i=0}^k Pre^i(\tau,
  U)$,\footnote{Notice that the disjunction of $\exists^I$-formulae is
    (up to logical equivalence) an $\exists^I$-formula, so $B$ is
    itself an $\exists^I$-formula.}  the formula $Pre^{k+1}(\tau,
  U)\wedge \neg B$ is $A^E_I$-unsatisfiable and the formulae $I\wedge
  Pre^i(\tau, U)$ (for $i=0, \dots, k$) are also
  $A^E_I$-unsatisfiable.  The latter means that $A^E_I\models \forall
  a (I(a)\to \neg B(a))$; for $i=0$ (since $Pre^0(\tau, U)$ is $U$),
  we also get that $\exists a.(U(a)\wedge \neg B(a))$ is
  $A_I^E$-unsatisfiable.  To claim that $\neg B(a)$ is an invariant,
  we only need to check that $A^E_I\models \forall a\forall a' (\neg
  B(a)\wedge \tau(a, a')\to \neg B(a'))$, i.e.\ that $A^E_I\models
  \forall a (Pre(\tau,B(a))\to B(a))$, which trivially holds since
  $Pre(\tau, B)$ is
$\bigvee_{i=1}^{k+1}Pre^i(\tau,U)$
and hence implies
 $Pre^{k+1}(\tau,U)\vee B$ and consequently also $B$ (recall that
  $Pre^{k+1}(\tau,
  U)\wedge \neg B$ is $A^E_I$-unsatisfiable).
\end{proof}
The converse of Proposition~\ref{ex:basic} does not hold: there might
be a safety invariant even when $\mathsf{BReach}$ diverges, as
illustrated by the following example.\footnote{More significant
  examples having a similar behavior can be found in the \textsc{mcmt}
  distribution.}
\begin{example}\em
  \label{ex:only}
  We consider an algorithm to insert an element $b[0]$ into a sorted
  array $b[1], \dots, b[n]$ (this can be seen as a sub-procedure of
  the insertion sort algorithm).  To formalize this, let $\Sigma_I$
  contain one binary predicate symbol $S$ and one constant symbol $0$
  and $T_I$ be the theory whose class of models consists of the
  substructures of the structure having the naturals as domain, with
  $0$ interpreted in the obvious way, and $S$ interpreted as the graph
  of the successor function.  For the sake of simplicity, we shall use
  a two-sorted theory for data and two array variables: let $T_E$ be
  the two-sorted theory whose class of models consists of the single
  two-sorted structure given by the Booleans (with the constants
  $\top, \bot$ interpreted as true and false, respectively) and the
  rationals (with the usual ordering relation $<$); the array variable
  $a$ is a collection of Boolean flags and the array variable $b$ is
  the sorted numerical array where $b[0]$ should be inserted.  The
  initial $\forall^I$-formula is represented as follows:
  \begin{eqnarray*}
    \forall i\, (a[i]=\bot \leftrightarrow i\neq 0) \wedge
    \forall i_1, i_2 \,(S(i_1,i_2) \to i_1=0 \vee b[i_1]\leq b[i_2]) ,
  \end{eqnarray*}
  saying that the elements in the array $b$, whose corresponding
  Boolean flag is set to false (namely, all except the one at position
  $0$), are arranged in increasing order.  The procedure can be
  formalized by using just one transition formula in the
  format~\ref{eq:transition2} whose guard and global component are as
  follows:
  \begin{eqnarray*}
    \phi_L(i_1,i_2,a[i_1],a[i_2]) & := &
    S(i_1, i_2) \wedge
    a[i_1]=\top \wedge
    a[i_2]=\bot \wedge
    b[i_1] > b[i_2]  \\
    F_G(i_1, i_2, a[i_1], a[i_2],  b[i_1], b[i_2], j) & := & 
             \mathtt{if}~(j= i_1)~\mathtt{then}~\langle \top, b[i_2]\rangle \\
         &&    \mathtt{else~if}~(j=i_2)~\mathtt{then}~\langle \top, b[i_1]\rangle \\
         &&    \mathtt{else}~\langle a[j], b[j]\rangle ,
  \end{eqnarray*}
  which swaps two elements in the array $b$ if their order is
  decreasing and sets the Boolean fields appropriately (notice that
  $F_G$ updates a pair of array variables whose first component is the
  new value of $a$ and second component is the new value of $b$).  The
  obvious correctness property is that there are no two values in
  decreasing order in the array $b$ if the corresponding Boolean flags
  do not allow the transition to fire:
  \begin{equation}
    \label{ex:s}
    \exists i_1,i_2\,
    (S(i_1, i_2) \wedge
    \neg (a[i_1]= \top \wedge a[i_2]= \bot) \wedge
    b[i_1]>b[i_2]) .
  \end{equation}
  Unfortunately, $\mathsf{BReach}$ in Figure~\ref{fig:reach-algo} (a)
  diverges when applied to~\eqref{ex:s}.  Fortunately, a safety
  invariant for (\ref{ex:s}) exists.  This can be obtained as follows:
  run \textsc{mcmt} on the safety problem given by the disjunction of
  (\ref{ex:s}) and the formula
  \begin{equation}
    \label{eq:inssort}
    \exists i, j.(S(i,j) \wedge a[i]= \bot \wedge a[j]= \top)
  \end{equation}
  saying that two adjacent indexes have their Boolean flags set to
  $\bot$ and $\top$, respectively. The problem is immediately solved
  by the tool: by Property~\ref{ex:basic}, the formula describing the
  set of backward reachable states is a safety invariant for the safety
  problem given by the disjunction of~\eqref{ex:s}
  and~\eqref{eq:inssort}, hence \emph{a fortiori} also for the safety
  formula~\eqref{ex:s} alone.  In this case,
  formula~\eqref{eq:inssort} has been found manually; however,
  \textsc{mcmt} \emph{can find it without user intervention} as soon
  as its invariant synthesis capabilities are activated by suitable
  command line options.  The combination of automatic invariant search
  and backward reachability will be the main subject of
  Section~\ref{sec:back+inv} below. \qed
\end{example}
It is interesting to rephrase the conditions of
Definition~\ref{def:inv} in terms of configurations as this paves the
way to characterize the completeness of our invariant synthesis method
as will be shown below.
\begin{lemma}
  \label{lem:inv}
  Let $J$ be a $\forall^I$-formula; the conditions (i), (ii), and
  (iii) of Definition~\ref{def:inv} are equivalent to the following
  three conditions on (sets of) configurations:
  \begin{eqnarray}
    \label{eq:si1}
    \mywidehat{I} \cap \mywidehat{H} =\emptyset\\
    \label{eq:si2}
    \mywidehat{Pre(\tau, H)}\subseteq \mywidehat{H}\\
    \label{eq:si3}
    \mywidehat{U}\subseteq \mywidehat{H} ,
  \end{eqnarray}
  where $H$ is the $\exists^I$-formula which is logically equivalent
  to the negation of $J$.
\end{lemma}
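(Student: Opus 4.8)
The plan is to handle the three clauses of Definition~\ref{def:inv} one at a time, in each case first rewriting the first-order condition on $J$ as a condition on $H$ (using that $H$ is $A^E_I$-equivalent to $\neg J$), and then converting the resulting $A^E_I$-entailment into an inclusion of sets of configurations by means of the second claim of Proposition~\ref{prop:conf}. Throughout I would exploit the observation that, since $I$ is a $\forall^I$-formula, its negation $\neg I$ is (equivalent to) an $\exists^I$-formula and $\mywidehat{\neg I}$ is exactly the complement of $\mywidehat{I}$ in the space of configurations; this is what lets even the disjointness condition \eqref{eq:si1} be phrased as an inclusion to which Proposition~\ref{prop:conf} applies. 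Note that all the delicate passage from arbitrary models to finite index models is already absorbed into Proposition~\ref{prop:conf}, whose proof invokes Theorem~\ref{th:decidability}, so no separate finite-index argument will be needed.

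For clause (iii), I would observe that $\exists a\,(U(a)\wedge J(a))$ is $A^E_I$-unsatisfiable iff $A^E_I\models \forall a\,(U(a)\to H(a))$; since both $U$ and $H$ are $\exists^I$-formulae, Proposition~\ref{prop:conf} turns this directly into $\mywidehat{U}\subseteq\mywidehat{H}$, which is \eqref{eq:si3}. For clause (i), I would rewrite $A^E_I\models \forall a\,(I(a)\to J(a))$ by contraposition as $A^E_I\models\forall a\,(H(a)\to\neg I(a))$ and apply Proposition~\ref{prop:conf} to the two $\exists^I$-formulae $H$ and $\neg I$, obtaining $\mywidehat{H}\subseteq\mywidehat{\neg I}$; since $\mywidehat{\neg I}$ is the complement of $\mywidehat{I}$, this says precisely $\mywidehat{I}\cap\mywidehat{H}=\emptyset$, i.e.\ \eqref{eq:si1}.

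Clause (ii) is the step I expect to be the main obstacle, because of the two-state quantification over $a$ and $a'$. I would start from the contrapositive, $A^E_I\models\forall a\,\forall a'\,(H(a')\wedge\tau(a,a')\to H(a))$. Since $H(a)$ does not depend on $a'$, the inner universal quantifier can be pushed in, turning $\forall a'\,((H(a')\wedge\tau(a,a'))\to H(a))$ into $(\exists a'\,(\tau(a,a')\wedge H(a')))\to H(a)$, which by the definition \eqref{eq:def-pre} of $Pre$ is exactly $Pre(\tau,H)(a)\to H(a)$. Hence clause (ii) is equivalent to $A^E_I\models Pre(\tau,H)\to H$. By Proposition~\ref{prop:trans} the formula $Pre(\tau,H)$ is $A^E_I$-equivalent to an $\exists^I$-formula, so Proposition~\ref{prop:conf} applies once more and yields $\mywidehat{Pre(\tau,H)}\subseteq\mywidehat{H}$, i.e.\ \eqref{eq:si2}. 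The only point that requires care is justifying the quantifier manipulation over the whole class of models of $A^E_I$ and checking that the contrapositions used for (i) and (ii) are sound; these are routine once the reduction to $\exists^I$-shape has been secured, since the semantic content of the finite-index restriction is entirely delegated to Proposition~\ref{prop:conf}.
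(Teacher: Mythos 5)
Your proof is correct and takes essentially the same route as the paper's: each clause of Definition~\ref{def:inv} is rewritten as an $A^E_I$-validity between (formulae $A^E_I$-equivalent to) $\exists^I$-formulae and then converted into an inclusion of configuration sets via the second claim of Proposition~\ref{prop:conf}, with Proposition~\ref{prop:trans} supplying the $\exists^I$-shape of $Pre(\tau,H)$. The only cosmetic difference is clause (i), which the paper discharges by a direct unsatisfiability-to-configurations step (implicitly invoking the finite-index-model property of Theorem~\ref{th:decidability}), whereas you route it through Proposition~\ref{prop:conf} applied to $H$ and $\neg I$ and the observation that $\mywidehat{\neg I}$ is the complement of $\mywidehat{I}$; both rest on exactly the same underlying fact.
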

\begin{proof}
  For~(\ref{eq:si1}), we have:
  \begin{eqnarray*}
    \mbox{(i) of Def.~\ref{def:inv}} ~\Leftrightarrow~
    A_I^E \models \forall a.(I(a) \to J(a)) & \Leftrightarrow~\\
    \neg \forall a.(I(a) \to J(a)) \mbox{ is
      $A_I^E$-unsat.} & \Leftrightarrow~\\
      \exists a.(I(a) \wedge \neg J(a)) \mbox{ is
      $A_I^E$-unsat.} & \Leftrightarrow~\\
    \exists a.(I(a) \wedge H(a)) \mbox{ is
      $A_I^E$-unsat.} ~\Leftrightarrow~
    \mywidehat{I} \cap \mywidehat{H} = \emptyset . &
  \end{eqnarray*}
  For~(\ref{eq:si2}), we have:
  \begin{eqnarray*}
    \mbox{(ii) of Def.~\ref{def:inv}} ~\Leftrightarrow~
    A_I^E \models \forall a,a'.(J(a) \wedge \tau(a,a')\to J(a'))
    & \Leftrightarrow \\
    \exists a,a'.\neg (J(a) \wedge \tau(a,a')\to
    J(a')) \mbox{ is $A_I^E$-unsat.} & \Leftrightarrow\\
    \exists a,a'.(J(a) \wedge \tau(a,a')\wedge \neg
    J(a')) \mbox{ is $A_I^E$-unsat.} & \Leftrightarrow \\
    \exists a.(J(a) \wedge \exists
    a'.(\tau(a,a')\wedge \neg J(a')))  \mbox{ is $A_I^E$-unsat.}
    & \Leftrightarrow \\
    \exists a.(J(a) \wedge \exists
    a'.(\tau(a,a')\wedge H(a')))  \mbox{ is $A_I^E$-unsat.}
    & \Leftrightarrow \\
    \exists a.(J(a) \wedge Pre(\tau,H)(a)) \mbox{ is
      $A_I^E$-unsat.} & \Leftrightarrow \\
    A_I^E \models \forall a.(\neg J(a) \vee \neg
    Pre(\tau,H)(a))  & \Leftrightarrow\\
    A_I^E \models \forall a.(H(a) \vee \neg
    Pre(\tau,H)(a))  & \Leftrightarrow \\
    A_I^E \models \forall a.(Pre(\tau,H)(a)\to H(a))
    ~\Leftrightarrow~  \mywidehat{Pre(\tau,H)} \subseteq
    \mywidehat{H}. &
  \end{eqnarray*}
  For~(\ref{eq:si3}), we have:
  \begin{eqnarray*}
    \mbox{(iii) of Def.~\ref{def:inv}} ~\Leftrightarrow~
    \exists a.(U(a) \wedge J(a)) \mbox{ is $A_I^E$-unsat.}
    & \Leftrightarrow \\
    \exists a.\neg (\neg U(a) \vee \neg J(a)) \mbox{ is
      $A_I^E$-unsat.} & \Leftrightarrow\\
    A_I^E\models \forall a. ( U(a) \to \neg J(a))
    & \Leftrightarrow \\
    A_I^E\models \forall a. ( U(a) \to H(a)) ~\Leftrightarrow~
    \mywidehat{U} \subseteq \mywidehat{H} . &
  \end{eqnarray*}
\end{proof}

\subsection{Invariant Synthesis}
\label{subsec:dual}
The main difficulty to exploit Property~\ref{prop:inv-method} is to
find suitable $\forall^I$-formulae satisfying conditions (i)---(iii)
of Definition~\ref{def:inv}.  Unfortunately, the set of
$\forall^I$-formulae which are candidates to become safety invariants
is infinite.  Such a search space can be dramatically restricted when
$T_E$ is locally finite, although it is still infinite because there
is no bound on the length of the universally quantified prefix.  From
a technical point of view, we need to develop some preliminary results.

First, we give a closer look to the \emph{equivalence} relation among
configurations: we recall that $s$ is equivalent to $t$
(written $s\approx t$) iff $s\leq t$ and $t\leq s$.
\begin{proposition}
  \label{prop:equiv}
  We have that $s\approx t$ holds iff there are a
  $\Sigma_I$-isomorphism $\mu$ and a $\Sigma_E$-isomorphism $\nu$ such
  that such that the set-theoretical compositions of $\mu$ with $s$
  and of $s'$ with $\nu$ are equal.\footnote{Notice that, since the
    image of $s$ is a set of generators for $s_E$, it is not difficult
    to see that $\nu$ is uniquely determined from $\mu$ (i.e., given
    $\mu$, there might be no $\nu$ such that the square commutes, but
    in case one such exists, it is unique).  Observe also that, if $s$
    comes from the finite index model $\cM$ and $t$ comes from the
    finite index model $\cN$, the fact that $s\approx t$ holds does
    not mean that $\cM$ and $\cN$ are isomorphic: their
    $\Sigma_I$-reducts are $\Sigma_I$-isomorphic, but their
    $\Sigma_E$-reducts need not be $\Sigma_E$-isomorphic (only the
    $\Sigma_E$-substructures $s_E$ and $t_E$ are
    $\Sigma_E$-isomorphic).  }  The situation is depicted in the
  following diagram:
  \begin{center}
    \resetparms
    \setsqparms[+2`+1`+1`+2;500`500]
    \square[s'_I` s_I`s'_E`s_E;\mu`s'`s`\nu]
  \end{center}
\end{proposition}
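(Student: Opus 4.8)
The plan is to prove the two implications separately, exploiting the definition of $\leq$ (a pair of embeddings $\mu,\nu$ making the square $\nu\circ s = t\circ\mu$ commute) together with the crucial finiteness of the index structures $s_I,t_I$, which comes from the fact that configurations live in finite index models, and the fact that $s_E$ (resp.\ $t_E$) is \emph{generated} by the image of $s$ (resp.\ $t$).

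The easy direction is ($\Leftarrow$): if a $\Sigma_I$-isomorphism $\mu$ and a $\Sigma_E$-isomorphism $\nu$ make the square commute, then $\mu,\nu$ are in particular embeddings, so $s\leq t$ holds at once; and their inverses $\mu^{-1},\nu^{-1}$ are again isomorphisms (hence embeddings) making the reversed square commute, since from $\nu\circ s = t\circ\mu$ one gets $s\circ\mu^{-1}=\nu^{-1}\circ t$ by composing with $\nu^{-1}$ on the left and $\mu^{-1}$ on the right. This yields $t\leq s$, so $s\approx t$.

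For the harder direction ($\Rightarrow$), assume $s\leq t$ and $t\leq s$ and unfold the definitions to obtain $\Sigma_I$-embeddings $\mu_1:s_I\to t_I$, $\mu_2:t_I\to s_I$ and $\Sigma_E$-embeddings $\nu_1:s_E\to t_E$, $\nu_2:t_E\to s_E$ with $\nu_1\circ s = t\circ\mu_1$ and $\nu_2\circ t = s\circ\mu_2$. First I would upgrade $\mu_1$ to an isomorphism. Since $s_I$ and $t_I$ are supports of finite index models, the composite $\mu_2\circ\mu_1:s_I\to s_I$ is an injective self-map of a finite set, hence a bijection; this forces $\mu_2$ to be surjective, and symmetrically $\mu_1\circ\mu_2$ forces $\mu_1$ surjective. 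A bijective embedding both preserves and reflects all predicates and is a bijective homomorphism, so its inverse is again a homomorphism; thus $\mu_1$ is a $\Sigma_I$-isomorphism. I would then set $\mu:=\mu_1$ and $\nu:=\nu_1$, noting that the square $\nu_1\circ s=t\circ\mu_1$ commutes by construction, so the only thing left to check is that $\nu_1$ is an isomorphism.

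The main obstacle is precisely this last point: embeddings need not be surjective, so showing $\nu_1$ is onto $t_E$ is where the generation property of configurations is essential. Since $\mu_1$ is now surjective, every index of $t_I$ is $\mu_1(c)$ for some index $c$ of $s_I$, and the commuting square gives $t(\mu_1(c))=\nu_1(s(c))$; hence the image of $t$ equals $\nu_1$ applied to the image of $s$. As $\nu_1$ is a homomorphism, $\nu_1(s_E)$ is a $\Sigma_E$-substructure of $t_E$, and it contains the image of $t$; but $t_E$ is by definition the \emph{smallest} $\Sigma_E$-substructure containing the image of $t$, so $t_E\subseteq\nu_1(s_E)\subseteq t_E$, i.e.\ $\nu_1$ is surjective. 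Being a bijective embedding, $\nu_1$ is a $\Sigma_E$-isomorphism, which completes the proof. The uniqueness of $\nu$ given $\mu$, asserted in the footnote, then follows from the Robinson Diagram Lemma, since the image of $s$ generates $s_E$ and an embedding out of a generated structure is determined by its action on the generators.
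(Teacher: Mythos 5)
Your proof is correct and follows essentially the same route as the paper's: finiteness of the supports of $s_I$ and $t_I$ forces $\mu_1,\mu_2$ to be bijections (hence isomorphisms), and the fact that $s_E,t_E$ are generated by the images of $s,t$ forces $\nu_1$ to be onto. The only inessential difference is that you obtain surjectivity of $\nu_1$ from the surjectivity of $\mu_1$, the commuting square, and the minimality of $t_E$, whereas the paper runs a second cardinality argument on the finite sets of generators; both are instances of the same idea.
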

\begin{proof}
  The implication `$\Leftarrow$' is straightforward and thus we detail
  only `$\Rightarrow$' in the following.  The supports of $s_I$ and of
  $t_I$ are finite, hence the existence of embeddings
  $s_I{\buildrel{\mu_1}\over{\longrightarrow}}
  t_I\buildrel{\mu_2}\over\longrightarrow s_I$ means (for cardinality
  reasons) that $\mu_1, \mu_2$ are bijections, hence
  isomorphisms. Since the images of $s$ and $t$ are finite sets of
  generators for $s_E$ and $t_E$, respectively, we have embeddings
  $s_E\buildrel{\nu_1}\over\longrightarrow
  t_E\buildrel{\nu_2}\over\longrightarrow s_E$ mapping generators into
  generators: again,
for cardinality reasons,
$\nu_1, \nu_2$ restrict
  to bijections among generators, which means that they are
  isomorphisms.
\end{proof}
\begin{defi}
  A \emph{basis} for a finitely generated upset $S$ (resp., for an
  $\exists^I$-formula $K$) is a minimal finite  set $\{s_1, \dots, s_n\}$
  such that $S$ (resp., $\mywidehat{K}$) is equal to $\uparrow\!
  s_1\cup \cdots\cup \uparrow\! s_n$.
\end{defi}
It is easy to see that two bases for the same upset are essentially
the same, in the sense that \emph{they are formed by pairwise
  equivalent configurations}. Suppose in fact that $\{s_1, \dots,
s_n\}$ and $\{s'_1, \dots, s'_m\}$ are two bases for the same
upset. Then for every $s_i$ there exists $s'_j$ such that $s'_j\leq
s_i$; however, there is also $s_k$ with $s_k\leq s'_j$ (because
$\{s_1, \dots, s_n\}$ is a basis) and by minimality it follows that
$s_i=s_k$, which means that $s_i$ and $s'_j$ are equivalent. Thus each
member of a basis is equivalent to a member of the other (and to a
unique one by minimality again) and vice versa; in particular, we also
have that $m=n$.
\begin{lemma}\label{lem:basis}
Suppose $T_E$ is locally finite.
  A configuration $s$ belongs to a basis for an $\exists^I$-formula
  $K$ iff $s\in \mywidehat{K}$ and for every $s'$ ($s'\leq s$ and
  $s'\in \mywidehat{K}$) imply that $s\approx s'$.
\end{lemma}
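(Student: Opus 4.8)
The plan is to observe that the stated condition says precisely that $s$ is a $\leq$-minimal element of $\mywidehat{K}$, minimal up to the equivalence $\approx$: namely $s\in\mywidehat{K}$ and no strictly smaller configuration lies in $\mywidehat{K}$. Since $T_E$ is locally finite, Proposition~\ref{prop:conf1} guarantees that $\mywidehat{K}$ is a finitely generated upset, so at least one basis $\{s_1,\dots,s_n\}$ exists; I will prove the two implications by relating membership in such a basis to this minimality property. The key remark, used in both directions, is that minimality of a basis forces its elements to be \emph{pairwise incomparable}: if $s_j\leq s_i$ for $i\neq j$, then $\uparrow\!s_i\subseteq\uparrow\!s_j$, so $s_i$ could be dropped, contradicting minimality.

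For the forward implication, I assume $s=s_i$ for some basis $\{s_1,\dots,s_n\}$. Then $s\in\uparrow\!s_i\subseteq\mywidehat{K}$, giving the first half of the condition. If $s'\leq s$ with $s'\in\mywidehat{K}=\bigcup_j\uparrow\!s_j$, I pick $j$ with $s_j\leq s'$; then $s_j\leq s'\leq s_i$ yields $s_j\leq s_i$, and pairwise incomparability forces $j=i$. Hence $s=s_i\leq s'$, which together with $s'\leq s$ gives $s\approx s'$, exactly as required.

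For the converse, I assume $s\in\mywidehat{K}$ is minimal up to $\approx$, and fix any basis $\{s_1,\dots,s_n\}$ (which exists by local finiteness and Proposition~\ref{prop:conf1}). Since $s\in\bigcup_j\uparrow\!s_j$, some $s_j\leq s$; as $s_j\in\uparrow\!s_j\subseteq\mywidehat{K}$, applying the hypothesis to $s'=s_j$ yields $s\approx s_j$. I then replace $s_j$ by $s$ and claim that $\{s_1,\dots,s_{j-1},s,s_{j+1},\dots,s_n\}$ is again a basis containing $s$. Because $s\approx s_j$ implies $\uparrow\!s=\uparrow\!s_j$, this new set still covers $\mywidehat{K}$; and it is still minimal, since $s$ inherits the comparabilities of $s_j$: if $s_k\leq s$ or $s\leq s_k$ for some $k\neq j$, then $s_k\leq s_j$ or $s_j\leq s_k$, contradicting pairwise incomparability of the original basis. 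In particular $s\neq s_k$ for all $k\neq j$, so the set does not collapse and genuinely contains $s$.

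The only place where real care is needed—and hence the main obstacle—is this last swap argument: I must verify that substituting the $\approx$-equivalent configuration $s$ for $s_j$ preserves both the covering property and minimality. This rests on the two facts that $\approx$-equivalent configurations generate the same cone (so the cover is unchanged) and that an equivalent element cannot introduce any new comparabilities with the remaining basis members. Everything else reduces to elementary bookkeeping of cones in a pre-order, together with the existence of a basis furnished by local finiteness via Proposition~\ref{prop:conf1}.
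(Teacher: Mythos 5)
Your proof is correct and follows essentially the same route as the paper's: the forward direction via pairwise incomparability of basis elements, and the converse via the existence of a basis (from local finiteness and Proposition~\ref{prop:conf1}(ii)) followed by swapping in the $\approx$-equivalent configuration. You merely spell out the swap argument that the paper dismisses with ``it is now clear,'' which is a harmless elaboration, not a different method.
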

\begin{proof}
  Let $B$ be a basis for $K$ and let also $s\in B$, $s'\leq s$ and
  $s'\in \mywidehat{K}$; then $s'$ is bigger than some configuration
  from $B$, which must be $s$, because elements from $B$ are
  incomparable: $s\approx s'$ follows immediately. Conversely, suppose
  that $s\in \mywidehat{K}$ and for every $s'$, $s'\leq s$ and $s'\in
  \mywidehat{K}$ imply that $s\approx s'$.  Since $T_E$ is locally
  finite, $K$ has a basis $B$ (this can be immediately deduced from
  Proposition~\ref{prop:conf1}(ii)). We have $b\leq s$ (and also
  $s\approx b$) for some $b$ from $B$: it is now clear that we can get
  another basis for $K$ by replacing in $B$ the configuration $b$ with
  $s$.
\end{proof}
Our goal is to integrate the safety invariant method into the basic
Backward Reachability algorithm of Figure~\ref{fig:reach-algo}(a).  To
this end, we introduce the
notion of `sub-reachability.'
\begin{defi}[Subreachable configurations]
  \label{def:subreach-configs}
  Suppose $T_E$ is locally finite and let $s$ be a configuration.  A
  \emph{predecessor} of $s$ is any $s'$ that belongs to a basis for
  ${Pre(\tau, K_s)}$ (see Proposition~\ref{prop:conf1} for the
  definition of $K_s$).  Let $s, s'$ be configurations: $s$ is
  \emph{sub-reachable} from $s'$ iff there exist configurations $s_0,
  \dots, s_n$ such that (i) $s_0=s$, (ii) $s_n=s'$, and (iii) either
  $s_{i-1}\leq s_i$ or $s_{i-1}$ is a predecessor of $s_i$, for each
  $i=1, \dots, n$.  If $K$ is an $\exists^I$-formula, \emph{$s$ is
    sub-reachable from $K$} iff $s$ is sub-reachable from some $s'$
  taken from a basis of ${K}$.
\end{defi}
The following is the main technical result of this section.
\begin{thm}
  \label{th:inv}
  Let $T_E$ be locally finite.  If there exists a safety invariant for
  $U,$ then there are finitely many $A_I^E$-configurations $s_1,
  \dots, s_k$ which are sub-reachable from $U$ and such that $\neg
  (K_{s_1}\vee\cdots\vee K_{s_k})$ is also a safety invariant for $U$.
\end{thm}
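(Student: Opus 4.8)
The plan is to reformulate the existence of a safety invariant as the existence of a well-behaved upset and then to carve out of it a finitely generated sub-upset that is still definable by the diagram formulas of finitely many sub-reachable configurations. Let $J$ be the given safety invariant and let $H$ be the $\exists^I$-formula equivalent to $\neg J$. By Lemma~\ref{lem:inv} its three defining conditions become $\mywidehat{I}\cap\mywidehat{H}=\emptyset$, $\mywidehat{Pre(\tau,H)}\subseteq\mywidehat{H}$, and $\mywidehat{U}\subseteq\mywidehat{H}$. By Proposition~\ref{prop:conf1}, $\mywidehat{H}$ is a finitely generated upset, so I would fix a basis $\{t_1,\dots,t_m\}$, whose members are (up to $\approx$) the minimal configurations of $\mywidehat{H}$. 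Two elementary facts will be used repeatedly: first, $Pre$ is monotone, i.e.\ $\mywidehat{K}\subseteq\mywidehat{K'}$ implies $\mywidehat{Pre(\tau,K)}\subseteq\mywidehat{Pre(\tau,K')}$ (immediate from Proposition~\ref{prop:conf} and the definition of $Pre$); second, a short induction shows $\cB(\tau,U)\subseteq\mywidehat{H}$, so $\mywidehat{H}$ over-approximates the backward reachable set while still avoiding $\mywidehat{I}$.

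The heart of the argument is a finiteness trick that exploits the finite basis of $\mywidehat{H}$. Since $\mywidehat{H}=\uparrow\! t_1\cup\cdots\cup\uparrow\! t_m$, every configuration $s\in\mywidehat{H}$ lies above some basis element, so I would fix once and for all a cover map $c$ assigning to each $s\in\mywidehat{H}$ a basis element $c(s)\leq s$. I then define $W\subseteq\{t_1,\dots,t_m\}$ as the smallest set such that (a) $c(u)\in W$ for every $u$ in a basis of $U$, and (b) whenever $t\in W$ and $s''$ is a predecessor of $t$ (an element of a basis of $Pre(\tau,K_t)$), then $c(s'')\in W$. For clause~(b) to make sense I must check that predecessors stay inside $\mywidehat{H}$: if $t\in\mywidehat{H}$ then $\mywidehat{K_t}=\uparrow\! t\subseteq\mywidehat{H}$, so by monotonicity $\mywidehat{Pre(\tau,K_t)}\subseteq\mywidehat{Pre(\tau,H)}\subseteq\mywidehat{H}$, whence every predecessor $s''$ of $t$ lies in $\mywidehat{H}$ and $c(s'')$ is defined. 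Because $W$ is contained in the finite set $\{t_1,\dots,t_m\}$ and each $t\in W$ has only finitely many predecessors (Proposition~\ref{prop:conf1}), the closure defining $W$ stabilizes after finitely many steps, so that $W=\{s_1,\dots,s_k\}$ is finite.

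It then remains to verify that $W$ does the job. First, every $t\in W$ is sub-reachable from $U$: one traces how $t$ entered $W$ and reads off a sub-reachability chain in which each application of $c$ contributes a $\leq$-step (as $c(s)\leq s$) and each predecessor relation contributes a predecessor step, the chain ending at a basis element of $U$. Second, setting $H':=K_{s_1}\vee\cdots\vee K_{s_k}$, I check the three conditions of Lemma~\ref{lem:inv} for $H'$. Condition~\eqref{eq:si3} holds since for each $u$ in a basis of $U$ we have $\uparrow\! u\subseteq\,\uparrow\! c(u)\subseteq\mywidehat{H'}$; condition~\eqref{eq:si1} holds since $\mywidehat{H'}\subseteq\mywidehat{H}$ (each $\uparrow\! t$ with $t\in W$ lies in $\mywidehat{H}$) and $\mywidehat{H}\cap\mywidehat{I}=\emptyset$; and condition~\eqref{eq:si2} holds because $Pre$ distributes over the disjunction, giving $\mywidehat{Pre(\tau,H')}=\bigcup_{t\in W}\mywidehat{Pre(\tau,K_t)}$, while every predecessor $s''$ of a $t\in W$ satisfies $\uparrow\! s''\subseteq\,\uparrow\! c(s'')\subseteq\mywidehat{H'}$ by clause~(b). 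Hence by Lemma~\ref{lem:inv} the formula $\neg H'=\neg(K_{s_1}\vee\cdots\vee K_{s_k})$ is a safety invariant for $U$. The only genuine obstacle is the finiteness of the construction, and it is overcome precisely by covering every newly generated predecessor down into the finitely many minimal configurations of the given invariant $\mywidehat{H}$; without such an invariant the set of configurations sub-reachable from $U$ may well be infinite.
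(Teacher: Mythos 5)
Your proof is correct and follows essentially the same route as the paper's: both reduce the claim to the semantic conditions of Lemma~\ref{lem:inv} and then build the new invariant out of the finitely many basis elements of $\mywidehat{H}$, closing under ``take a predecessor, then descend to a basis element of $H$'', with finiteness coming from the finite basis of $\mywidehat{H}$ and the three conditions verified exactly as you do. The only (harmless) difference is presentational: your fixed cover map $c$ and least closed set $W$ replace the paper's minimal-cover function $\gamma$ and its stabilizing sequence $L_0:=\gamma(U)$, $L_{i+1}:=L_i\vee\gamma(Pre(\tau,L_i))$, which lets you read off sub-reachability directly from $c(s)\leq s$ rather than via the paper's minimality argument.
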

\begin{proof}
  Our goal is to replace an $\exists^I$-formula $H$ satisfying the
  three conditions of Lemma~\ref{lem:inv} with an $\exists^I$-formula
  $L$ whose negation is still a safety invariant for $U$ and whose
  basis is formed by configurations which are all sub-reachable from
  $U$.  To this end, we consider a function $\gamma(S)$ where $S$ is
  an $\exists^I$-formula such that $\mywidehat{S}\subseteq
  \mywidehat{H}$: the function $\gamma(S)$ returns an
  $\exists^I$-formula $K_{a_1}\vee\cdots\vee K_{a_n}$, where $\{a_1,
  \dots, a_n\}\subseteq \mywidehat{H}$ is a minimal set of
  configurations taken from a basis of $H$ such that
  $\mywidehat{S}\subseteq\,\uparrow\! a_1\cup \cdots \cup \uparrow\!
  a_n$.  (Notice that this implies that $\{a_1, \dots, a_n\}$ is a
  basis of $\gamma(S)$ and $\mywidehat{S}\subseteq
  \mywidehat{\gamma(S)}$.)\footnote{There might be many functions
    $\gamma$ satisfying the above specification, we just take one of
    them.  This can be done (by choice axiom) because, given $S$ such
    that $\mywidehat{S}\subseteq \mywidehat{H}$, there always exists a
    minimal set of configurations $\{a_1, \dots, a_n\}$ taken from a
    basis of $H$ such that $\mywidehat{S}\subseteq\,\uparrow\! a_1\cup
    \cdots \cup \uparrow\!  a_n$ (just take any basis for $H$ and
    throw out configurations from it until minimality is acquired).  }

  Now, define the following sequence of $\exists^I$-formulae $L_i$:
  (i) $L_0:=\gamma(U)$ and (ii) $L_{i+1}:= L_i\vee \gamma(Pre(\tau,
  L_i))$.  (The definition is well given because
  $\mywidehat{L_i}\subseteq \mywidehat{H}$ is a consequence of
  \eqref{eq:si3} and \eqref{eq:si2}.)  What remains to be shown is
  that the sequence becomes stable and its fix-point is the desired
  $L$, i.e.\ a safety invariant for $U$ whose basis is formed by
  configurations which are sub-reachable from $U$.

  We first show, by induction on $k$, that every configuration $b$
  that belongs to a basis of ${L_k}$ is sub-reachable from $U$:
  \begin{enumerate}[$\bullet$]
  \item if $k=0$, we have that $\{a_1, \dots, a_n\}$ is a minimal set
    of configurations taken from a basis of $H$ such that
    $\mywidehat{U}\subseteq\,\uparrow\! a_1\cup \cdots \cup \uparrow\!
    a_n$ and $b=a_j$ for some $j=1, \dots, n$. By minimality, there is
    $s$ from a basis of $U$ such that $s\not \in \uparrow \! a_1\cup
    \cdots\cup \uparrow\! a_{j-1}\cup \uparrow\! a_{j+1} \cup
    \uparrow\!  a_n$, which means that $s\in \uparrow\!a_j$, that is
    $a_j\leq s$ and $a_j=b$ is sub-reachable from $U$.
  \item Suppose now $k=i+1>0$.  A basis for $L_i\vee \gamma(Pre(\tau,
    L_i))$ is obtained by joining two bases--- one for $L_i$ and one
    for $\gamma(Pre(\tau, L_i))$---and then by discarding non-minimal
    elements.  As a consequence, if $b$ is in a basis for $L_k$, then
    $b$ is either in a basis for $L_i$ or in a basis for
    $\gamma(Pre(\tau, L_i))$ (or in both). In the former case, we just
    apply induction.  If $b$ is in a basis for $\gamma(Pre(\tau,
    L_i))$, the same argument used in the case $k=0$ shows that $b\leq
    s$ for an $s$ that belongs to a basis for $Pre(\tau, L_i)$.  Now,
    if $c_{i1},\dots, c_{ik_i}$ is a basis of $L_i$, the formula
    $Pre(\tau, L_i)$ is $A^E_I$-equivalent to the disjunction of the
    $Pre(\tau, K_{c_{ij}})$ and consequently $s$ must be in a basis of
    one of the latter (that is, $s$ is a predecessor of some
    $c_{ij}$); since the $c_{ij}$ are sub-reachable by induction
    hypothesis and $b\leq s$, the definition of sub-reachability
    guarantees that $b$ is sub-reachable from $U$.
  \end{enumerate}
  The increasing chain
  \begin{eqnarray*}
    \mywidehat{L_0}\subseteq \mywidehat{L_1}\subseteq \cdots
  \end{eqnarray*}
  becomes stationary, because at each step only configurations from a
  basis of ${H}$ can be added and bases are (unique and) finite by
  definition.  Thus, we have $\mywidehat{L_i}=\mywidehat{L_{i+1}}$ for
  some $i$: let $L$ be $L_i$ for such $i$.

  The fact that $L$ is a safety invariant is straightforward:
  condition $\mywidehat{I} \cap \mywidehat{L} =\emptyset$ follows from
  \eqref{eq:si1} and the fact that $\mywidehat{L}\subseteq
  \mywidehat{H}$, whereas conditions $\mywidehat{U}\subseteq
  \mywidehat{L}$ and $\mywidehat{Pre(\tau, L)}\subseteq \mywidehat{L}$
  follow directly from the above definitions of $L_0$ and $L_{i+1}$
  (we have $\mywidehat{U}\subseteq \mywidehat{\gamma(U)}=
  \mywidehat{L_0}\subseteq \mywidehat{L}$ and for all $i\geq 0$,
  $\mywidehat{Pre(\tau, L_i)}\subseteq \mywidehat{\gamma(Pre(\tau,
    L_i))}\subseteq \mywidehat{L_{i+1}}\subseteq \mywidehat{L}$).
\end{proof}
The intuition underlying the theorem is as follows. Let us call
`finitely representable' an upset which is of the kind $\mywidehat{K}$
for some $\exists^I$-formula $K$ and let $B$ be the set of backward
reachable states.  Usually $B$ is infinite and it is finitely
representable only in special cases (e.g., when the configuration
ordering is a wqo).  Nevertheless, it may sometimes exist a set
$B'\supseteq B$ which is finitely representable and whose complement
is an invariant of the system.  Theorem~\ref{th:inv} ensures us to
find such a $B'$, if any exists.  This is the case of
Example~\ref{ex:only} where not all configurations satisfying
(\ref{eq:inssort}) are in $B$ and $B$ must be enlarged to
encompass such configurations too (only in this way it becomes
finitely representable, witness the fact that backward reachability
diverges).

In practice, Theorem~\ref{th:inv} suggests the following procedure to
find the super-set $B'$.  At each iteration of $\mathsf{BReach}$, the
algorithm represents symbolically in the variable $B$ the
configurations which are backward reachable in $n$ steps; before
computing the next pre-image of $B$, non deterministically replace
some of the configurations in a basis of $B$ with some
sub-configurations and update $B$ by a symbolic representation of the
upset obtained in this way.  As a consequence, if an invariant exists,
we are guaranteed to find it; otherwise, the process may diverge.
Notice that (in the local finiteness hypothesis for $T_E$) the search
space of the configurations which are sub-reachable in $n$ steps is
finite, although this search space is infinite if no bound on $n$ is
fixed.  To illustrate,
%
(\ref{eq:inssort}) in
Example~\ref{ex:only} contains some sub-reachable only configurations.
This shows that sub-reachability is crucial for Theorem~\ref{th:inv}
to hold.

The algorithm sketched above can be refined further so as to obtain a
completely symbolic method working with formulae without resorting to
configurations.
The key idea to achieve this is to rephrase in a symbolic setting the
relevant notions concerning sub-reachability.  However, this goal is
best achieved incrementally as there are some subtle aspects to take
care of.  The starting point is the following observation.  It is not
possible to characterize the fact that a configuration $(s, \cM)$ is
part of a basis for an $\exists^I$-formula $\exists\ui\, \phi(\ui,
a[\ui])$ by using another $\exists^I$-formula (a universal quantifier
is needed to express the suitable minimality requirement).  Instead,
we shall characterize by an $\exists^I$-formula the fact that a tuple
satisfying $\phi(\ui, a[\ui])$ generates a submodel which is a
configuration belonging to a basis (see Lemma~\ref{lem:min} below).
Notice that the simple fact that the tuple satisfies $\phi$ is not
sufficient alone: for instance, only pairs formed by \emph{identical}
elements satisfying $a[i_1]=a[i_2]$ generate a configuration in a
basis (tuples formed by pairs of different elements are not minimal).
To generalize this, we introduce the following abbreviation:
\begin{eqnarray}
  \label{eq:mindef}
  Min(\phi, a,\ui) & := &
  \phi(\ui, a[\ui])\wedge
  \bigwedge_{\sigma} \left(\phi(\ui\sigma, a[\ui\sigma])\to
                     \bigwedge_{i\in \ui}\bigvee_t (t\sigma=i)\right)
\end{eqnarray}
where $\phi(\ui,a[\ui])$ is a quantifier-free formula, $t$ ranges over
representative $\Sigma_I(\ui)$-terms, and $\sigma$ ranges over the
substitutions with domain $\ui$ and co-domain included in the set of
representative $\Sigma_I(\ui)$-terms.  The following lemma gives a
semantic characterization of $Min(\phi, a,\ui)$.
\begin{lemma}
  \label{lem:min}
  Consider an $\exists^I$-formula $K\equiv \exists\ui\, \phi(\ui,
  a[\ui])$, an $A^E_I$-model $\cM$, and a variable assignment
  $\mathtt{a}$ in $\cM$ such that $(\cM, \mathtt{a})\models \phi(\ui,
  a[\ui])$.  We have that $(\cM, \mathtt{a})\models Min(\phi, a, \ui)$
  iff the configuration $s$ obtained by restricting $\mathtt{a}(a)$ to
  the $\Sigma_I$-substructure generated by the $\mathtt{a}(\ui)$'s
  belongs to a basis of $K$.\footnote{To make the statement of the
    lemma precise, one should define not just $s$ but also the finite
    index model where $s$ is taken from.  In detail, we take the
    $A^E_I$-model $\cN$ whose $\Sigma_I$-reduct is the restriction of
    $\cM_I$ to the $\Sigma_I$-substructure generated by the
    $\mathtt{a}(\ui)$'s and whose $\Sigma_E$-reduct is equal to
    $\cM_E$. In this model, we can define the array $s$ to be the
    restriction of $\mathtt{a}(a)$ to ${\tt INDEX}^{\cN}\subseteq {\tt
      INDEX}^{\cM}$.  The pair $(s,\cN)$ is now a configuration in the
    sense defined in Section~\ref{subsec:dual}.  }
\end{lemma}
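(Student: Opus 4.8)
The plan is to read the second conjunct of $Min(\phi,a,\ui)$ as a syntactic encoding of minimality and then invoke Lemma~\ref{lem:basis}. Since $T_I$ is (effectively) locally finite, the representative $\Sigma_I(\ui)$-terms $t_1,\dots,t_{k_{\ui}}$ exhaust all possible values of $\Sigma_I(\ui)$-terms, so the elements of the index substructure $s_I$ generated by the tuple $\mathtt{a}(\ui)=:c_1,\dots,c_n$ are precisely the $t_j^{\cM}(c_1,\dots,c_n)$. A substitution $\sigma$ with codomain among the representative terms therefore picks out, under $\mathtt{a}$, a new tuple of indices $c'_1,\dots,c'_n$ lying inside $s_I$; the atom $t\sigma=i$ holds iff $c_i=t^{\cM}(c'_1,\dots,c'_n)$, and the conclusion $\bigwedge_{i\in\ui}\bigvee_t(t\sigma=i)$ holds iff every generator $c_i$ lies in the substructure generated by $c'_1,\dots,c'_n$, equivalently iff that substructure is all of $s_I$. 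Thus $Min$ asserts: whenever a re-selection of indices inside $s_I$ still satisfies $\phi$, it must regenerate the whole of $s_I$. This is exactly the statement that $s$ cannot be shrunk within $\mywidehat{K}$, which by Lemma~\ref{lem:basis} is the basis condition.

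For the direction $Min\Rightarrow$ \emph{``$s$ in a basis''}, I would take an arbitrary configuration $s'\leq s$ with $s'\in\mywidehat{K}$ and prove $s\approx s'$. Fix embeddings $\mu:s'_I\to s_I$, $\nu:s'_E\to s_E$ with $\nu\circ s'=s\circ\mu$, together with a witnessing tuple $\ud$ in $s'_I$ for $s'\in\mywidehat{K}$, so that $\phi_I(\ud)$ and $\phi_E(s'(\ud))$ hold in $s'$. Because embeddings preserve and reflect quantifier-free formulae and the square commutes, the tuple $\mu(\ud)$ in $s_I$ satisfies $\phi$ with the array $s$; transferring up to the superstructure $\cM$ gives $(\cM,\mathtt{a})\models\phi(\ui\sigma,a[\ui\sigma])$, where $\sigma$ sends each $i_k$ to a representative term $t_k$ with $t_k^{\cM}(c_1,\dots,c_n)=\mu(d_k)$ (such $t_k$ exist since $c_1,\dots,c_n$ generate $s_I$ and $T_I$ is locally finite). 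Applying the implication in $Min$ for this $\sigma$ yields that each $c_i$ lies in the substructure generated by $\mu(\ud)$, which is contained in $\mu(s'_I)$; hence $s_I=\mu(s'_I)$ and $\mu$ is onto, so $\mu$ is an isomorphism. The commuting square, together with the fact that the images of $s'$ and $s$ generate $s'_E$ and $s_E$, then forces $\nu$ to be an isomorphism as well (exactly as in Proposition~\ref{prop:equiv}), whence $s\approx s'$, and Lemma~\ref{lem:basis} concludes.

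For the converse I would argue by contraposition. If $Min$ fails, there is a $\sigma$ with $(\cM,\mathtt{a})\models\phi(\ui\sigma,a[\ui\sigma])$ but some generator $c_i$ lying outside the substructure $s''_I$ generated by the image tuple $\mathtt{a}(\ui\sigma)$; thus $s''_I$ is a \emph{proper} substructure of $s_I$. Restricting $\mathtt{a}(a)$ to $s''_I$ yields a sub-configuration $s''\leq s$, and since quantifier-free truth is inherited by substructures the same tuple witnesses $s''\in\mywidehat{K}$, while $s''\not\approx s$ for cardinality reasons ($|s''_I|<|s_I|$, so no $\Sigma_I$-isomorphism $s''_I\cong s_I$ can exist). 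This contradicts the basis condition of Lemma~\ref{lem:basis}, so $Min$ must hold.

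The one genuinely delicate point, and the place where care is needed, is the bookkeeping of quantifier-free truth across the three structures in play---the ambient model $\cM$, the configuration $s$ with its reducts $s_I,s_E$, and the smaller configurations $s'$ or $s''$---transferring it in the correct direction (downward to substructures, upward to superstructures) while keeping the index- and element-parts synchronised through the commuting square $\nu\circ s'=s\circ\mu$. Local finiteness of $T_I$ is used precisely to guarantee that every index of $s_I$ is named by a representative term, so that the finitely many substitutions $\sigma$ in $Min$ range over \emph{all} re-selections of indices within $s_I$; local finiteness of $T_E$ enters only through Lemma~\ref{lem:basis}, to ensure that a basis of $K$ exists at all.
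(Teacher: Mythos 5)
Your proof is correct and follows essentially the same route as the paper's: in one direction you build a substitution $\sigma$ from the witnessing tuple of the smaller configuration via representative $\Sigma_I(\ui)$-terms and use the $Min$ implication to force the index embedding $\mu$ (and hence $\nu$) to be onto, and in the other you turn a failing $\sigma$ into a proper sub-configuration of strictly smaller index cardinality, contradicting the minimality characterization of Lemma~\ref{lem:basis}. The only cosmetic difference is that you write $\phi$ as if it were split into $\phi_I\wedge\phi_E$, which is unnecessary (embeddings preserve and reflect all atoms, so arbitrary quantifier-free $\phi$ transfers), and you keep $\mu$ abstract where the paper assumes it is an inclusion; neither affects correctness.
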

\begin{proof}
  Suppose that $(\cM, \mathtt{a})\models Min(\phi, a, \ui)$ (for
  simplicity, we shall directly call $\ui,a$ the elements assigned by
  $\mathtt{a}$ to $\ui,a$, respectively).  By
  Proposition~\ref{prop:equiv} and Lemma~\ref{lem:basis}, it is
  sufficient to show the following.  Consider $s'\leq s$ such that
  $s'\in \mywidehat{K}$: we show that the embeddings $\mu, \nu$
  witnessing the relation $s'\leq s$ and making the
  diagram
  \begin{center}
    \resetparms
    \setsqparms[+2`+1`+1`+2;500`500]
    \square[s'_I` s_I`s'_E`s_E;\mu`s'`s`\nu]
  \end{center}
  to commute are isomorphisms (in fact, it is sufficient to show only
  that $\mu$ is bijective, because the images of $s'$ and $s$ are
  $\Sigma_E$-generators and the square commutes).  Without loss of
  generality, we can assume that $\mu$ is an inclusion; the domain of
  $s'$ is then formed by elements of the form $t^{(\cM, \mathtt{a})}$
  for suitable (representative) $\Sigma_I(\ui)$-terms $t$ and the fact
  that $s'\in \mywidehat{K}$ means then that $(\cM, \mathtt{a})\models
  \phi(\ui\sigma, a[\ui\sigma])$ holds for a substitution $\sigma$
  whose domain is $\ui$ and whose range is contained into the set of
  those representative $\Sigma_I(\ui)$-terms $u$ such that $u^{(\cM,
    \mathtt{a})}$ is in the support of $s'_I$. Since $(\cM,
  \mathtt{a})\models Min(\phi, a, \ui)$ holds, for every $i\in \ui$
  there is a representative $\Sigma_I(\ui)$-term $t$ such that $(\cM,
  \mathtt{a})\models t\sigma=i$ holds. The latter means that $i$ is in
  the support of $s'_I$, hence the inclusion $\mu$ is onto.

  Conversely, if $s$ belongs to a basis of of $K$, then there is no
  $s'\leq s$ is in $\mywidehat{K}$, unless $s'$ is equivalent to $s$,
  by Lemma~\ref{lem:basis}. Suppose that $(\cM, \mathtt{a})\models
  \phi(\ui\sigma,a[\ui\sigma])$ holds for a substitution $\sigma$
  whose domain is $\ui$ and whose range is included into the set of
  representative $\Sigma_I(\ui)$-terms. For reductio, suppose that
  $(\cM, \mathtt{a})\models t\sigma=i$ does not hold for some $i\in
  \ui$ and all representative $\Sigma_I(\ui)$-terms $t$; we can
  restrict the array $a$ to the $\Sigma_I$-substructure given by the
  elements of the kind $t\sigma^{(\cM, \mathtt{a})}$, thus getting a
  configuration $s'\leq s$ such that $s'\in\mywidehat{K}$. Since the
  finite support of $s'_I$ has smaller cardinality than the support of
  $s_I$ (because $\mathtt{a}(i)$ does not belong to it), we cannot
  have $s'\approx s$, a contradiction!
\end{proof}
\begin{rem}
  \label{rem}
  We identify conditions under which it is trivial to compute
  $Min(\phi, a, \ui)$.  Besides being an interesting observation
  \emph{per se}, it will be used later in this section to illustrate
  simple and useful examples of the key notion of cover (see
  Example~\ref{ex:cover} below).  If (as it often happens in
  applications) the signature $\Sigma_I$ is relational and the formula
  $\phi(\ui, a[\ui])$ is differentiated, $Min(\phi, a, \ui)$ is
  $A^E_I$-equivalent to $\phi(\ui, a[\ui])$: this is because only
  variable permutations can be consistently taken into consideration
  as the $\sigma$'s in formula~\eqref{eq:mindef}, so that the
  $t\sigma$'s are precisely the $\ui$'s.
\end{rem}
\begin{cor}
  \label{coro:min}
  Consider an $\exists^I$-formula $K:= \exists\ui\, \phi(\ui, a[\ui])$
  and a configuration $(s,\cM)$; if $s$ belongs to a basis for $K$,
  then $(\cM, \mathtt{a})\models \phi(\ui, a[\ui])\to Min(\phi, a,
  \ui)$ holds for all $\mathtt{a}$ such that $\mathtt{a}(a)=s$.
\end{cor}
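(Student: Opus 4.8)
The plan is to derive the corollary directly from Lemma~\ref{lem:min}, using Lemma~\ref{lem:basis} to bridge the gap. I fix an assignment $\mathtt{a}$ with $\mathtt{a}(a)=s$ and argue pointwise about the implication $\phi(\ui,a[\ui])\to Min(\phi,a,\ui)$. If $(\cM,\mathtt{a})\not\models\phi(\ui,a[\ui])$ the implication holds vacuously, so I may assume $(\cM,\mathtt{a})\models\phi(\ui,a[\ui])$ and must then establish $(\cM,\mathtt{a})\models Min(\phi,a,\ui)$.

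First I would name the sub-configuration that Lemma~\ref{lem:min} speaks about: let $s'$ be the array obtained by restricting $s=\mathtt{a}(a)$ to the $\Sigma_I$-substructure generated by the tuple $\mathtt{a}(\ui)$, taken in the finite index model described in the footnote of Lemma~\ref{lem:min}. By Lemma~\ref{lem:min}, the desired conclusion $(\cM,\mathtt{a})\models Min(\phi,a,\ui)$ is \emph{equivalent} to the claim that $s'$ belongs to a basis of $K$, so it suffices to prove that. By construction $s'\leq s$ is a sub-configuration (the witnessing $\mu,\nu$ are the inclusions $s'_I\hookrightarrow s_I$ and $s'_E\hookrightarrow s_E$, and the square commutes since $s'$ is the restriction of $s$). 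Moreover $s'\in\mywidehat{K}$: the tuple $\mathtt{a}(\ui)$ lies in $s'_I$ and its $s'$-image lies in $s'_E$, and since truth of quantifier-free formulae is preserved through substructures, $\phi(\ui,a[\ui])$ still holds there, so $s'$ satisfies $K$.

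Now $s$ belongs to a basis of $K$ by hypothesis, so applying Lemma~\ref{lem:basis} to the pair $s'\leq s$ with $s'\in\mywidehat{K}$ yields $s\approx s'$. It remains to upgrade this to ``$s'$ is itself in a basis,'' which is what Lemma~\ref{lem:min} requires. Using the forward direction of Lemma~\ref{lem:basis}: given any $s''\leq s'$ with $s''\in\mywidehat{K}$, transitivity of $\leq$ gives $s''\leq s$, hence $s\approx s''$ (again by Lemma~\ref{lem:basis}, since $s$ is in a basis), and combining with $s\approx s'$ gives $s'\approx s''$. Thus $s'$ satisfies the minimality condition of Lemma~\ref{lem:basis} and belongs to a basis of $K$; Lemma~\ref{lem:min} then delivers $(\cM,\mathtt{a})\models Min(\phi,a,\ui)$, completing the implication for this $\mathtt{a}$, and hence for all $\mathtt{a}$ with $\mathtt{a}(a)=s$.

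The only genuinely delicate point is the bookkeeping around $s'$: the witnessing index tuple $\mathtt{a}(\ui)$ need not generate all of $\cM_I$, so in general $s'$ is a \emph{proper} sub-configuration of $s$, and the whole argument hinges on the fact that membership in a basis forces every such witnessing sub-configuration to be equivalent to $s$. Everything else is a routine chaining of Lemmas~\ref{lem:min} and~\ref{lem:basis} together with preservation of quantifier-free truth through substructures.
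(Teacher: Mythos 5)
Your proof is correct and takes essentially the same route as the paper's: both restrict $s$ to the $\Sigma_I$-substructure generated by the $\mathtt{a}(\ui)$, use Lemma~\ref{lem:basis} to conclude that this sub-configuration is equivalent to $s$ and hence belongs to a basis of $K$, and then invoke Lemma~\ref{lem:min} to obtain $(\cM,\mathtt{a})\models Min(\phi,a,\ui)$. The only difference is one of detail: you make explicit the vacuous case, the verification that $s'\leq s$ and $s'\in\mywidehat{K}$, and the transitivity argument upgrading $s\approx s'$ to basis membership, all of which the paper leaves implicit.
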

\begin{proof}
  If $(\cM, \mathtt{a})\models \phi(\ui, a[\ui])$, then the
  configuration $s'$ obtained by restricting $\mathtt{a}(a)=s$ to the
  $\Sigma_I$-substructure generated by the $\mathtt{a}(\ui)$ is
  equivalent to $s$ by Lemma~\ref{lem:basis} and hence belongs to a
  basis of $K$.  Thus Lemma~\ref{lem:min} applies and gives $(\cM,
  \mathtt{a})\models Min(\phi, a, \ui)$.
\end{proof}
The next step towards the goal of obtaining a completely symbolic
method for mechanizing the result stated in Theorem~\ref{th:inv}
consists of finding a purely symbolic substitute
of the function $\gamma$ used in the proof of Theorem~\ref{th:inv}.
The following result is the key to achieve this.
\begin{proposition}
  \label{prop:sintsem}
  Let $T_E$ be locally finite, $K:= \exists \ui.\phi(\ui, a[\ui])$ be
  an $\exists^I$-formula, and $L$ be an $\exists^I$-formula. The
  following two conditions are equivalent:
  \begin{enumerate}[{\rm (i)}]
  \item for every $s$ in
    a basis for $K$, there exists a configuration $s'$ in a basis for
    $L$ such that $s\leq s'$;
  \item $L$ is (up to $A^E_I$-equivalence) of the form $\exists\ui,
    \uj.\psi(\ui, \uj, a[\ui], a[\uj])$ for a quantifier-free formula
    $\psi$ and
    \begin{eqnarray*}
      \mbox{if }
      A^E_I\models Min(\psi,a, \ui\,\uj)\to \theta(\ut, a[\ut])
      & \mbox{ then } &
      A^E_I\models Min(\phi,a,\ui)\to \theta(\ut, a[\ut]) ,
    \end{eqnarray*}
    for all quantifier free ($\Sigma_E\cup \Sigma_I$)-formula
    $\theta$ and for all tuple of terms $\ut$ taken from the set
    of the representative $\Sigma_I(\ui)$-terms.
 \end{enumerate}
\end{proposition}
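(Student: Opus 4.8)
The plan is to prove both implications by contraposition, using the semantic reading of $Min$ provided by Lemma~\ref{lem:min}, the Robinson Diagram Lemma (Lemma~\ref{lem:robinson}), and the elementary fact that the truth of quantifier-free formulae is preserved and reflected by embeddings. I will use throughout that, since both $T_I$ and $T_E$ are locally finite, for every configuration $t$ the $\Sigma_I$-diagram on a generating tuple $\ui$ and the $\Sigma_E$-diagram on its images $a[\ui]$ can be taken as a single quantifier-free formula $\theta(\ut,a[\ut])$ with $\ut$ ranging over the representative $\Sigma_I(\ui)$-terms; this is exactly the diagram formula $K_t$ of Proposition~\ref{prop:conf1}(i) with its existential prefix removed, decomposed as $\delta_{t_I}(\ut)\wedge\delta_{t_E}(a[\ut])$.

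For (ii)$\Rightarrow$(i), fix $s$ in a basis for $K$ and choose, via Lemma~\ref{lem:min}, a model $\cM$ and an assignment $\mathtt a$ with $\mathtt a(a)=s$ and $(\cM,\mathtt a)\models Min(\phi,a,\ui)$. Let $\theta_0(\ut,a[\ut]):=\delta_{s_I}(\ut)\wedge\delta_{s_E}(a[\ut])$ be the complete diagram of $s$ on $\ui$. Since $(\cM,\mathtt a)\models\theta_0$, the sentence $Min(\phi,a,\ui)\to\neg\theta_0$ fails at $(\cM,\mathtt a)$ and is therefore not $A^E_I$-valid; applying the contrapositive of the implication in (ii) with $\theta:=\neg\theta_0$, the sentence $Min(\psi,a,\ui\uj)\to\neg\theta_0$ is not valid either, so there are $\cM'$ and $\mathtt b$ with $(\cM',\mathtt b)\models Min(\psi,a,\ui\uj)\wedge\theta_0$. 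By Lemma~\ref{lem:min} applied to $L$, the tuple $\mathtt b(\ui\uj)$ generates a configuration $s'$ in a basis of $L$; and the truth of $\theta_0$ on $\mathtt b(\ui)$ and $a[\mathtt b(\ui)]$ yields, through the Robinson Diagram Lemma, a $\Sigma_I$-embedding $\mu:s_I\to s'_I$ and a $\Sigma_E$-embedding $\nu:s_E\to s'_E$ with $s'\circ\mu=\nu\circ s$, the commutation being forced by the $a[\ut]$-literals of $\theta_0$ linking the $\Sigma_I$-generators to their $\Sigma_E$-images. Hence $s\leq s'$ with $s'$ in a basis of $L$, which is (i).

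For (i)$\Rightarrow$(ii), I would first realize $L$ in the prescribed form: starting from a basis $\{s'_1,\dots,s'_p\}$ of $L$ and the diagram formulae $K_{s'_r}$ of Proposition~\ref{prop:conf1}, take $\uj$ long enough to enumerate the generators of every $s'_r$ and let $\psi$ be the disjunction of these diagram formulae in which the generators of each $s'_r$ are distributed over $\ui\,\uj$ in all admissible ways (padding $\uj$ by repetitions). Then $\exists\ui\,\uj.\psi$ is $A^E_I$-equivalent to $\bigvee_r K_{s'_r}\equiv L$, and $\psi$ is flexible enough that any subtuple of the generators of an $s'_r$ may serve as the $\ui$-part. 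For the implication on $\theta$ I again argue by contraposition: if $Min(\phi,a,\ui)\to\theta$ is not valid, a witness gives, as above, a basis configuration $s$ of $K$ with $(\cM,\mathtt a)\models\neg\theta(\ut,a[\ut])$; by (i) there is a basis configuration $s'$ of $L$ with $s\leq s'$ via embeddings $\mu,\nu$ making the square commute. Setting $\mathtt b(a):=s'$, $\mathtt b(\ui):=\mu(\mathtt a(\ui))$, and letting $\mathtt b(\uj)$ enumerate the remaining generators of $s'$, the choice of $\psi$ together with Lemma~\ref{lem:min} gives $(\cM',\mathtt b)\models Min(\psi,a,\ui\uj)$, while preservation and reflection of the quantifier-free $\theta$ along $\mu$ (on the $\Sigma_I$-part $\ut$) and $\nu$ (on the $\Sigma_E$-part $a[\ut]$), using $s'\circ\mu=\nu\circ s$, gives $(\cM',\mathtt b)\models\neg\theta$. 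Thus $Min(\psi,a,\ui\uj)\to\theta$ is not valid, the contrapositive of what was required.

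The step needing the most care, and the one I expect to be the main obstacle, is the construction of $\psi$ in the shape demanded by (ii): one must split the existential prefix of $L$ into the distinguished group $\ui$, of the same length as the prefix of $K$, and an auxiliary group $\uj$, uniformly across all basis configurations of $L$, in such a way that the image $\mu(\mathtt a(\ui))$ of the generators of an arbitrary $K$-basis configuration can be realized as the $\ui$-part of an assignment satisfying $Min(\psi,a,\ui\uj)$. The remaining bookkeeping---verifying that $\theta_0$ (and $\neg\theta_0$) genuinely has the admissible shape $\theta(\ut,a[\ut])$, that the embeddings land inside $s'_I$ and $s'_E$ rather than merely in the ambient model, and that the many-sorted $\mu,\nu$ act separately on the $\Sigma_I$- and $\Sigma_E$-parts of $\theta$---is routine given Lemmas~\ref{lem:robinson} and~\ref{lem:min} and Proposition~\ref{prop:conf1}.
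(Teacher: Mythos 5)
Your proof is correct and follows essentially the same route as the paper's: both directions hinge on instantiating $\theta$ with the (negated) diagram formula of the relevant configuration and then invoking the Robinson Diagram Lemma, Lemma~\ref{lem:min}/Corollary~\ref{coro:min}, and preservation/reflection of quantifier-free formulae along the embeddings $\mu,\nu$ of the commuting square. The only divergence is cosmetic: in (i)$\Rightarrow$(ii) you build $\psi$ as the disjunction of the diagrams of \emph{all} basis configurations of $L$ under \emph{all} distributions of their generators over $\ui\,\uj$, whereas the paper disjoins $L$ with the diagram formulae~\eqref{eq:dd} whose $\ui$-block is tailored to the embeddings $\mu_s$ supplied by hypothesis (i); both constructions secure precisely the realizability property you flag as the main obstacle, namely that $\mu_s(\mathtt{a}(\ui))$ can serve as the $\ui$-part of an assignment satisfying $Min(\psi,a,\ui\,\uj)$.
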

\begin{proof}
  Assume (i). We first apply a syntactic transformation to $L$ as
  follows.  Let $B, B'$ be bases for $K, L$, respectively; we know
  that for every $(s, \cM_s)\in B$ there is $(s^L, \cM_s^L)\in B'$
  such that $s\leq s^L$: the relationship $s\leq s^L$ is due to the
  existence of a pair of embeddings $(\mu_s, \nu_s)$ as required by
  the configuration ordering definition. For every $s\in B$ and for
  every assignment $\mathtt{a}$ such that $\mathtt{a}(a)=s$ and
  $(\cM_s, \mathtt{a})\models \phi(\ui, a[\ui])$, we build the diagram
  formula $K_{\mathtt{a}}$ for $s^L$ given by
  \begin{equation}
    \label{eq:dd}
    \exists\ui\,\exists\uk\, (\delta_{s^L_I}(\ui, \uk)
      \wedge \delta_{s^L_E}(a[\ui], a[\uk]))
    \end{equation}
    where the variables $\uk$ are names for the elements in the
    complement subset $supp(s^L_I)\backslash\mu_s(\mathtt{a}(\ui))$
    (here $supp(s^L_I)$ is the support of the $\Sigma_I$-structure
    $s^L_I$).  Notice that the formula \eqref{eq:dd} is nothing but
    formula \eqref{eq:diagformula} used in the proof of
    Proposition~\ref{prop:conf1}(i).\footnote{It might happen here
      that duplicate variables are used because the $\mathtt{a}(\ui)$
      need not be distinct. This is not a problem: if different index
      variables (say $i_1, i_2$) naming the same element are employed,
      the diagram formula will contain a conjunct like $i_1=i_2$.  The
      embedding property of Robinson Diagram Lemma is not affected by
      these duplications.  } Since, for a configuration $t$, the fact
    that $t\in\mywidehat{K_{\mathtt{a}}}$ means that there are
    suitable embeddings witnessing that $s^L\leq t$, we have that
    $\mywidehat{L}=\mywidehat{L\vee\bigvee_{\mathtt{a}}
      K_{\mathtt{a}}}$, hence by Proposition~\ref{prop:conf} the
    formula $L$ is $A^E_I$-equivalent to $L\vee\bigvee_{\mathtt{a}}
    K_{\mathtt{a}}$.\footnote{The assignments are infinite, but only
      finitely many variables are mentioned in them, so that only
      finitely many formulae $K_{\mathtt{a}}$ can be produced.}  Up to
    logical equivalence, we can move the existentially quantified
    variables outside the disjunctions so that $L$ is equivalent to a
    prenex existential formula of the kind $\exists\ui\exists\uj
    \psi$. With this new syntactic form, the following property holds:
    for every $s\in B$ and for every assignment $\mathtt{a}$ such that
    $\mathtt{a}(a)=s$ and $(\cM_s, \mathtt{a})\models \phi(\ui,
    a[\ui])$, there is an assignment $\mathtt{a}^L$ such that (i)
    $(\cM_s^L, \mathtt{a}^L)\models \psi(\ui, \uj, a[\ui], a[\uj])$,
    (ii) $\mathtt{a}^L(\ui)=\mu_s(\mathtt{a}(\ui))$, and (iii)
    $\mathtt{a}^L(a)= s^L$. Since $s^L$ is in a basis of $L$, from
    Corollary~\ref{coro:min}, it follows also that $(\cM_s^L,
    \mathtt{a}^L)\models Min(\psi, a, \ui\,\uj)$.

    Suppose now that $A^E_I\not\models Min(\phi,a,\ui)\to
    \theta(\ut(\ui), a[\ut(\ui)])$; by Lemma~\ref{lem:min} (and by the
    fact that $\phi, \theta$ are quantifier-free) this means that
    there are a configuration $(s, \cM_s)\in B$ and an assignment
    $\mathtt{a}$ such that $(\cM_s, \mathtt{a})\models \phi(\ui,
    a[\ui])$ and $(\cM_s, \mathtt{a})\not \models \theta(\ut,
    a[\ut])$.  Since $\theta$ is quantifier-free, taking the
    assignment $\mathtt{a}^L$ satisfying (i)-(ii)-(iii) above, we get
    that $(\cM_s^L, \mathtt{ a}^L)\not \models \theta(\ut, a[\ut])$,
    thus also $(\cM_s^L, \mathtt{ a}^L)\not \models Min(\psi, a,
    \ui\,\uj) \to\theta(\ut, a[\ut])$.

    Conversely, assume (ii). Fix $(s, \cM_s)$ in a basis $B$ for $K$
    and an assignment $\mathtt{a}$ such that $(\cM_s,
    \mathtt{a})\models \phi(\ui, a[\ui])$; by
    Corollary~\ref{coro:min}, we have that $(\cM_s, \mathtt{a})\models
    Min(\phi,a,\ui)$.  Let $\ut$ be the representative
    $\Sigma_I(\ui)$-terms and let $\theta(\ut(\ui), a[\ut(\ui)])$ be
    the negation of the formula $\delta_{s_I}(\ut(\ui))\wedge
    \delta_{s_E}(a[\ut(\ui)])$.
    We have $(\cM_s, \mathtt{a})\not\models Min(\phi,a,\ui) \to
    \theta(\ut, a[\ut])$, hence there are $\cN$ and $\mathtt{b}$ such
    that $(\cN, \mathtt{b})\not\models Min(\psi,a,\ui\,\uj) \to
    \theta(\ut, a[\ut])$. By restricting the support of $\cN_I$ if
    needed, we can suppose that $\cN$ is a finite index model and that
    $\cN_I$ is generated by the elements assigned by $\mathtt{b}$ to
    the $\ui, \uj$.  Let $s'$ be $\mathtt{b}(a)$: from
    Lemma~\ref{lem:min} it follows that $s'$ is in a basis for $L$;
    also, from the fact that $(\cN, \mathtt{b})\not\models\theta(\ut,
    a[\ut])$, we can conclude that $s\leq s'$, as desired.
\end{proof}
In the following, we will write $K\leq L$ whenever one of the
(equivalent) conditions in Proposition~\ref{prop:sintsem} holds. We
show that, under the working assumption that $T_E$ is locally finite,
it is possible to compute all the finitely many (up to
$A^E_I$-equivalence) $\exists^I$-formulae $K$ such that $K\leq L$.
\begin{proposition}
  \label{prop:finitelymany}
  Let $T_E$ be locally finite.  Given an $\exists^I$-formula $L$,
  there are only finitely many (up to $A^E_I$-equivalence)
  $\exists^I$-formulae $K$ such that $K\leq L$ and all such $K$ can be
  effectively computed.
\end{proposition}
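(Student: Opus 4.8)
The plan is to turn the relational condition $K\leq L$ into a statement about bases and then to bound the search. By the equivalence established in Proposition~\ref{prop:sintsem}, $K\leq L$ holds exactly when every configuration in a basis of $K$ lies below some configuration in a basis of $L$. So I would first compute, once and for all, a basis $\{t_1,\dots,t_m\}$ of $L$: Proposition~\ref{prop:conf1}(ii) (which applies since $T_E$ is locally finite) yields a finite set of configurations generating $\mywidehat{L}$, and I can prune it to a genuine basis by discarding any $t_i$ that is not minimal, a test that is decidable because $t_j\leq t_i$ iff $A^E_I\models K_{t_i}\to K_{t_j}$ (Proposition~\ref{prop:conf} together with Theorem~\ref{th:decidability}).

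The key observation is that membership in a basis of such a $K$ forces a uniform bound on index supports. If $s$ is a configuration with $s\leq t_j$, then by definition there is an injective $\Sigma_I$-embedding $s_I\hookrightarrow (t_j)_I$, whence the support of $s_I$ has cardinality at most $N:=\max_{j}\,\lvert (t_j)_I\rvert$. Consequently, if $K\leq L$ then every configuration in a basis of $K$ has index support of size at most $N$. Let $D$ denote the set of all configurations (taken up to the equivalence $\approx$) whose index support has size at most $N$ and which satisfy $s\leq t_j$ for some $j$.

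Next I would argue that $D$ is finite and effectively enumerable. Since $T_I$ is locally finite, the signature $\Sigma_I$ is finite, so there are only finitely many $\Sigma_I$-structures of cardinality at most $N$ up to isomorphism; since $T_E$ is locally finite, the substructure $s_E$ generated by the (at most $N$) elements in the image of $s$ ranges over finitely many isomorphism types, and there are finitely many ways to label the indexes by generators. By Proposition~\ref{prop:equiv}, a configuration is determined up to $\approx$ by precisely this data, so there are finitely many configurations of index support at most $N$ up to equivalence, and effective local finiteness lets me list representatives of all of them. For each representative $s$ I can then decide whether $s\leq t_j$ for some $j$, again because $s\leq t_j$ iff $A^E_I\models K_{t_j}\to K_s$ is decidable by Theorem~\ref{th:decidability}; this produces $D$.

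Finally, I would read off all the $K$. Every $K$ with $K\leq L$ is $A^E_I$-equivalent to $\bigvee_{s\in S}K_s$, where $S$ is its basis, a finite subset of $D$; conversely, for any $S\subseteq D$ the disjunction $\bigvee_{s\in S}K_s$ satisfies $\bigvee_{s\in S}K_s\leq L$, since the minimal elements of $S$ still lie below the $t_j$. As $D$ is finite there are only finitely many such subsets, hence finitely many $K$ up to $A^E_I$-equivalence, and each $K_s$ is the computable diagram formula of Proposition~\ref{prop:conf1}(i); duplicate equivalence classes can be removed using decidable equivalence tests. I expect the main obstacle to be the middle step: obtaining the uniform support bound and verifying that the equivalence classes of configurations of bounded support are finite and machine-enumerable, which is exactly the place where local finiteness of both $T_I$ and $T_E$ must be used together.
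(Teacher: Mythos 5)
Your proof is correct, but it runs at a different level than the paper's: you argue semantically with configurations, where the paper argues symbolically with formulae. The paper's proof puts $L$ in the form $\exists\uk\,\gamma$, bounds the lengths of the prefixes $\ui,\uj$ of candidate formulae by the maximal support size $N$ of configurations in a basis for $L$ (the bound for $\ui$ needs a side argument via minimal-length prefixes and Proposition~\ref{prop:conf1}), and then enumerates and tests the finitely many candidates against the symbolic criterion of Proposition~\ref{prop:sintsem}(ii), with finiteness of the candidate space and decidability of the tests coming from local finiteness of $T_I$ and $T_E$ and from Theorem~\ref{th:decidability}. You instead use the configuration criterion of Proposition~\ref{prop:sintsem}(i): after establishing the same bound $N$, you enumerate up to $\approx$ the finite set $D$ of configurations of support at most $N$ lying below a basis element of $L$, and you show that the admissible $K$ are, up to $A^E_I$-equivalence, exactly the disjunctions $\bigvee_{s\in S}K_s$ with $S\subseteq D$. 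The mathematical core (the bound $N$, local finiteness of both theories, Theorem~\ref{th:decidability} for effectiveness) is shared, but each organization buys something different: yours makes soundness structural---every $S\subseteq D$ automatically satisfies $\bigvee_{s\in S}K_s\leq L$, so nothing needs to be tested against the covering criterion---and it dispenses with the paper's minimal-prefix argument entirely; the paper's version stays purely symbolic, which is what the surrounding section needs for the $\mathsf{ChooseCover}$-based procedure $\mathsf{SInv}$. One point you should make explicit rather than leave to ``effective local finiteness'': the enumeration of $D$ (deciding which finite $\Sigma_I$-structures are models of $T_I$, which finite $\Sigma_E$-structures embed into models of $T_E$, and comparing configurations via entailments between diagram formulae) also uses closure of $T_I$ under substructures and the decidability results of Theorem~\ref{th:decidability}; these are guaranteed by the paper's standing Assumption, but they do real work in your argument.
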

\begin{proof}
  Suppose that $L$ is of the form $\exists \uk \gamma$.  To use the
  criterion of Proposition~\ref{prop:sintsem}(ii) in an effective way,
  we only need to find a bound for the length of the tuples $\ui$ and
  $\uj$. In fact, once the bound is known the search space for
  formulae of the forms $\exists\ui \,\exists\uj \, \psi$ and $\exists
  \ui \, \phi$ satisfying the conditions (which can be effectively
  checked by using Theorem~\ref{th:decidability})
  \begin{eqnarray*}
    & A^E_I\models \exists \uk \gamma\leftrightarrow \exists\ui
    \,\exists\uj \, \psi, \qquad {\rm and~for~all}~\theta(\ut(\ui),
    a[\ut(\ui)])~~~~~~~~~~~~~~~~~~~~~~~~~ \\ &
    A^E_I\models Min(\psi,a, \ui\,\uj)\to \theta(\ut, a[\ut])\quad
    \Rightarrow \quad A^E_I\models Min(\phi,a,\ui)\to \theta(\ut,
    a[\ut])
  \end{eqnarray*}
  is finite.  This is because $T_I$ and $T_E$ are both locally finite
  and hence, there are only finitely many quantifier-free formulae of
  the required type involving a fixed number of index variables which
  are not $A^E_I$-equivalent. The proof of
  Proposition~\ref{prop:sintsem} shows that the lengths of $\ui$ and
  $\uj$ are both bounded by the maximum cardinality $N$ of the support
  of $s_I$, where $s_I$ is a configuration that belongs to a basis for
  $L\equiv\exists\uk \, \gamma$.  For $\uj$, this is clear from the
  proof itself while for $\ui$, it is a consequence of the following
  considerations.  First, we can restrict the search to formulae $K$
  of the form $\exists \ui \, \phi$, where the length of $\ui$ is
  minimal, i.e.\ $K$ is not be equivalent to a formula with a shorter
  existential prefix.  Furthermore, by Proposition~\ref{prop:conf1},
  $K$ is equivalent to $K_{s_1}\vee \cdots \vee K_{s_n}$, where $\{
  s_1, \dots, s_n\}$ is a basis for $K$.  In turn,
  by~\eqref{eq:diagformula}, this means that there must exist a
  configuration $t$ in a basis for ${K}$ such that the cardinality of
  $t_I$ is bigger than or equal to the length of $\ui$; since $t\leq
  s$ for some $s$ in a basis for $L$ (see
  Proposition~\ref{prop:sintsem}(ii)), we have that the length of
  $\ui$ cannot exceed $N$.  To conclude, it is sufficient to observe
  that $N$ cannot be bigger than the number of the representative
  $\Sigma_I(\uk)$-terms.
\end{proof}
\begin{defi}
  We say that \emph{$K$ covers $L$} iff both $K\leq L$ and
  $A^E_I\models L\to K$.
\end{defi}
The following example illustrates the notions just introduced and will
be useful also when discussing the implementation of our invariant
synthesis technique (see Section~\ref{sec:back+inv-heu} below).
\begin{example}\em
  \label{ex:cover}
  Let $\Sigma_I$ be relational and $T_E $ be a locally finite theory
  admitting elimination of quantifiers.  Let
  \begin{eqnarray}
    \label{eq:inv_candidate}
    L & := &
    \exists \ui\,\uj.(\psi_E(a[\ui], a[\uj])  \wedge
                    \psi_I(\ui, \uj)        \wedge
                    \delta_I(\ui) )          
  \end{eqnarray}
  be a primitive differentiated and $A^E_I$-satisfiable
  $\exists^I$-formula such that (i) $\ui\cap \uj=\emptyset$, (ii)
  $\psi_E(\ue, \ud)$ is a conjunction of $\Sigma_E$-literals; (iii)
  $\psi_I(\ui, \uj)$ is a conjunction of $\Sigma_I$-literals; (iv)
  $\delta_I(\ui)$ is a maximal conjunction of $\Sigma_I(\ui)$-literals
  (i.e. for every $\Sigma(\ui)$-atom $A(\ui)$, $\delta_I$ contains
  either $A(\ui)$ or its negation).  If
  \begin{eqnarray}
    \label{eq:res}
    K &:=& \exists \ui \,(\delta_I(\ui)\wedge \phi_E(a[\ui])),
  \end{eqnarray}
  where $\phi_E(\ue)$ is $T_E$-equivalent to $\exists \ud\,
  \psi_E(\ue, \ud)$,\footnote{$\phi_E$ is guaranteed to exist as $T_E$
    admits elimination of quantifiers.} then $K$ covers $L$ and in
  particular $K\leq L$.  We prove this fact in the following.
  \begin{proof}
    We use Proposition~\ref{prop:sintsem}(ii): as shown in
    Remark~\ref{rem}, since $L$ and $K$ are differentiated, we can
    avoid mentioning the corresponding formulae $Min$ in the condition
    of Proposition~\ref{prop:sintsem}(ii) and just prove that
    \begin{eqnarray*}
      & A^E_I\not \models
      \delta_I(\ui)\wedge \phi_E(a[\ui])\to \theta(\ui, a[\ui]) ~~
      \Rightarrow~  \\
      &~\hskip 3.5cm A^E_I\not \models  \delta_I(\ui)\wedge \psi_I(\ui, \uj)\wedge \psi_E(a[\ui], a[\uj]) \to
      \theta(\ui, a[\ui])
    \end{eqnarray*}
    for every $\theta$ (notice that, since $\Sigma_I$ is relational,
    the only $\Sigma_I(\ui)$-terms are the $\ui$).  Pick a model $\cM$
    and an assignment $\mathtt{a}$ such that $(\cM,
    \mathtt{a})\models\delta_I(\ui)\wedge \phi_E(a[\ui])$ and $(\cM,
    \mathtt{a})\not\models\theta(\ui, a[\ui])$. We can freely assume
    that that the support of $\cM_I$ is a $\Sigma_I$-structure
    generated by the $\mathtt{a}(\ui)$; by modifying the value of
    $\mathtt{a}$ on the element variables $\ud$, if needed, we can
    also assume that $(\cM, \mathtt{a})\models\psi_E(a[\ui], \ud)$
    (this is because $\phi_E(\ue)$ is $T_E$-equivalent to $\exists
    \ud\, \psi_E(\ue, \ud)$). Since $L$ is consistent, there are also
    a model $\cN$ and an assignment $\mathtt{b}$ such that $(\cN,
    \mathtt{b})\models \delta_I(\ui)\wedge \psi_I(\ui, \uj)\wedge
    \psi_E(a[\ui], a[\uj])$. Again, we can assume that the support of
    $\cN_I$ is a $\Sigma_I$-structure generated by the
    $\mathtt{a}(\ui, \uj)$; since $\delta_I(\ui)$ is maximal, it is a
    diagram formula, hence (up to an isomorphism) $\cM_I$ is a
    substructure of $\cN_I$. Let us now take the model $\cN'$, whose
    $\Sigma_I$-reduct is $\cN_I$ and whose $\Sigma_E$-reduct is
    $\cM_E$. Let $\mathtt{b}'$ be the assignment which is like
    $\mathtt{b}$ as far as the index variables $\ui, \uj$ are
    concerned and which associates with the variable $a$ the array
    whose $\mathtt{b}'(\ui)$-values are the
    $\mathtt{b}'(\ui)=\mathtt{a}(\ui)$-values of $\mathtt{a}(a)$ and
    whose $\mathtt{b}'(\uj)$-values are the $\ud$ (notice that this is
    correct because by differentiatedness of $L$ the
    $\mathtt{b}'(\ui\uj)$ are all distinct). It turns out that $(\cN',
    \mathtt{b}')\not \models \delta_I(\ui)\wedge \psi_I(\ui,
    \uj)\wedge \psi_E(a[\ui], a[\uj]) \to \theta(\ui, a[\ui])$, as
    desired.
\end{proof}
\end{example}
We are now in the position to take the final step towards the goal of
obtaining a completely symbolic method for
restating
the results from Theorem~\ref{th:inv}.  Let $\mathsf{ChooseCover}(L)$
be a procedure that returns non-deterministically one of the
$\exists^I$-formulae $K$ such that $K$ covers $L$ (this procedure is
playing the role of the function $\gamma$ from the proof of
Theorem~\ref{th:inv}).  We consider the procedure $\mathsf{SInv}$ in
Figure~\ref{fig:reach-algo} (b) for the computation of safety
invariants and prove its correctness.
\begin{thm}
  \label{th:invsymb}
  Let $T_E$ be locally finite. Then, there exists a safety invariant
  for $U$ iff the procedure $\mathsf{SInv}$ in
  Figure~\ref{fig:reach-algo} (b) returns a safety invariant for $U,$
  for a suitable $\mathsf{ChooseCover}$ function.
\end{thm}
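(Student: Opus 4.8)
The plan is to prove the two implications separately; ``$\Leftarrow$'' will be immediate once soundness is in place, while ``$\Rightarrow$'' requires exhibiting a concrete $\mathsf{ChooseCover}$.

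First I would record a \emph{soundness} fact valid for \emph{every} choice of $\mathsf{ChooseCover}$: whenever $\mathsf{SInv}$ leaves the loop and returns $(\mathsf{success}, \neg B)$, the formula $\neg B$ is a safety invariant for $U$. Write $P_0, P_1, \dots$ for the successive values of $P$ (so $P_0=\mathsf{ChooseCover}(U)$ and $P_{j+1}=\mathsf{ChooseCover}(Pre(\tau, P_j))$), and suppose the loop is exited after $k$ iterations, so that $B=\bigvee_{j<k}P_j$ and the guard yields $\mywidehat{P_k}\subseteq \mywidehat{B}$. Applying Lemma~\ref{lem:inv} with $H:=B$, I would verify the three configuration-level conditions: \eqref{eq:si3} holds because $\mathsf{ChooseCover}(U)$ covers $U$, whence $\mywidehat{U}\subseteq \mywidehat{P_0}\subseteq \mywidehat{B}$; \eqref{eq:si1} holds because the safety test of line~3 was passed at every iteration, giving $\mywidehat{I}\cap \mywidehat{P_j}=\emptyset$ for all $j<k$; and \eqref{eq:si2} follows from covering together with the guard, since $Pre(\tau, B)$ is $A^E_I$-equivalent to $\bigvee_{j<k}Pre(\tau, P_j)$, each $\mywidehat{Pre(\tau, P_j)}$ is contained in $\mywidehat{P_{j+1}}$, and $\mywidehat{P_{j+1}}\subseteq \mywidehat{B}$ holds as a disjunct for $j<k-1$ and by the guard ($\mywidehat{P_k}\subseteq \mywidehat{B}$) for $j=k-1$. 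This yields ``$\Leftarrow$'' immediately and reduces ``$\Rightarrow$'' to finding a $\mathsf{ChooseCover}$ for which $\mathsf{SInv}$ \emph{terminates with success}.

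For ``$\Rightarrow$'' I would start from a safety invariant $J$ for $U$ and put $H$ equal to the $\exists^I$-formula equivalent to $\neg J$, so that by Lemma~\ref{lem:inv} the set $\mywidehat{H}$ satisfies \eqref{eq:si1}--\eqref{eq:si3}; local finiteness of $T_E$ and Proposition~\ref{prop:conf1} guarantee that $H$ has a \emph{finite} basis. I then instantiate $\mathsf{ChooseCover}$ by the function $\gamma$ from the proof of Theorem~\ref{th:inv} (extended to return its own argument whenever the input $S$ has $\mywidehat{S}\not\subseteq \mywidehat{H}$, a case that will not arise during the run). The crucial point is that $\gamma(S)$ genuinely \emph{covers} $S$: the inclusion $\mywidehat{S}\subseteq \mywidehat{\gamma(S)}$ gives $A^E_I\models S\to \gamma(S)$ via Proposition~\ref{prop:conf}, while $\gamma(S)\leq S$ is exactly condition~(i) of Proposition~\ref{prop:sintsem}, the minimality built into $\gamma$ ensuring that each generator $a_i$ of $\gamma(S)$ lies below some member of a basis of $S$.

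It remains to show that $\mathsf{SInv}$, run with this $\gamma$, halts successfully. By induction on the iteration, using that $Pre$ is monotone (immediate from Proposition~\ref{prop:conf}) together with \eqref{eq:si2} and \eqref{eq:si3}, every argument handed to $\gamma$ lies inside $\mywidehat{H}$; hence each frontier obeys $\mywidehat{P_j}\subseteq \mywidehat{H}$ and, by the construction of $\gamma$, is generated by a subset of the one fixed finite basis of $H$. Consequently the safety test never fires, since $\mywidehat{I}\cap \mywidehat{P_j}\subseteq \mywidehat{I}\cap \mywidehat{H}=\emptyset$ by \eqref{eq:si1}, so $\mathsf{failure}$ is never returned; and the accumulated sets $\mywidehat{B}$ form an increasing chain of upsets, each generated by a subset of that finite basis, so the chain takes only finitely many values and stabilizes. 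At the first index where $\mywidehat{B}$ stops growing, the guard $P\wedge\neg B$ becomes $A^E_I$-unsatisfiable and the loop exits with $(\mathsf{success}, \neg B)$; by the soundness fact, $\neg B$ is then a safety invariant for $U$. The main obstacle I anticipate is the mismatch between the \emph{frontier} recurrence $P\leftarrow \gamma(Pre(\tau, P))$ that $\mathsf{SInv}$ actually performs and the \emph{accumulating} recurrence $L_{i+1}:=L_i\vee \gamma(Pre(\tau, L_i))$ used inside Theorem~\ref{th:inv}: the stabilization argument of that theorem cannot be quoted verbatim. The resolution, which is the heart of the proof, is that termination is insensitive to this difference---whatever the frontier happens to be, each application of $\gamma$ can only contribute configurations drawn from the single fixed finite basis of $H$, so the accumulator $B$ ranges over finitely many upsets and must become stationary; the covering property then upgrades this stationary limit into a genuine safety invariant through the soundness fact.
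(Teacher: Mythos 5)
Your proof is correct and follows the same two-part skeleton as the paper's: the soundness half (any successful run of $\mathsf{SInv}$ yields a safety invariant, for \emph{any} $\mathsf{ChooseCover}$) is the paper's argument restated through Lemma~\ref{lem:inv}, and the completeness half instantiates $\mathsf{ChooseCover}$ with the function $\gamma$ from the proof of Theorem~\ref{th:inv}, whose covering property you establish exactly as the paper does. The genuine difference lies in how termination of the completeness run is obtained, and there your treatment is more careful than the paper's own. The paper asserts that a suitable $\mathsf{ChooseCover}$ ``can return precisely the formulae $L_i$'' of the accumulating recurrence $L_{i+1}:=L_i\vee\gamma(Pre(\tau,L_i))$, thereby inheriting stabilization from Theorem~\ref{th:inv}; as you correctly point out, this identification cannot be taken literally, because $\mathsf{SInv}$ hands $\mathsf{ChooseCover}$ the \emph{frontier} pre-image $Pre(\tau,P_i)$, and a formula such as $L_{i+1}$ (or $\gamma(Pre(\tau,L_i))$) need not \emph{cover} $Pre(\tau,P_i)$: condition (i) of Proposition~\ref{prop:sintsem} can fail for basis configurations inherited from earlier stages, which need not lie below any basis element of $Pre(\tau,P_i)$. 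Your direct termination argument --- every formula produced by $\gamma$ during the run is a disjunction of diagram formulae $K_a$ with $a$ drawn from one fixed finite basis of $H$ (the invariant $\mywidehat{P_j}\subseteq\mywidehat{H}$ being maintained by monotonicity of $Pre$ together with \eqref{eq:si2} and \eqref{eq:si3}), so the increasing chain of upsets stored in $B$ can take only finitely many values and the guard must eventually fail --- sidesteps this mismatch entirely. What your route gives up is any claim that $B$ reproduces the specific formulae $L_i$ of Theorem~\ref{th:inv}; what it buys is a termination proof that is insensitive to the frontier/accumulator discrepancy, which is exactly what the theorem needs and is the one step the paper leaves at the level of ``it is then clear''.
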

\begin{proof}
  Suppose that $\mathsf{SInv}$ returns $B$ after $k+1$ iterations of
  the loop: we show that $\neg B$ is a safety invariant.  Notice that
  $B$ is a disjunction $P_0\vee\cdots\vee P_k$ of $\exists^I$-formulae
  such that for all $i=0, \dots, k$,
  \begin{description}
  \item[{\rm (I)}] the formula $I\wedge P_i$ is not
    $A^E_I$-satisfiable;
  \end{description}
  also $P_{i}$ covers $Pre(\tau, P_{i-1})$ and $P_0$ covers $U$, which
  means in particular that
  \begin{description}
  \item[{\rm (II)}] $A^E_I\models \forall a\,(Pre(\tau,P_{i-1})(a)\to
    P_{i}(a))$ and $A^E_I\models \forall a\,(U(a)\to P_{0}(a))$.
  \end{description}
  Finally, $\mathsf{SInv}$ could exit the loop because for some
  $P_{k+1}$ covering $Pre(\tau, P_k)$, it happened that $P_{k+1}\wedge
  \neg B$ was not $A^E_I$-satisfiable: these two conditions entail
  that
  \begin{description}
  \item[{\rm (III)}] $A^E_I\models \forall a\,(Pre(\tau, P_k)(a)\to
    B(a))$.
  \end{description}
  Conditions (i) and (iii) of Definition~\ref{def:inv} now easily
  follows from (I) and (II); we only need to check condition (ii) of
  Definition~\ref{def:inv}, namely (up to logical equivalence) that
  $A^E_I\models \forall a \,(Pre(\tau, B)(a)\to B(a))$: since
  $Pre(\tau, B)$ is logically equivalent to the disjunction
  $\bigvee_{i=0}^n Pre(\tau, P_i)$, the claim follows immediately from
  (II)-(III).

  Let us now prove the converse, i.e.\ that in case a safety
  invariants exists, $\mathsf{SInv}$ is able to compute one.  Recall
  the proof of Theorem~\ref{th:inv}: given the negation $H$ of a
  safety invariant for $U$, another negation $L$ of a safety invariant
  for $U$ is produced in the following way. Define the sequence of
  $\exists^I$-formulae $L_i$ as follows: (i) $L_0:=\gamma(U)$ and (ii)
  $L_{i+1}:= L_i\vee \gamma(Pre(\tau, L_i))$. Our $L$ is the $L_i$
  with the smallest $i$ such that $L_{i+1}$ is $A^E_I$-equivalent to
  $L_i$ (the proof of Theorem~\ref{th:inv} guarantees that such an $i$
  exists).

  The above recursive definition for $L_i$ is based on the function
  $\gamma$, which is defined (non symbolically) by making use of
  configurations.  Actually, for an $\exists^I$-formula $S$ such that
  $\mywidehat{S}\subseteq \mywidehat{H}$, the function $\gamma(S)$
  returns an $\exists^I$-formula $K_{a_1}\vee\cdots\vee K_{a_n}$,
  where $\{a_1, \dots, a_n\}\subseteq \mywidehat{H}$ is a minimal set
  of configurations taken from a basis of $H$ such that
  $\mywidehat{S}\subseteq\,\uparrow\! a_1\cup \cdots \cup \uparrow\!
  a_n$. Using Proposition~\ref{prop:sintsem}, it is not difficult to
  see that minimality implies $\gamma(S)\leq S$: in fact, condition
  $\mywidehat{S}\subseteq\,\uparrow\! a_1\cup \cdots \cup \uparrow\!
  a_n$ says that for every $s$ in a basis for $S$ there is $a_i$ in
  the basis $\{a_1, \dots, a_n\}$ for $\gamma(S)$ such that $a_i\leq
  s$, but the converse (which is what really matters for us in view of
  Proposition~\ref{prop:sintsem}(i)) must hold too, by
  minimality. This can be shown as follows: if any $a_i$ is
  eliminated, the relation $$\mywidehat{S}\subseteq\,\uparrow\!
  a_1\cup \cdots \cup \uparrow\! a_{i-1}\cup \uparrow\! a_{i+1}\cup
  \cdots\cup\uparrow\!  a_n$$ does no longer hold, hence there is an
  $s$ from a basis of $S$ such that $a_j\not\leq s$ for all $j=1,
  \dots, i-1, i+1, \dots n$. Since, on the contrary,
  $\mywidehat{S}\subseteq\,\uparrow\! a_1\cup \cdots \cup \uparrow\!
  a_n$ holds, we must conclude that $a_i\leq s$. Hence for every $a_i$
  there is an $s$ in a basis of $S$ such that $a_i\leq s$.

  Thus $\gamma(S)$ is such that $\gamma(S)\leq S$ and $A^E_I\models
  S\to \gamma(S)$, i.e. $\gamma(S)$ covers $S$.
  It is then clear that an appropriate choice of the function
  $\mathsf{ChooseCover}$ in $\mathsf{SInv}$ can return precisely the
  formulae $L_i$ so that they are assigned to the variable $B$ at the
  $i$th-loop of the procedure, thus justifying the claim of the
  Theorem.
\end{proof}
When $\mathsf{ChooseCover}(L)=L$, i.e.\ $\mathsf{ChooseCover}$ is the
identity (indeed, $L$ covers $L$), the procedure $\mathsf{SInv}$ is
the (exact) dual of $\mathsf{BReach}$ in Figure~\ref{fig:reach-algo}
(a) and, hence it can only return (the negation of) a symbolic
representation of all backward reachable states as a safety invariant.

\section{Pragmatics of Invariant Synthesis and Experiments}
\label{sec:implementation}

The main drawback of algorithm $\mathsf{SInv}$ (in
Figure~\ref{fig:reach-algo} (b), explained in the last section) is the
non determinism of the function $\mathsf{ChooseCover}$.  Although
finite, the number of formulae covering a certain $\exists^I$-formula
is so large to make any concrete implementation of $\mathsf{SInv}$
impractical.  Instead, we prefer to study how to integrate the
synthesis of invariants into the backward reachability algorithm of
Figure~\ref{fig:reach-algo} (a). Given that finding a \emph{safety
  invariant} could be infeasible through an exhaustive search, we
content ourselves to find invariants \emph{tout court} and use them to
prune the search space of the backward reachability algorithm
$\mathsf{BReach}$ (in Figure~\ref{fig:reach-algo} (a)).

\subsection{Integrating Invariant Synthesis within Backward
  Reachability}
\label{sec:back+inv}

In our symbolic framework, at the $n$-th iteration of the loop of the
procedure $\mathsf{BReach}$, the set of backward reachable states is
represented by the formula stored in the variable $B$ (which is
equivalent to $BR^n(\tau,U)$).  So, `pruning the search space of the
backward reachability algorithm' amounts to disjoining the negation of
the available invariants to $B$.  In this way, the extra information
encoded in the invariants makes the satisfiability test at line 2 (for
fix-point checking) more likely to be successful and possibly
decreases the number of iterations of the loop.  Indeed, the problem
is to synthesize such invariants. Let us consider
this problem at a very abstract level.

Suppose the availability of a function $\mathsf{Choose}$ that takes an
$\exists^I$-formula $P$ and returns a (possibly empty) finite set $S$
of $\exists^I$-formulae representing `useful (with respect to $P$)
candidate invariants.'  We can integrate the synthesis of invariants
within the backward reachability algorithm by adding between lines 4
and 5 in Figure~\ref{fig:reach-algo} (a) the following instructions:
\begin{center}
\begin{minipage}{.7\textwidth}
\begin{tabbing}
  foo \= foo \= \kill
  4$'$\> \> \textbf{foreach} $\mathit{CINV} \in \mathsf{Choose}(P)$
            \textbf{do} \\
  \> \> \hspace{.25cm} \textbf{if} $\mathsf{BReach}(\mathit{CINV})=
                                     (\mathsf{safe},B_{\mathit{CINV}})$
           \textbf{then} $B\longleftarrow B\vee \neg B_{\mathit{CINV}}$;
\end{tabbing}
\end{minipage}
\end{center}
where $\mathit{CINV}$ stands for `candidate invariant.'  The resulting
procedure will be indicated with $\mathsf{BReach\mbox{+}Inv}$ in the
following.  Notice that $\mathsf{BReach}$ is used here as a
sub-procedure of $\mathsf{BReach\mbox{+}Inv}$.
\begin{proposition}
  \label{prop:soundness}
  If the procedure
  $\mathsf{BReach\mbox{+}Inv}$ terminates by returning $\mathsf{safe}$
  ($\mathsf{unsafe}$), then $\cSi$ is safe (unsafe) with respect to
  $U$.
\end{proposition}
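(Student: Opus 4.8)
The plan is to handle the two verdicts separately, reducing each to the pre-image characterization of (un)safety: by \eqref{eq:unsafe} and the definition of $Pre$, the system $\cSi$ is unsafe w.r.t.\ $U$ iff $I\wedge Pre^i(\tau,U)$ is $A^E_I$-satisfiable for some $i\geq 0$, and safe otherwise. The decisive bookkeeping observation is that the instructions inserted at line $4'$ modify only the accumulator $B$ and never the current pre-image $P$; hence, exactly as in $\mathsf{BReach}$, the variable $P$ runs through $Pre^0(\tau,U),Pre^1(\tau,U),\dots$, and the test at line $3$ remains the exact safety test ``$I\wedge Pre^i(\tau,U)$ is $A^E_I$-satisfiable''. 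This already disposes of the \emph{unsafe} verdict: $\mathsf{BReach\mbox{+}Inv}$ returns $\mathsf{unsafe}$ only through its own line $3$ (a sub-call $\mathsf{BReach}(\mathit{CINV})$ returning $\mathsf{unsafe}$ merely fails the guard and is discarded), so such a return witnesses $I\wedge Pre^i(\tau,U)$ $A^E_I$-satisfiable, whence $\cSi$ is unsafe.

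For the \emph{safe} verdict I would first certify the disjuncts that line $4'$ contributes to $B$. Each of them comes from a sub-call $\mathsf{BReach}(\mathit{CINV})$ that terminated returning $(\mathsf{safe},\,\cdot\,)$; since $\mathsf{BReach\mbox{+}Inv}$ is assumed to terminate, every such sub-call terminated, so Property~\ref{ex:basic} applies and produces a genuine invariant of $\cSi$ whose negation $H$ is the formula effectively disjoined to $B$. Contraposing conditions (i)--(ii) of Definition~\ref{def:inv} for that invariant yields the two facts I shall use about every such $H$: the formula $I\wedge H$ is $A^E_I$-unsatisfiable, and $A^E_I\models Pre(\tau,H)\to H$, i.e.\ $H$ is closed under taking predecessors. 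Together these give that $H$ is false on every forward-reachable state: pushing an alleged reachable witness of $H$ backwards along its run keeps it inside $H$ by predecessor-closure, until the initial endpoint contradicts the unsatisfiability of $I\wedge H$.

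It then remains to run a minimal-counterexample argument. Assume, for contradiction, that $\cSi$ is unsafe although the procedure returned $\mathsf{safe}$, and pick a run $s_0\to\cdots\to s_m$ of \emph{minimal} length with $s_0\models I$ and $s_m\models U$. Let $n$ be the last loop index entered, so that $B=\bigvee_{i=0}^{n}Pre^i(\tau,U)\vee\bigvee_k H_k$ with each $H_k$ as above, the loop exited because $A^E_I\models Pre^{n+1}(\tau,U)\to B$, and line $3$ never fired, i.e.\ $I\wedge Pre^i(\tau,U)$ is $A^E_I$-unsatisfiable for $i=0,\dots,n$. The unfired safety test forces $m\geq n+1$, so the state $s_{m-n-1}$ exists, is forward-reachable, and satisfies $Pre^{n+1}(\tau,U)$. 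By the previous paragraph it lies in no $H_k$, hence the fix-point inclusion forces $s_{m-n-1}\models Pre^i(\tau,U)$ for some $i\leq n$. Concatenating the segment $s_0\to\cdots\to s_{m-n-1}$ with the length-$i$ run to $U$ that this membership provides yields a counterexample of length $(m-n-1)+i\leq m-1<m$, contradicting minimality; therefore $\cSi$ is safe.

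I expect the main obstacle to be the middle step: establishing cleanly that the formulae disjoined at line $4'$ are negations of genuine invariants and hence vanish on all forward-reachable states. This is exactly where the termination hypothesis must be invoked to license Property~\ref{ex:basic} on each sub-call, and where the contraposition of Definition~\ref{def:inv}(i)--(ii) into ``predecessor-closed and disjoint from $I$'' does the real work; note in particular that it is crucial that the safety test at line $3$ is performed on the \emph{pure} pre-image $P$ rather than on the polluted accumulator $B$. Once the behaviour of the $H_k$ on reachable states is pinned down, the length surgery on the counterexample run is routine.
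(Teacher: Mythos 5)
Your proposal is correct and takes essentially the same route as the paper's own proof: a reductio on a minimal-length bad trace, using that the fix-point exit condition forces the state $n+1$ steps before the violation into either an earlier pre-image (contradicting minimality via a shorter trace) or a negation-of-invariant disjunct (contradicting forward reachability). The details you make explicit---the unsafe case, the lemma that negations of invariants are false on all forward-reachable states, and the bookkeeping that line $4'$ only touches $B$ and never $P$---are exactly what the paper's proof leaves implicit.
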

\begin{proof}
  The claim is trivial when $\mathsf{BReach\mbox{+}Inv}$ returns
  $\mathsf{unsafe}$.  Let us consider the situation when the procedure
  terminates by returning $\mathsf{safe}$ at the $(k+1)$-th iteration
  of the main loop.  Observe that the content of the variable $B$ is
  \begin{equation}
    \label{eq:B_content}
    Pre^0(\tau,U)\vee Pre^1(\tau, U)\vee\cdots \vee Pre^k(\tau, U)\vee
    H_1\vee\cdots \vee H_m
  \end{equation}
  at the $(k+1)$-th iteration of the loop, where $H_1, \dots, H_m$ are
  negations of invariants (see Property~\ref{ex:basic}). For reductio,
  suppose that the system is unsafe, i.e.\ for some $n\geq 0$, the
  formula~\eqref{eq:unsafe} (shown here for the sake of readability)
  \begin{eqnarray*}
  I(a_n)\wedge \tau(a_{n}, a_{n-1})\wedge \cdots \wedge \tau(a_{1},
  a_0)\wedge U(a_0)
  \end{eqnarray*}
  is $A_I^E$-satisfiable.  Assume that the formula is true in a model
  of $A^E_I$ with the array assignments $s_n, \dots, s_0$; in the
  following, we say that $s_n, \dots, s_0$ is a \emph{bad trace}.  We
  also assume that $s_n, \dots, s_0$ is a bad trace of shortest
  length.  Since the formulae $I\wedge Pre^0(\tau,U)$, $I\wedge
  Pre^1(\tau, U)$, $\dots$, and $I\wedge Pre^k(\tau, U)$ are all
  $A^E_I$-unsatisfiable (see line 3 of Figure~\ref{fig:reach-algo}
  (a), which is also part of $\mathsf{BReach\mbox{+}Inv}$), it must be
  $n> k$.  Let us now focus on $s_{k+1}$; since
  $\mathsf{BReach\mbox{+}Inv}$ returned $\mathsf{safe}$ at iteration
  $k+1$, it must have exited the loop because the formula currently
  stored in $P$ (which is $Pre^{k+1}(\tau, U)$) is not
  $A^E_I$-satisfiable with the negation of the formula currently in
  $B$ (which is~\eqref{eq:B_content}). Hence, $s_{k+1}$ (which
  satisfies $Pre^{k+1}(\tau, U)$) must satisfy either some $Pre^{l}$
  (for $l<k+1$) or some $H_i$, but both alternatives are
  impossible. In fact, the former would yield a shorter bad trace,
  whereas the latter is in contrast to the fact that $s_{k+1}$ is
  forward reachable from a state satisfying $I$ and, as such, it
  should satisfy the invariant $\neg H_i$.
\end{proof}
The procedure $\mathsf{BReach\mbox{+}Inv}$ is
\begin{enumerate}[$\bullet$]
\item incomplete, in the sense that it is not guaranteed to terminate
  even when a safety invariant exists,
\item deterministic, since no backtracking is required,
\item highly parallelizable: it is possible to run in parallel as many
  instances of $\mathsf{BReach}$ as formulae in the set returned by
  $\mathsf{Choose}$, and
\item it performs well (for \emph{appropriate} $\mathsf{Choose}$
  functions, see below for a discussion of the meaning of
  ``appropriate'' in this context) as witnessed by the experiments in
  the next section.
\end{enumerate}
As a result, invariant synthesis becomes a powerful \emph{heuristic}
within a refined version of the basic backward reachability algorithm.
Furthermore, its integration in the tableaux calculus of
Section~\ref{subsec:tab} is particularly easy: \emph{just use the
  calculus itself with some bounds on the resources} (such as a limit
on the depth of the tree) to check if a candidate invariant is a
``real'' invariant.  Indeed, the crucial point is how to design an
\emph{appropriate} function $\mathsf{Choose}$.  There are several
possible criteria leading to a variety of implementations for
$\mathsf{Choose}$.  The usefulness of the resulting functions is
likely to depend on the application.  Despite the complexity of the
design space, it is possible to identify a \emph{minimal requirement}
on $\mathsf{Choose}$ by taking into account the tableaux calculus
introduced in Section~\ref{subsec:tab}. To this end, recall that
backward reachable sets of states are described by primitive
differentiated formulae and that a formula $P$ representing a
pre-image is eagerly expanded to disjunctions of primitive
differentiated formulae by using the $\mathsf{Beta}$ rule.  Thus, a
reasonable implementation of $\mathsf{Choose}$ should be such that
$\mathsf{Choose}(P)=S$ where \emph{$S$ is a set of primitive
  differentiated formulae such that each $Q'\in S$ is implied by a
  disjunct $Q$ occurring in the disjunction of primitive
  differentiated formulae obtained as expansion of $P$}.  In this way,
each $Q'\in S$ can be seen as a \emph{tentative over-approximation} of
$Q$.  (Notice that guessing a candidate invariant can be seen as a
form of abstraction.)  All the implementations of the function
$\mathsf{Choose}$ in \textsc{mcmt} satisfy the minimal requirement
above and can be selected by appropriate command line options and
directives to be included in the input file (the interested reader is
pointed to the user manual available in the distribution for details).
We now describe two types of abstractions that lead to different
implementations of the function $\mathsf{Choose}$ that are available
in the current release of \textsc{mcmt}.

\subsection{Index  Abstraction}
\label{sec:back+inv-heu}
Index abstraction amounts to eliminating some index variables; if done
in the appropriate way, this is equivalent to replacing configurations
with sub-con\-fi\-gu\-ra\-tions (as discussed in
Section~\ref{sec:inv}).  Thus, it is possible to design approximations
(quite loose, but suitable for implementation) of the procedure
suggested in the proof of Theorem~\ref{th:invsymb}.  An idea (close to
what is implemented in the current release of \textsc{mcmt}) is to
follow the suggestions in Example~\ref{ex:cover} so as to satisfy the
minimal requirement discussed above on $\mathsf{Choose}$.
More precisely, given $Q:=\exists \uk.\theta(\uk,a[\uk])$, we first
try to transform it into the form of~\eqref{eq:inv_candidate}, i.e.
\begin{eqnarray*}
  \exists \ui\,\uj.(\psi_E(a[\ui], a[\uj])  \wedge
                    \psi_I(\ui, \uj)        \wedge
                    \delta_I(\ui) ).
\end{eqnarray*}
To do this, we decompose $\uk$ into two disjoint sub-sequences $\ui$
and $\uj$ such that $\uk=\ui\cup \uj$ according to some criteria: if
the conjunction of $\Sigma_I(\ui)$ literals occurring in $\theta$ is
maximal, we get a candidate invariant by returning the corresponding
$\exists^I$-formula~\eqref{eq:res}, i.e.
\begin{eqnarray*}
  \exists \ui \,(\delta_I(\ui)\wedge \phi_E(a[\ui])) .
\end{eqnarray*}
This is computationally feasible in many situations.  For example,
quantifier elimination reduces to a trivial substitution if $T_E$ is
an enumerated data-type theory and the $\Sigma_E$-literals in $\theta$
(i.e.\ those in $\psi_E$) are all positive.  The maximality of
$\theta$ is guaranteed (by being differentiated) if $T_I$ is the
theory of finite sets.  Another case in which maximality of $\theta$
is guaranteed is when
 $T_I$ is the theory of linear orders and $\ui=i_1$ or
($\ui=i_1, i_2$ and $\theta$ contains the atom $i_1 < i_2$). In more
complex cases, it is possible to obtain
a useful formula (similar to~\eqref{eq:res}) in a purely syntactic and
computationally cheap way.  There is no risk in using methods giving
very coarse approximations since
a candidate invariant is used for pruning the search space of the
backward reachability procedure only if it has been proved to be a
``real'' invariant (see also Remark~\ref{remr} below).

\subsection{Signature Abstraction}
\label{subsec:abs-inv}
Index abstraction can be useless or computationally too expensive (if
done precisely) in several applications.  Even worse, when $T_E$ is
not locally finite, the related notion of sub-configuration loses most
of its relevance.  In these cases, other forms of abstraction inspired
to predicate abstraction~\cite{seminal} may be of great help.
Although predicate abstraction with refinement (as in the CEGAR loop)
is not yet implemented in \textsc{mcmt}, it features a technique for
invariant synthesis that we have called \emph{signature abstraction},
which can be seen as a simplified version of predicate abstraction.
This technique uses quantifier elimination (whenever possible) to
eliminate the literals containing a selected sub-set $X$ of the set of
array variables.  The subset $X$ can either be suggested by the user
or dynamically built by the tool from the shape of the disjunct
belonging to the pre-image being currently computed.  Again, the
elimination is applied to each of the primitive differentiated
disjuncts of the currently computed pre-image $P$ to obtain the
differentiated formulae to form the set of formulae returned by
$\mathsf{Choose}$.  It is easy to see that this way of implementing
the function $\mathsf{Choose}$ satisfies the minimal requirement
discussed above.
\begin{rem}
  \label{remr}
  The reader may wonder whether the use of abstraction techniques can
  have a negative impact on the correctness of \textsc{mcmt} outcome.
  We emphasize that this is not the case because of the way the
  candidate invariants are used to prune the search space during
  backward reachability.  In fact, abstraction is just to generate the
  candidate invariants which are then tested to be ``real'' invariants
  by a resource bounded version of backward reachability.  Only if
  candidate invariants pass this test, they are used to prune the
  search of backward reachable states.  In other words, the answer
  supplied by \textsc{mcmt} to a safety problem is always correct: as
  it is clear from the proof of Proposition~\ref{prop:soundness}, the
  set of backward reachable states can be augmented if invariants are
  used during backward search, but it is augmented by adding it only
  states satisfying the negation of an invariant (these states are not
  forward reachable, hence they cannot alter safety checks). As a
  consequence, safety tests remain exhaustive,
%
  although it may happen that resources (such as computation
  time) are wasted in checking candidate invariants that turn out not
  to be ``real'' invariants
  or not to be useful to significantly  prune the search space.
\end{rem}

\subsection{Experiments}
\label{sec:exp}
To show the flexibility and the performances of \textsc{mcmt}, we have
built a library of benchmarks in the format accepted by our tool by
translating from a variety of sources safety problems.  More
precisely, our sources were the following:
\begin{enumerate}[$\bullet$]
\item parametrised systems from the distribution of the infinite model
  checkers \textsc{pfs}
  (\texttt{\url{http://www.it.uu.se/research/docs/fm/apv/tools/pfs}})
  and Undip
  (\texttt{\url{http://www.it.uu.se/research/docs/fm/apv/tools/undip}}),
\item parametrised and distributed systems from the invisible
  invariant methods (see, e.g.,~\cite{BFPZ05}),
\item imperative programs manipulating arrays (such as sorting or
  string manipulation) taken from standard books about algorithms,
\item imperative programs manipulating numeric variables from the
  distribution of the model checker ARMC
  (\texttt{\url{http://www7.in.tum.de/~rybal/armc}}),
\item protocols from the distribution of Mur$\phi$ extended with
  predicate abstraction
  (\texttt{\url{http://verify.stanford.edu/satyaki/research/PredicateAbstractionExamples.}} \linebreak \texttt{\url{html}}).
\end{enumerate}
We did not try to be exhaustive in the selection of problems but
rather to pick problems from the wider possible range of different
classes of infinite state systems so as to substantiate the claim
about the flexibility of our tool.  All the files in \textsc{mcmt}
format are contained in the \textsc{mcmt} distribution which is
available at the tool web page
(\texttt{\url{http://homes.dsi.unimi.it/~ghilardi/mcmt}}).  Each file
comes with the indication of source from which it has been adapted and
a brief informal explanation about its content.

We divided the problems into four categories: mutual exclusion and
cache coherence protocols taken mainly from the distributions of
\textsc{pfs} and Undip (see Tables~\ref{subapp:mep}
and~\ref{subapp:ccp}), imperative programs manipulating arrays (see
Table~\ref{subapp:ip}), and heterogeneous problems (see
Table~\ref{subapp:misc}) taken from the remaining sources listed
above.
For the first two categories, the benchmark set is sufficiently
representative, whereas for the last two categories just some
interesting examples have been submitted to the tool.
For each category, we tried the tool in two configurations: one,
called ``Default Setting,'' is the standard setting used when
\textsc{mcmt} is invoked without any option and the other, called
``Best Setting,'' is the result of some experimentation with various
heuristics for invariant synthesis, signature abstraction, and
acceleration.  It is possible that for some problems, the ``real''
best setting is still to be identified and the results reported here
can be further improved.

\begin{table}[t]
\small
\caption{\label{subapp:mep}Mutual exclusion protocols}
\begin{tabular}{||l|l|l|l|l|l||l|l|l|l|l|l||}
\hline\hline
& \multicolumn{5}{|c||}{Default Setting} & 
  \multicolumn{6}{|c||}{Best Setting} \\ \hline
Problem  & d & \#n & \#d & \#SMT & time 
         & d & \#n & \#d & \#SMT & \#inv. & time \\ \hline\hline
Bakery & 	2 & 	1 & 	0 & 	6 & 	0.00
       & 	2 & 	1 & 	0 & 	6 & 	0 & 	0.00 \\ \hline
Bakery\_bogus & 	8 & 	90 & 	14 & 	1413 & 	0.81
              & 	8 & 	53 & 	4 & 	1400 & 	7 & 	0.68 \\ \hline
Bakery\_e & 	12 & 	48 & 	17 & 	439 & 	0.20 
          & 	7 & 	8 & 	1 & 	213 & 	16 & 	0.10 \\ \hline
Bakery\_Lamport & 	12 & 	56 & 	15 & 	595 & 	0.27 
                & 	4 & 	7 & 	1 & 	209 & 	7 & 	0.08 \\ \hline
Bakery\_t & 	9 & 	28 & 	5 & 	251 & 	0.11 
          & 	7 & 	8 & 	1 & 	134 & 	5 & 	0.06 \\ \hline
Burns & 	14 & 	56 & 	7 & 	373 & 	0.14 
      & 	2 & 	2 & 	1 & 	53 & 	3 & 	0.02 \\ \hline
Dijkstra & 	14 & 	122 & 	37 & 	2920 & 	2.11 
         & 	2 & 	1 & 	1 & 	215 & 	12 & 	0.08 \\ \hline
Dijkstra1 & 	13 & 	38 & 	11 & 	222 & 	0.10 
          & 	2 & 	1 & 	1 & 	35 & 	2 & 	0.02 \\ \hline
Distrib\_Lamport & 	23 & 	913 & 	242 & 	47574 & 	120.62 
                 & 	23 & 	248 & 	42 & 	19254 & 	7 & 	32.84 \\ \hline
Java  M-lock & 	9 & 	23 & 	2 & 	289 & 	0.10 
             & 	9 & 	23 & 	2 & 	289 & 	0 & 	0.10 \\ \hline
Mux\_Sem & 	7 & 	8 & 	2 & 	57 & 	0.02
         & 	2 & 	1 & 	1 & 	65 & 	6 & 	0.02 \\ \hline
Rickart\_Agrawala & 	13 & 	458 & 	119 & 	35355 & 	187.04 
                  & 	13 & 	458 & 	119 & 	35355 & 	0 & 	187.04 \\ \hline
Sz\_fp & 	22 & 	277 & 	3 & 	7703 & 	5.12 
       & 	22 & 	277 & 	3 & 	7703 & 	0 & 	5.12 \\ \hline
Sz\_fp\_ver & 	30 & 	284 & 	38 & 	10611 & 	6.66 
            & 	30 & 	284 & 	38 & 	10611 & 	0 & 	6.66 \\ \hline
Szymanski & 	17 & 	136 & 	10 & 	2529 & 	1.60 
          & 	9 & 	14 & 	5 & 	882 & 	12 & 	0.30 \\ \hline
Szymanski\_at & 	23 & 	1745 & 	311 & 	424630 & 	540.19 
              & 	9 & 	22 & 	10 & 	2987 & 	42 & 	1.25 \\ \hline
Ticket & 	9 & 	18 & 	0 & 	284 & 	0.17 
       & 	9 & 	18 & 	0 & 	284 & 	0 & 	0.17 \\ \hline\hline
\end{tabular}
\end{table}
In Tables~\ref{subapp:mep},~\ref{subapp:ccp},~\ref{subapp:ip},
and~\ref{subapp:misc}, the column `d' is the depth of the tableaux
obtained by applying the rules listed in Section~\ref{subsec:tab},
`\#n' is the number of nodes in the tableaux, `\#d' is the number of
nodes which are deleted because they are subsumed by the information
contained in the others (see~\cite{afm09} for details about this
point), `\#SMT' is the number of invocations to Yices during backward
reachability to solve fix-point and safety checks, `\#inv.'  is the
number of invariants found by the available invariant synthesis
techniques (see also~\cite{tableaux09} for a more in-depth discussion
on some of these issues),\footnote{In the table for the ``Default
  Setting,'' the column labelled with `\#inv.' is not present because
  \textsc{mcmt}'s default is to turn off invariant synthesis.} and
`time' is the total amount of time (in seconds) taken by the tool to
solve the safety problem.  Timings were obtained on a Intel Centrino
1.729 GHz with 1 Gbyte of RAM running Linux Gentoo.  In some cases,
the system seemed to diverge as it clearly entered in a loop: it kept
applying the same sequence of transitions.  In these cases, we stopped
the system, left the corresponding line of the table empty, and put
`timeout' in the last column.

\begin{table}[t]
\small
\caption{\label{subapp:ccp}Cache coherence protocols}
\begin{tabular}{||l|l|l|l|l|l||l|l|l|l|l|l||}
\hline\hline
& \multicolumn{5}{|c||}{Default Setting} & 
  \multicolumn{6}{|c||}{Best Setting} \\ \hline
Problem  & d & \#n & \#d & \#SMT & time 
         & d & \#n & \#d & \#SMT & \#inv. & time \\ \hline\hline
Berkeley & 	2 & 	1 & 	0 & 	16 & 	0.00 
         & 	2 & 	1 & 	0 & 	16 & 	0 & 	0.00 \\ \hline
Futurebus & 	8 & 	37 & 	3 & 	998 & 	0.96 
          & 	8 & 	37 & 	3 & 	998 & 	0 & 	0.96 \\ \hline
German07 & 	26 & 	2442 & 	576 & 	121388 & 	145.68 
         & 	26 & 	2442 & 	576 & 	121388 & 	0 & 	145.68 \\ \hline
German\_buggy & 	16 & 	1631 & 	203 & 	41497 & 	49.70 
              & 	16 & 	1631 & 	203 & 	41497 & 	0 & 	49.70 \\ \hline
German\_ca & 	9 & 	13 & 	0 & 	62 & 	0.03 
           & 	9 & 	13 & 	0 & 	62 & 	0 & 	0.03 \\ \hline
German\_pfs & 	33 & 	11605 & 	2755 & 	858184 & 	31m 01s 
            & 	33 & 	11141 & 	2673 & 	784168 & 	149 & 	30m 27s \\ \hline
Illinois & 	4 & 	8 & 	0 & 	144 & 	0.08 
         & 	4 & 	8 & 	0 & 	144 & 	0 & 	0.08 \\ \hline
Illinois\_ca & 	3 & 	3 & 	1 & 	48 & 	0.02 
             & 	3 & 	3 & 	1 & 	48 & 	0 & 	0.02 \\ \hline
Mesi & 	3 & 	2 & 	0 & 	9 & 	0.00 
     & 	3 & 	2 & 	0 & 	9 & 	0 & 	0.00 \\ \hline
Mesi\_ca & 	3 & 	2 & 	0 & 	13 & 	0.00 
         & 	3 & 	2 & 	0 & 	13 & 	0 & 	0.00 \\ \hline
Moesi & 	3 & 	2 & 	0 & 	10 & 	0.01 
      & 	3 & 	2 & 	0 & 	10 & 	0 & 	0.01 \\ \hline
Moesi\_ca & 	3 & 	2 & 	0 & 	13 & 	0.00 
          & 	3 & 	2 & 	0 & 	13 & 	0 & 	0.00 \\ \hline
Synapse & 	2 & 	1 & 	0 & 	16 & 	0.01 
        & 	2 & 	1 & 	0 & 	16 & 	0 & 	0.01 \\ \hline
Xerox P.D. & 	7 & 	13 & 	0 & 	388 & 	0.23 
           & 	7 & 	13 & 	0 & 	388 & 	0 & 	0.23 \\ \hline\hline
\end{tabular}
\end{table}

\begin{table}[b]
\small
\caption{\label{subapp:ip}Imperative Programs}
\begin{tabular}{||l|l|l|l|l|l||l|l|l|l|l|l||}
\hline\hline
& \multicolumn{5}{|c||}{Default Setting} & 
  \multicolumn{6}{|c||}{Best Setting} \\ \hline
Problem  & d & \#n & \#d & \#SMT & time 
         & d & \#n & \#d & \#SMT & \#inv. & time \\ \hline\hline
Find & 	4 & 	27 & 	7 & 	691 & 	0.90 
     & 	4 & 	27 & 	7 & 	691 & 	0 & 	0.90 \\ \hline
Max\_in\_Array & 	- & 	- & 	- & 	- & 	timeout 
               & 	2 & 	1 & 	1 & 	46 & 	5 & 	0.03 \\ \hline
Selection\_Sort & 	- & 	- & 	- & 	- & 	timeout 
                & 	5 & 	13 & 	2 & 	1141 & 	11 & 	0.62 \\ \hline
Strcat & 	- & 	- & 	- & 	- & 	timeout
       & 	2 & 	2 & 	2 & 	80 & 	2 & 	0.07 \\ \hline
Strcmp & 	- & 	- & 	- & 	- & 	timeout 
       & 	2 & 	1 & 	1 & 	21 & 	3 & 	0.01 \\ \hline
Strcopy & 	3 & 	3 & 	1 & 	694 & 	1.22 
        & 	3 & 	3 & 	2 & 	564 & 	4 & 	0.38 \\ \hline\hline
\end{tabular}
\end{table}

\begin{table}[t]
\small
\caption{\label{subapp:misc}Miscellanea}
\begin{tabular}{||l|l|l|l|l|l||l|l|l|l|l|l||}
\hline\hline
& \multicolumn{5}{|c||}{Default Setting} & 
  \multicolumn{6}{|c||}{Best Setting} \\ \hline
Problem  & d & \#n & \#d & \#SMT & time 
         & d & \#n & \#d & \#SMT & \#inv. & time \\ \hline\hline
Alternating\_bit & -	 &- 	 &- 	 &- 	 & timeout 	  
                 & 	21 & 	1008 & 	156 & 	41894 & 	1 & 	44.48 \\ \hline
Bakery & 	6 & 	12 & 	0 & 	86 & 	0.04 
       & 	6 & 	12 & 	0 & 	86 & 	0 & 	0.04 \\ \hline
Bakery2 & 	6 & 	22 & 	1 & 	247 & 	0.07 
        & 	6 & 	22 & 	1 & 	247 & 	0 & 	0.07 \\ \hline
Controller & 	6 & 	8 & 	0 & 	95 & 	0.03 
           & 	6 & 	8 & 	0 & 	95 & 	0 & 	0.03 \\ \hline
Csm & 	- & 	- & 	- & 	- & 	timeout 
    & 	2 & 	2 & 	2 & 	76 & 	1 & 	0.02 \\ \hline
Filter\_simple & 	- & 	- & 	- & 	- & 	timeout 
               & 	2 & 	4 & 	4 & 	1013 & 	132 & 	3.94 \\ \hline
Fischer & 	10 & 	16 & 	2 & 	336 & 	0.16 
        & 	10 & 	16 & 	2 & 	336 & 	0 & 	0.16 \\ \hline
Fischer\_U & 	8 & 	13 & 	3 & 	198 & 	0.08
           & 	8 & 	13 & 	3 & 	198 & 	0 & 	0.08 \\ \hline
German & 	26 & 	2642 & 	678 & 	157870 & 	191.39 
       & 	26 & 	2642 & 	678 & 	157870 & 	0 & 	191.39 \\ \hline
Ins\_sort & 	- & 	- & 	- & 	- & 	timeout 
          & 	2 & 	2 & 	1 & 	40 & 	1 & 	0.04 \\ \hline
MIS & -	 & -	 &- 	 &- 	 & timeout 
    & 	1 & 	0 & 	0 & 	1261 & 	95 & 	0.85 \\ \hline
Mux\_Sem & 	7 & 	15 & 	0 & 	174 & 	0.04 
         & 	7 & 	15 & 	0 & 	174 & 	0 & 	0.04 \\ \hline
Mux\_Sem\_param & 	4 & 	5 & 	0 & 	85 & 	0.04 
                & 	2 & 	3 & 	1 & 	57 & 	4 & 	0.02 \\ \hline
Order & 	3 & 	3 & 	0 & 	18 & 	0.01 
      & 	2 & 	2 & 	2 & 	16 & 	2 & 	0.01 \\ \hline
Simple & 	2 & 	1 & 	0 & 	10 & 	0.00 
       & 	2 & 	1 & 	0 & 	10 & 	0 & 	0.00 \\ \hline
Swimming\_Pool & 	3 & 	81 & 	0 & 	1300 & 	0.67 
               & 	3 & 	62 & 	3 & 	927 & 	0 & 	0.73 \\ \hline
Szymanski+ & 	21 & 	685 & 	102 & 	43236 & 	47.00
           & 	2 & 	1 & 	1 & 	90 & 	2 & 	0.04 \\ \hline
Ticket\_o & 	- & 	- & 	- & 	- & 	timeout 
          & 	3 & 	4 & 	2 & 	201 & 	10 & 	0.06 \\ \hline
Token\_Ring & 	3 & 	2 & 	0 & 	30 & 	0.02 
            & 	3 & 	2 & 	0 & 	30 & 	0 & 	0.02 \\ \hline
Tricky & 	8 & 	7 & 	0 & 	22 & 	0.02 
       & 	2 & 	1 & 	1 & 	13 & 	1 & 	0.00 \\ \hline
Two\_Semaphores & 	4 & 	5 & 	1 & 	48 & 	0.02 
                & 	4 & 	5 & 	1 & 	48 & 	0 & 	0.02 \\ \hline\hline
\end{tabular}
\end{table}

As it is apparent by taking a look at the Tables, gaining some
expertise in using the available options of the tool may give dramatic
improvements in performances, either in terms of reduced timings or in
getting the system to terminate.  For the category ``Mutual exclusion
protocols,'' invariant synthesis is helpful to reduce the solving time
for the larger examples.  For the category ``Cache coherence
protocols,'' the effect of invariant synthesis as well as other
techniques is negligible.  For the category ``Imperative programs,''
invariant synthesis techniques are the key to make the tool terminate
on almost all problems.  In particular, signature abstraction,
introduced in this last version of the tool, is a crucial ingredient.

A comparative analysis is somewhat difficult in lack of a standard for
the specifications of safety problems.  This situation is similar to
the experimental evaluation of SMT solvers before the introduction of
the SMT-LIB standard~\cite{smtlib}.  It would be interesting to
investigate if the proposed format can become the new interlingua for
infinite state model checkers so that exchange of problems becomes
possible as well as the fair comparison of performances.  Just to give
an idea of the relative performance of our tool, we only mention that
\textsc{mcmt} performs better or outperforms (on the largest
benchmarks) the model checkers PFS and Undip on the problems taken
from their distributions.  In addition, these two systems are not
capable of handling many of the problems considered here such as those
listed in the category ``Imperative Programs'' (their input syntax and
the theoretical framework they are based on are too restrictive to
accept them).

\section{Discussion}
\label{sec:discussion}
We have given a comprehensive account of our approach to the model
checking of safety properties of infinite state systems manipulating
array variables by SMT solving.  The idea of using arrays to represent
system states is not new in model-checking (see in
particular~\cite{L2,L1}); what seems to be new in our approach is the
fully declarative characterization of \emph{both} the topology and the
(local) data structures of systems by using theories.  This has two
advantages.  First, implementations of our approach can handle a wide
range of topologies without modifying the underlying data structures
representing sets of states.  This is in contrast with recently
developed techniques~\cite{tacas06,cav06} for the uniform verification
of parametrized systems, which consist in exploring the state space of
a system by using a finitary representation of (infinite) sets of
states and require substantial modifications in the computation of the
pre-image to adapt to different topologies.  Second, since SMT solvers
are capable of handling several theories in combinations, we can avoid
encoding everything in one theory, which has already been proved
detrimental to performances in~\cite{bultan-gerber-pugh,composite-mc}.
SMT techniques were already employed in
model-checking~\cite{demoura-bounded,armando-boundedmc-and-smt}, but
only in the bounded case (whose aim is mostly limited at finding bugs,
not at full verification).

In more details, our contributions are the following.  First, we have
explained how to use certain classes of first-order formulae to
represent sets of states and identified the requirements to mechanize
a fully symbolic and declarative version of backward reachability.
Second, we have discussed sufficient conditions for the termination of
the procedure on the theories used to specify the topology (indexes)
and the data (elements) manipulated by the array-based system.  Third,
we have argued that the classes of formulae allow us to specify a
variety of parametrized and distributed systems, and imperative
algorithms manipulating arrays.  Finally, we have studied invariant
synthesis techniques and their integration in the backward
reachability procedure.  Theoretically, we have given sufficient
conditions for the completeness on the theories of indexes and
elements of the array-based system.  Pragmatically, we have described
how to interleave invariant guessing and backward reachability so as
to ameliorate the termination of the latter.  We have implemented the
proposed techniques in \textsc{mcmt} and evaluated their viability on
several benchmark problems extracted from a variety of sources.  The
experimental results have confirmed the efficiency and flexibility of
our approach.

\subsection{Related work}

We now discuss the main differences and similarities with existing
approaches to the verification of safety properties of infinite state
systems.  We believe it is convenient to recall two distinct and
complementary approaches among the many possible alternatives
available in the literature.  In examining related works, we do not
attempt to be exhaustive (we consider this an almost desperate task
given the huge amount of work in this area) but rather to position our
approach with respect to
some of the main lines of research in the field.

The first approach is pioneered in~\cite{lics} and its main notion is
that of well-structured system.  For example, it was implemented in
two systems (see, e.g.,~\cite{tacas06,cav06}),
which were able to automatically verify several protocols for mutual
exclusion and cache coherence.  One of the key ingredients to the
success of these tools is their capability to perform accurate
fix-point checks so as to reduce the number of iterations of the
backward search procedure.  A fix-point check is implemented by
`embedding' an old configuration (i.e.\ a finite representation of a
potentially infinite set of states) into a newly computed pre-image;
if this is the case, then the new pre-image is considered
``redundant'' (i.e., not contributing new information about the set of
backward reachable states) and thus can be discarded without loss of
precision.  Indeed, the exhaustive enumeration of embeddings has a
high computational cost.  An additional problem is that constraints
are only used to represent the data manipulated by the system while
its topology is encoded by \emph{ad hoc} data structures.  This
requires to implement from scratch algorithms both to compute
pre-images and embeddings, each time the topology of the systems to
verify is modified.  On the contrary, \textsc{mcmt} uses particular
classes of \emph{first-order formulae} to represent configurations
parametrised with respect to a theory of the data and a theory of the
topology of the system so that pre-image computation reduces to a
fixed set of logical manipulations and fix-point checking to solve SMT
problems containing universally quantified variables.  To mechanize
these tests, a quantifier-instantiation procedure is used, which is
the logical counterpart of the enumeration of ``embeddings.''
Interestingly, this notion of ``embedding'' can be recaptured via
classical model theory (see~\cite{ijcar08}  or Section~\ref{sec:term} above) in the logical framework
underlying \textsc{mcmt}, a fact that allows us to import into our
setting the decidability results of~\cite{lics} for backward
reachability.  Another important advantage of the approach underlying
\textsc{mcmt} over that proposed in~\cite{lics} is its broader scope
of applications with respect to the implementations
in~\cite{tacas06,cav06,vmcai08}.  The use of theories for specifying
the data and the topology allows one to model disparate classes of
systems in a natural way.  Furthermore, even if the quantifier
instantiation procedure becomes incomplete with rich theories, it can
soundly be used and may still permit to prove the safety of a system.
In fact, \textsc{mcmt} has been successfully employed to verify
sequential programs (such as sorting
algorithms) 
that are far beyond the reach of the systems described
in~\cite{tacas06,cav06}.

The second and complementary approach to model checking infinite state
system relies on \emph{predicate abstraction} techniques,
initially proposed in~\cite{seminal}.  The idea is to abstract the
system to one with finite states, to perform finite-state model
checking, and to refine spurious traces (if any) by using decision
procedures or SMT solvers.  This technique has been implemented in
several tools 
and is often combined with interpolation algorithms for the refinement
phase.  As pointed out in~\cite{qaaderflanagan,indexedabs}, predicate
abstraction must be carefully adapted when (universal) quantification
is used to specify the transitions of the system or its properties, as
it is the case for the problems tackled by \textsc{mcmt}.  The are two
crucial problems to be solved.  The first is to find an appropriate
set of predicates to compute the abstraction of the system.  In fact,
besides system variables, universally quantified variables may also
occur in the system.  The second problem is that the computation of
the abstraction as well as its refinement require to solve proof
obligations containing universal quantifiers.  Hence, we need to
perform suitable quantifier instantiation in order to enable the use
of decision procedures or SMT solving techniques for quantifier-free
formulae.  The first problem is solved by
Skolemization~\cite{qaaderflanagan} or fixing the number of variables
in the system~\cite{indexedabs} so that standard predicate abstraction
techniques can still be used.  The second problem is solved by
adopting very straightforward (sometimes naive) and incomplete
quantifier instantiation procedures.  While being computationally
cheap and easy to implement, the heuristics used for quantifier
instantiation are largely imprecise and does not permit the detection
of redundancies due to variable permutations, internal symmetries, and
so on.  Experiments performed with \textsc{mcmt}, tuned to mimic these
simple instantiation strategies, show much poorer performances.
We believe that the reasons of success of the predicate abstraction
techniques
in~\cite{qaaderflanagan,indexedabs} 
lie in the clever heuristics used to find and refine the set of
predicates for the abstraction.  The current implementation of
\textsc{mcmt} is orthogonal to the predicate abstraction approach; it
features an extensive quantifier instantiation (which is complete for
the theories over the indexes satisfying the Hypothesis (I) from
Theorem~\ref{th:decidability} and is enhanced with completeness
preserving heuristics to avoid useless instances), but it performs
only a primitive form of predicate abstraction, called signature
abstraction (see Section~\ref{subsec:abs-inv}).  Another big
difference is how abstraction is used in \textsc{mcmt}: the set of
backward reachable states is always computed precisely while
abstraction is only exploited for guessing candidate invariants which
are then used to prune the set of backward reachable states.  Since we
represent sets of states by 
formulae, guessing and then using the synthesized invariants turns out
to be extremely easy, thereby helping to solve the tension between
model checking and deductive techniques that has been discussed a lot
in the literature and is still problematic in the tools described
in~\cite{tacas06,cav06}
where sets of states are represented by ad hoc data structures.

Besides the two main approaches mentioned above, there is a third line
of research in the area that applied constraint solving techniques to
the model-checking of infinite state systems.  One of the first
attempts was described in~\cite{bultan-gerber-pugh} and then furtherly
studied in~\cite{composite-mc}.  The idea was to use composite
constraint domains (such as integers and Booleans) to encode the data
and the control flow of, for example, instances of parametrised
systems.  Compared to our framework, the verification methods
in~\cite{bultan-gerber-pugh,composite-mc} are not capable of checking
safety regardless of the number of process in a system but only
supports the verification of its instances.  Indeed, increasing the
number of processes quickly degrades performances.  Babylon is a tool
for the verification of counting abstractions of parametrized systems
(e.g., multithreaded Java programs~\cite{babylon}). It uses a
graph-based data structure to encode disjunctive normal forms of
integer arithmetic constraints.  Computing pre-images requires
computationally expensive normalization, which is not needed for us as
SMT solvers efficiently handle arbitrary integer constraints.  Brain
is a model-checker for transition systems with finitely many integer
variables which uses an incremental version of Hilbert's bases to
efficiently perform entailment or satisfiability checking of integer
constraints (the results reported in~\cite{voronkov-cav02} shows that
it scales very well).  Taking $T_I$ to be an enumerated data-type
theory, the notion of array-based systems considered in this paper
reduce to those used by Brain.  However, many of the systems that can
be modelled as array-based systems cannot be handled by Brain.
Another interesting proposal to uniform verification of parametrized
systems using constraint solving techniques is~\cite{bouajjani-rew},
where a decidability result for $\Sigma^0_2$-formulae is derived
(these are $\exists\forall$-formulae roughly corresponding to those
covered by Theorem~\ref{th:decidability} above, for the special case
in which the models of the theory $T_I$ are all the finite linear
orders).  While the representation of states in~\cite{bouajjani-rew}
is (fully) declarative, transitions are not, as a rewriting semantics
(with constraints) is employed.  Since transitions are not
declaratively handled, the task of proving pre-image closure becomes
non trivial;
in~\cite{bouajjani-rew}, pre-image closure of $\Sigma^0_2$-formulae
under transitions encoded by $\Sigma^0_2$-formulae ensures the
effectiveness of the tests for inductive invariant and bounded
reachability analysis, but not for fix-point checks.  In our approach,
an easy (but orthogonal) pre-image closure result for existential
state descriptions (under certain $\Sigma^0_2$-formulae representing
transitions) gives the effectiveness of fix-point checks, thus
allowing implementation of backward search.

\subsection{Future work}

We envisage to develop the work described here in three directions.
First, we plan to enhance the implementation of the signature
abstraction technique in future releases of \textsc{mcmt}.  The idea
is to find the best trade-off between the advantages of predicate
abstraction and extensive quantifier instantiation.  Another aspect is
the design of methods for the dynamic refinement of the abstraction
along the lines of the counter-example-guided-refinement (CEGAR)
loop~\cite{seminal}.
A complementary approach could be to use techniques for the automatic
discovery of relationships among values of array elements developed in
abstract interpretation (see, e.g.,~\cite{popl05}).  Second, we want
to perform more extensive experiments for different classes of
systems.  For example, we have already started to investigate
parametrised timed automata (introduced in~\cite{AbdullaTCS}) with
\textsc{mcmt} and found encouraging preliminary results~\cite{verify}.
Another class of problems in which successful experiments have been
performed with \textsc{mcmt} concerns the verification of
fault-tolerant distributed
algorithms~\cite{Francesco1,Francesco2}.
The third line of future research consists of in exploring further and
then implementing the verification method for a sub-class of liveness
properties of array-based systems sketched in~\cite{ijcar08}.

\subsection*{Acknowledgments} The authors would like to thanks the
anonymous referees for the useful remarks that helped to improve the
clarity of the paper.  

The second author was partially supported by the FP7-ICT-2007-1
Project no.\ 216471, ``AVANTSSAR: Automated Validation of Trust and
Security of Service-oriented Architectures''
(\texttt{\url{www.avantssar.eu}}) and by the Incoming Team
Project ``SIAM: Automated Security Analysis of Identity and Access
Management Systems,'' funded by Provincia Autonoma di Trento in the
context of the COFUND action of the European Commission (FP7).

\bibliographystyle{plain} 
\bibliography{smtmc} 

\newpage

\appendix


\section{Omitted Proofs}
\label{app:proofs}

\newcounter{eqsave}

\subsection*{Decidability of restricted satisfiability checking}

The following result is a simple generalization of
Theorem~\ref{th:decidability} (of Section~\ref{subsec:brs}).

\begin{thm}
  \label{thm:pi01}
  The $A^E_I$-satisfiability of a sentence of the kind
  \begin{equation}\label{eq:atest1}
    \exists a_1\cdots\exists a_n \; \exists \ui\; \exists \ue\;\forall \uj \; \psi(\ui, \uj, \ue, a_1[\ui], \dots, a_n[\ui], a_1[\uj], \dots, a_n[\uj])
  \end{equation}
  is decidable. Moreover, the following conditions are equivalent:
  \begin{enumerate}[{\rm (i)}]
  \item\label{ite:pi01_1} the sentence \eqref{eq:atest1} is $A^E_I$-satisfiable;
  \item\label{ite:pi01_2} the sentence \eqref{eq:atest1} is satisfiable in a finite index model of $A^E_I$;
  \item\label{ite:pi01_3} the sentence 
    \begin{equation}
      \exists a_1\cdots\exists a_n \; \exists \ui\; \exists \ue \bigwedge_{\sigma}\; \psi(\ui, \uj\sigma, \ue, a_1[\ui], \dots, a_n[\ui], a_1[\uj\sigma], \dots, a_n[\uj\sigma])
    \end{equation} 
    is $A^E_I$-satisfiable (here $\sigma$ ranges onto the
    substitutions mapping the variables $\uj$ into the set of
    representative $\Sigma_I(\ui)$-terms).
\end{enumerate}
\end{thm}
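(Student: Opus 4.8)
The plan is to mimic the proof of Theorem~\ref{th:decidability}, establishing the cycle \ref{ite:pi01_2}$\Rightarrow$\ref{ite:pi01_1}$\Rightarrow$\ref{ite:pi01_3}$\Rightarrow$\ref{ite:pi01_2}. Two of the three steps are essentially free. Since a finite index model is in particular a model of $A^E_I$, the implication \ref{ite:pi01_2}$\Rightarrow$\ref{ite:pi01_1} is immediate. For \ref{ite:pi01_1}$\Rightarrow$\ref{ite:pi01_3}, I would first Skolemize the existential prefix $\exists a_1\cdots\exists a_n\,\exists\ui\,\exists\ue$ away, turning the witnesses into interpreted constants in a model $\cM\models A^E_I$ of \eqref{eq:atest1}; then I observe that each substitution $\sigma$ of representative $\Sigma_I(\ui)$-terms for $\uj$ yields an instance $\psi(\ui,\uj\sigma,\ue,\dots)$ that is a logical consequence of $\forall\uj\,\psi$, hence true in $\cM$. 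Taking the (finite, by local finiteness of $T_I$) conjunction over all such $\sigma$ gives \ref{ite:pi01_3}. Decidability of the $A^E_I$-satisfiability of \eqref{eq:atest1} then reduces, via this equivalence, to the decidability of \ref{ite:pi01_3}, which is a quantifier-free combined $T_I\cup T_E$ problem: after purifying the array-read terms $a_k[\ui],a_k[\uj\sigma]$ into fresh \texttt{ELEM}-variables, one guesses an arrangement of the index terms, propagates the induced equalities to the element reads, and discharges the pure $T_I$- and $T_E$-parts using the decision procedures granted by Hypothesis (II) (the Nelson--Oppen / ``theory connection'' schema).

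The substantive step is \ref{ite:pi01_3}$\Rightarrow$\ref{ite:pi01_2}, where I would construct a finite index model out of a model witnessing \ref{ite:pi01_3}. So let $\cM\models A^E_I$, equipped with Skolem constants for $a_1,\dots,a_n,\ui,\ue$, satisfy the instantiated sentence \ref{ite:pi01_3}. Because $T_I$ is locally finite, there are finitely many representative $\Sigma_I(\ui)$-terms $t_1,\dots,t_k$, and the $\Sigma_I$-substructure of $\cM_I$ generated by the interpretations of $\ui$ has support exactly $\{t_1^{\cM}(\ui),\dots,t_k^{\cM}(\ui)\}$, hence is finite; call it $\cN_I$. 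By Hypothesis (I) (closure under substructures), $\cN_I$ is again a model of $T_I$. I then define a finite index model $\cN$ of $A^E_I$ by taking $\texttt{INDEX}^{\cN}:=\cN_I$, $\texttt{ELEM}^{\cN}:=\cM_E$ (the element sort need not be shrunk), $\texttt{ARRAY}^{\cN}$ the set of all functions from $\texttt{INDEX}^{\cN}$ to $\texttt{ELEM}^{\cN}$, and interpreting each $a_k$ as the restriction of $a_k^{\cM}$ to $\texttt{INDEX}^{\cN}$; the constants $\ui,\ue$ keep their values from $\cM$ (the $\ui$ lie in $\cN_I$ by construction).

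It remains to verify $\cN\models\exists\ui\,\exists\ue\,\forall\uj\,\psi(\dots)$, and here lies the crux. The point is that \emph{every} element of $\texttt{INDEX}^{\cN}$ equals $t_i^{\cM}(\ui)=t_i^{\cN}(\ui)$ for some representative term $t_i$, so any assignment of $\uj$ to a tuple from $\texttt{INDEX}^{\cN}$ is realised, component-wise, by some substitution $\sigma$ of representative terms; the family of $\sigma$'s therefore \emph{exhausts} the range of the universal quantifier in the finite model $\cN$. For a fixed such $\sigma$, I would check that $\psi$ evaluated in $\cN$ under the assignment $\uj\mapsto(\uj\sigma)^{\cN}$ has the same truth value as the instance $\psi(\ui,\uj\sigma,\ue,\dots)$ evaluated in $\cM$: the $\Sigma_I$-atoms agree because $\cN_I$ is a substructure of $\cM_I$ and quantifier-free formulae are preserved and reflected across substructures; the $\Sigma_E$-atoms agree because the relevant array reads satisfy $a_k^{\cN}[(\uj\sigma)^{\cN}]=a_k^{\cM}[(\uj\sigma)^{\cM}]$ and $a_k^{\cN}[\ui]=a_k^{\cM}[\ui]$ by definition of the restriction, so the element arguments are literally the same members of $\texttt{ELEM}^{\cN}=\cM_E$. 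Since each such instance holds in $\cM$ by \ref{ite:pi01_3}, the universal sentence holds in $\cN$, giving \ref{ite:pi01_2}.

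I expect the main obstacle to be precisely the bookkeeping in this last verification: making sure that generation of $\cN_I$ by $\ui$ really names every index by a representative term (so that the finite instantiation is complete), and that local finiteness together with closure under substructures are both genuinely used---the former to keep $\cN_I$ finite and the index instantiation finite, the latter to keep $\cN_I$ a model of $T_I$. These two requirements are exactly Hypothesis (I) of Theorem~\ref{th:decidability}, and the argument would break (as noted after that theorem) if either were dropped.
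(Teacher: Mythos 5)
Your proof is correct and follows essentially the same route as the paper's: the substantive step (iii)$\Rightarrow$(ii) is exactly the paper's argument---restrict the index sort to the $\Sigma_I$-substructure generated by $\ui$ (finite by local finiteness, still a $T_I$-model by closure under substructures), restrict the arrays, and observe that the representative terms name every index so the instances exhaust the universal quantifier---and the decidability claim is discharged by the same purification, arrangement-guessing, and equality-propagation (theory connection) scheme. The only detail the paper adds is a footnote padding $\ui$ to be non-empty, so that the generated substructure cannot have an empty index domain.
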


\begin{proof}
  In order to avoid difficulties with the notation, we consider the
  case where $n=1$ only (the reader may check that there is no loss of
  generality in that).\footnote{Since existentially quantifying over
    variables that do not occur in the formula does not affect
    satisfiability, we can also assume that the tuple $\ui$ is not
    empty (this observation is needed if we want to prevent the
    structure $\cN$ defined below from having empty index domain).} We
  first show that the $A^E_I$-satisfiability of
  \begin{equation}\label{eq:stest1}
    \exists a\; \exists \ui\; \exists \ue\;\forall \uj \; \psi(\ui, \uj,\ue, a[\ui],a[\uj])
  \end{equation}
  is equivalent to the $A^E_I$-satisfiability of
  \begin{equation}\label{eq:stest2}
    \exists a\; \exists \ui\;\exists \ue\; \bigwedge_{\sigma}\; \psi(\ui, \uj\sigma, \ue, a[\ui],a[\uj\sigma])
  \end{equation}
  where $\sigma$ ranges onto the substitutions mapping the variables
  $\uj$ into the set of representative $\Sigma_I(\ui)$-terms.

  That $A^E_I$-satisfiability of \eqref{eq:stest1} implies
  $A^E_I$-satisfiability of \eqref{eq:stest2} follows from trivial
  logical manipulations, so let's assume $A^E_I$-satisfiability of
  \eqref{eq:stest2} and show $A^E_I$-satisfiability of
  \eqref{eq:stest1}. Let $\cM$ be a model of \eqref{eq:stest2}; we can
  assign elements in this model to the variables $a, \ui, \ue$ in such
  a way that (under such an assignment ${\tt a}$) we have $\cM, {\tt
    a}\models \bigwedge_{\sigma}\; \psi(\ui, \uj\sigma, \ue,
  a[\ui],a[\uj\sigma])$. Consider the model $\cN$ which is obtained
  from $\cM$ by restricting the interpretation of the sort {\tt INDEX}
  (and of all function and relation symbols for indexes) to the
  $\Sigma_I$-substructure generated by the elements assigned by ${\tt
    a}$ to the $\ui$: since models of $T_I$ are closed under
  substructures, this substructure is a model of $T_I$ and
  consequently $\cN$ is still a model of $A^E_I$. Now let $s$ be the
  restriction of ${\tt a}(a)$ to the new smaller index domain and let
  $\tilde {\tt a}$ be the assignment differing from ${\tt a}$ only for
  assigning $s$ to $a$ (instead of ${\tt a}(a)$); since $\psi$ is
  quantifier free and since, varying $\sigma$, the elements assigned
  to the terms $\uj\sigma$ covers all possible $\uj$-tuples of
  elements in the interpretation of the sort {\tt INDEX} in $\cN$, we
  have $\cN, \tilde {\tt a}\models \forall \uj \psi(\ui, \uj, a[\ui],
  a[\uj])$. This shows that $\cN \models \exists a\; \exists \ui\;
  \forall \uj \; \psi(\ui, \uj, a[\ui],a[\uj])$, i.e that
  \eqref{eq:stest1} holds. Notice that $\cN$ is a finite index
  model,\footnote{This is because $T_I$ is locally finite and the
    $\Sigma_I$ reduct of $\cN$ is a structure which is generated by
    finitely many elements.} hence we proved also the equivalence
  between \eqref{ite:pi01_1} and \eqref{ite:pi01_2}.
 
  We now need to decide $A^E_I$-satisfiability of sentences
  \eqref{eq:stest2}. Let $\ut$ be the representative
  $\Sigma(\ui)$-terms and let us put them in bijective correspondence
  with fresh variables $\ul$ of sort {\tt INDEX}; let
  $\psi_{\sigma}(\ui, \ul, \ue, a[\ui], a[\ul])$ be the formula
  obtained by replacing in $\psi(\ui, \uj\sigma, \ue,
  a[\ui],a[\uj\sigma])$ the $\Sigma(\ui)$-terms $\uj\sigma$ by the
  corresponding $\ul$. We first rewrite \eqref{eq:stest2} as
  \begin{equation}\label{eq:stest3}
    \exists a\; \exists \ui\;\exists \ue\; \exists \ul\;(\ul = \ut \wedge\bigwedge_{\sigma} \psi_{\sigma}(\ui, \ul, \ue, a[\ui],a[\ul]))
  \end{equation}
  (here $\ul = \ut$ means component-wise equality, expressed as a
  conjunction).

  Notice that $T_I$ and $T_E$ are disjoint (they do not have even any
  sort in common), which means that $ \ul = \ut
  \wedge\bigwedge_{\sigma} \psi_{\sigma}(\ui, \ul, \ue,
  a[\ui],a[\ul])$ is a Boolean combination of $\Sigma_I$-atoms and of
  $\Sigma_E$-atoms (in the latter kind of atoms, the variables for
  elements are replaced by the terms $a[\ui], a[\ul]$). This means
  that our decision problem can be further rephrased in terms of the
  problem of deciding for $A^E_I$-satisfiability formulae like
  \begin{equation}\label{eq:conn}
    \psi_I(\uj) \wedge a[\uj]=\ud \wedge \psi_E(\ud,\ue)
  \end{equation}
  where $\psi_I(\uj)$ is a conjunction of $\Sigma_I$-literals and
  $\psi_E(\ud,\ue)$ is a conjunction of $\Sigma_E$-literals.
 
  Since we are looking for a model of $T_I$, a model of $T_E$ and for
  a function connecting their domains (the function interpreting the
  variable $a$), this is a satisfiability problem for a theory
  connection (in the sense of \cite{BaGh}):\footnote{Strictly
    speaking, one cannot directly apply the results from~\cite{BaGh},
    because in this paper we have adopted a `semantic' notion of
    theory characterized by a class $\mathcal{C}$ of models.  In other
    words, the class $\mathcal{C}$ of models is not required to be
    elementary.} since the signatures of $T_I, T_E$ are disjoint, the
  problem is decided by propagating equalities.\footnote{This is
    different from the standard Nelson-Oppen combination, where also
    inequalities must be propagated.}  Hence, to
  decide~\eqref{eq:conn}, it is sufficient to apply the following
  steps:
  \begin{enumerate}[$-$]
  \item guess an equivalence relation $\Pi$ on the index variables
    $\uj$ (let's assume $\uj=j_1, \dots. j_n$);
  \item check $\psi_I(\uj) \cup \lbrace j_k=j_l \mid (j_k, j_l)\in
    \Pi\rbrace\cup \lbrace j_k\not= j_l \mid (j_k, j_l)\not \in
    \Pi\rbrace$ for $T_I$-satisfiability;
  \item check $\psi_E(\ud,\ue) \cup \lbrace d_k=d_l \mid (j_k,
    j_l)\in \Pi\rbrace$ for $T_E$-satisfiability;
  \item return `unsatisfiable' iff failure is reported in the
    previous two steps for all possible guesses.
  \end{enumerate}
  Soundness and completeness of the above procedure are easy.
\end{proof}

\subsection*{Undecidability of backward reachability}

Here, we give the \emph{proof of Theorem~\ref{th:undecidability}} (of
Section~\ref{subsec:undec}).
\begin{proof}
 A \emph{two registers Minsky machine} is a finite set $\bf P$ of
  instructions (also called a program) for manipulating configurations
  seen as triples $(q,m,n)$ where $m,n$ are natural numbers
  representing the registers content and $q$ represents the machine
  location state ($q$ varies on a fixed finite set $Q$).  There are
  four possible kinds of instructions, inducing transformations on the
  configurations as explained in Table~\ref{tab:Minsky}.
\begin{table}[ht]
  \begin{center}
  \begin{tabular}{|c|c|c|}
    \hline
    N.  & Instruction         & Transformation\\ \hline
    I   & $q\to(r,1,0)$       & $(q, m,n) \to (r, m+1, n)$ \\ \hline
    II  & $q\to(r,0,1)$       & $(q, m,n) \to (r, m, n+1)$ \\ \hline
    III & $q\to(r,-1,0)[r']$  & {\tt if} $m\neq 0$ {\tt then} $(q, m,n) \to (r, m-1, n)$ \\
        &                     & {\tt else} $(q, m,n) \to (r', m, n)$ \hspace{21.5mm} \\ \hline
    IV  & $q\to(r,0,-1)[r']$  & {\tt if} $n\neq 0$ {\tt then} $(q, m,n) \to (r, m, n-1)$ \\
        &                     & {\tt else} $(q, m,n) \to (r', m, n)$ \hspace{20.5mm} \\
        \hline
   \end{tabular}
   \end{center}
   \caption{\label{tab:Minsky} Instructions and related transformations for (two-registers) Minsky Machines}
\end{table}
A $\bf P$-transformation is a transformation induced by an instruction
of $\bf P$ on a certain configuration.  For a Minsky machine $\bf P$,
we write $(q, m, n) \rightarrow^{\star}_{\bf P} (q', m', n')$ to say
that it is possible to reach configuration $(q', m', n')$ from $(q, m,
n)$ by applying finitely many $\bf P$-transformations.  Given a Minsky
machine $\bf P$ and an initial configuration $(q_0, m_0, n_0)$, the
problem of checking whether a configuration $(q', m', n')$ is
reachable from $(q_0, m_0, n_0)$ (i.e., if $(q_0, m_0,
n_0)\rightarrow^{\star}_{\bf P} (q', m', n')$ holds or not) is called
\emph{the (second) reachability (configuration) problem}.  It is
well-known\footnote{For details and further references, see for
  instance~\cite{CZ}. 
} that there exists a (two-register)
Minsky machine $\bf P$ and a configuration $(q_0, m_0, n_0)$ such that
the second reachability configuration problem is undecidable. To
simplify the matter, we assume that $m_0=0$ and $n_0=0$: there is no
loss of generality in that, because one can add to the program $\bf P$
more states and instructions (precisely $m_0+n_0$ further states and
instructions of type I-II) for the initialization to $m_0, n_0$.

We build a locally finite array-based system $\cSi_{\bf P}=(a, I_{\bf
  P}, \tau_{\bf P})$ and an $\exists^I$-formula $U_{q,m,n}$ such that
$\cSi$ is unsafe w.r.t. $U_{q,m,n}$ iff the machine $\bf P$ reaches
the configuration $(q, m, n)$.  We take as $\Sigma_I$ the signature
having two constants $o, o'$ and a binary relation $S$; models of
$T_I$ are the $\Sigma_I$-structures satisfying the axioms
\begin{eqnarray*}
  &\forall i \,\neg S(i,o),~~~~~~~~~~~~~ &
  \forall i\,\forall j_1\,\forall j_2\,(S(i,j_1)\wedge S(i,j_2)\to j_1=j_2),
  \\
  &S(o,o'),~~~~~~~~~~~~~~~~~&
  \forall i_1\,\forall i_2\,\forall j\,(S(i_1,j)\wedge S(i_2,j)\to i_1=i_2),
\end{eqnarray*}
saying that $S$ is a an injective partial function having $o$ in the
domain but not in the range.  As $\Sigma_E$ we take the enumerated
datatype theory relative to the finite set $Q\times\{0,1\}\times\{0,
1\}$.  Notice that $T_I, T_E$ are both locally finite; in addition,
$T_I$ is closed under substructures and $T_E$ has quantifier
elimination.

The idea is that of encoding a configuration $(q,m,n)$ as any
configuration $s$ (in the formal sense of Subsection~\ref{subsec:configurations}) satisfying the following
conditions:
\begin{enumerate}[(i)]
 \item the support of $s_I$ contains a substructure of the kind
   $$
   o=i_0\to_S o'=i_1\to_S i_2\cdots \to_S i_K
   $$
   for some $K> m,n$ (we write $i \to_S j$ to means that $(i,j)$ is in
   the interpretation of the relational symbol $S$ in $s_I$).
 \item for all $i$ in the support of $s_I$, if
   $s(i)=\langle r, u,v\rangle$ then (a) $r=q$; (b) $u=1$ iff $i=i_k$
   for  a $k\leq m$; (c) $v=1$ iff $i=i_k$ for a $k\leq n$.
\end{enumerate}
In case the above conditions (i)-(ii) hold, we say that \emph{$s$
  bi-simulates $(q, m,n)$}.

The initial formula $I$ is $$\forall i ((i\not = o\wedge a[i]=\langle
q_0, 0, 0\rangle) \vee (i=o \wedge a[i]=\langle q_0, 1, 1\rangle)).$$
Clearly for every model $\cM$ and for every $s\in
\mathtt{ARRAY}^{\cM}$, the following happens:
\begin{description}
\item[{\rm ($\alpha$)}] $\cM\models I(s)$ iff $s$ bi-simulates the
  initial machine configuration $(q_0,0,0)$.
\end{description}

We write the transition $\tau$ in such a way that for every model
$\cM$ and for every $s,s'\in \mathtt{ARRAY}^{\cM}$, the following
happens:
\begin{description}
\item[{\rm ($\beta$)}] if $s$ bi-simulates $(q,m,n)$, then $\cM\models
  \tau(s,s')$ iff there is $(q', m', n')$ such that $s'$ bi-simulates
  $(q', m', n')$ and $(q,m,n)\to_{\bf P} (q', m', n')$.
\end{description}
This goal is obtained by taking $\tau$ to be a disjunction of
$T$-formulae corresponding to the instructions for $\bf P$. The
$T$-formula corresponding to the first kind of instructions
$q\to(r,1,0)$ is the following:\footnote{For simplicity,
we assume that the
  signature $\Sigma_E$ is 4-sorted and endowed with the three
  projection functions $pr_1, pr_2, pr_3$ mapping a data $\langle r, u,v\rangle$ to $r, u,v$, respectively:  
there is no need of this assumption,
  but without it specifications become cumbersome.}
\begin{eqnarray*}
  \exists i_1\,\exists i_2\, \exists i_3\,
  & (S(i_1, i_2) \wedge S(i_2,i_3)\wedge pr_1(a[i_1])= q\wedge  \, pr_2( a[i_1])=1 \wedge
   \\ &
  \wedge~ pr_2(a[i_2])= 0 \wedge
  pr_2(a[i_3])=0\wedge a'=\lambda j\, F)
\end{eqnarray*}
where
\begin{eqnarray*}
  F & := & \mathtt{if} ~(j=i_2)~ \mathtt{then}~ \langle r, 1, pr_3(a[j])\rangle
  \\ 
  && \mathtt{else} ~\langle r, pr_2(a[j]), pr_3(a[j])\rangle
\end{eqnarray*}
Instructions $q\to(r,-1,0)[r']$ of the kind (III) are simulated by the
following $T$-formula
\begin{eqnarray*}
  \exists i_1\,\exists i_2\, (S(i_1, i_2) \wedge pr_1(a[i_1])= q\wedge
  pr_2( a[i_1])=1 \wedge
  pr_2(a[i_2])= 0)
\end{eqnarray*}
where
\begin{eqnarray*}
  F & := & \mathtt{if}~ (i_1\not=o \wedge j=i_1) ~\mathtt{then}~ \langle r, 0, pr_3(a[j])\rangle \\
  &&\mathtt{else~ if}~ (i_1\not=o \wedge j\not=i_1) ~\mathtt{then}~\langle r, pr_2(a[j]), pr_3(a[j])\rangle \\
  &&\mathtt{else}~ \langle r', pr_2(a[j]), pr_3(a[j])\rangle
\end{eqnarray*}
$T$-formulae for instructions of kind (II) and (IV) are defined
accordingly.

We write the unsafe states formula $U_{q,m,n}$ in such a way that for
every model $\cM$ and for every $s\in \mathtt{ARRAY}^{\cM}$, the
following happens:
\begin{description}
\item[{\rm ($\gamma$)}] if $\cM\models U_{q,m,n}(s)$ and $s$ bi-simulates some machine configuration,
  then it bi-simulates $(q,m,n)$.
\end{description}
This goal is achieved by taking $U_{q,m,n}$ to be the following
formula (suppose $m\geq n$, the case $n\leq m$ is symmetric):
\begin{eqnarray*}
  \exists i_0\cdots\exists i_{m+1}\,
  & (i_0=o \wedge \bigwedge_{0\leq k\leq m} S(i_k, i_{k+1}) \wedge
  \bigwedge_{0\leq k\leq n} a[i_k]=\langle q, 1,1\rangle \wedge \\
  &\wedge
  \bigwedge_{n< k\leq m} a[i_k]=\langle q, 1,0\rangle \wedge
  a[i_{m+1}]=\langle q, 0,0\rangle).~~~~~~~~~~~
\end{eqnarray*}
From ($\alpha$)-($\beta$)-($\gamma$) above it is clear that $\bf P$
reaches the configuration $(q,m,n)$ iff $\cSi$ is unsafe
w.r.t. $U_{q,m,n}$, so
that the latter is not decidable (for the left to right implication, take a
  run in a model with a large enough $S$-chain starting with $o$).
\end{proof}

\subsection*{Undecidability of unrestricted satisfiability checking}

We show that \emph{Hypothesis (I) cannot be removed from the statement
  of Theorem~\ref{th:decidability}} (and of Theorem~\ref{thm:pi01}).
We use 
a reduction to the reachability problem for Minsky machines as we have
done for the proof of the undecidability of the safety problem
(Theorem~\ref{th:undecidability}); the argument is similar to one used
in~\cite{arrays}.

Let $T_I$ be the theory 
having as a class of models the
natural numbers in the signature with just
zero, the successor function, and $\leq$.  Notice that this is not
locally finite.  Let $T_E$ be the theory having $Q\times \mathbb
N\times \mathbb N$ as a unique structure. Here $Q$ is like in the
previous subsection of this Appendix.  In the following, we freely use
projections, sums, numerals, subtraction, constants for elements of
$Q$, etc.  Formally, all these 
operations 
can be defined in many ways and the precise way is not
 relevant for the argument below.  In other words, we can avoid
to define precisely the signature $\Sigma_E$.  
This sloppiness is justified because we
must use a 
$T_I$ not satisfying the local finiteness requirement from Hypothesis (I) of
Theorem~\ref{th:decidability}, whereas we can use an arbitrary $T_E$.  
Let $\tau(a[j_1], a[j_2])$ abbreviate
the disjunction of the following formulae describing the
transformations from Table~\ref{tab:Minsky}:
\begin{eqnarray*}
&
pr_2(a[j_1])=q \wedge a[j_2] = \langle r, pr_2(a[j_1])+1, pr_3(a[j_1])\rangle
\\
&
pr_2(a[j_1])=q \wedge a[j_2] = \langle r, pr_2(a[j_1]), pr_3(a[j_1])+1\rangle
\\
&
pr_2(a[j_1])=q \wedge pr_2(a[j_1])>0 \wedge a[j_2] = \langle r, pr_2(a[j_1])-1, pr_3(a[j_1])\rangle
\\
&
pr_2(a[j_1])=q \wedge pr_2(a[j_1])=0 \wedge a[j_2] = \langle r', pr_2(a[j_1]), pr_3(a[j_1])\rangle
\\
&
pr_2(a[j_1])=q \wedge pr_3(a[j_1])>0 \wedge a[j_2] = \langle r, pr_2(a[j_1]), pr_3(a[j_1])-1\rangle
\\
&
pr_2(a[j_1])=q \wedge pr_3(a[j_1])=0 \wedge a[j_2] = \langle r', pr_2(a[j_1]), pr_3(a[j_1])\rangle
\end{eqnarray*}
Now consider the satisfiability of the following
$\exists^{A,I}\forall^I$-formula:
\begin{eqnarray*}
&
\exists a\;\exists i\; \exists j\; 
(i=0 \wedge a[i]=\langle q_0, 0, 0\rangle \wedge a[j]=\langle q, m, n\rangle \wedge 
\\
& \hskip 1.5cm \wedge
\forall j_1\; \forall j_2\; (j_1<j \wedge j_2=j_1+1 \to \tau(a[j_1],a[j_2])))
\end{eqnarray*}
Clearly, this is satisfiable iff the configuration $\langle q, m,
n\rangle$ is reachable: the array $a$ in fact stores the whole
computation leading to $\langle q, m, n\rangle$. 
Thus satisfiability of $\exists^{A,I}\forall^I$-formulae can be
undecidable if $T_I$ is not locally finite, even if the $SMT(T_I)$ and
the $SMT(T_E)$ problems are decidable (and even if $T_I$ is closed
under substructures).

A final observation is crucial.  If we keep local finiteness and drop
closure under substructures in the statement of Hypothesis (I) from
Theorem~\ref{th:decidability}, then the above counterexample still
applies!  In fact, the successor function for indexes is used only in
$j_2=j_1+1$ occurring in the formula above: we can replace the
application of the successor function with a binary relation $S(j_1,
j_2)$ so as to recover local finiteness.  However, closure under
substructures is dropped as the structure of natural numbers
has proper substructures if successor is a relation and not a function
and these substructures must be excluded for the above argument to
work (from satisfiability in such substructures a full computation
cannot be recovered). Thus, we can conclude that \emph{the two
  conditions of Hypothesis (I) are strictly connected and both
  needed}.

\end{document}